\newtheorem{Theorem}{Theorem}
\newtheorem{Definition}{Definition}
\newtheorem{Lemma}{Lemma}
\newtheorem{Corollary}{Corollary}
\newcommand{\ith}[1]{${#1}^{\mbox{\footnotesize th}}$}
\begin{document}

\title{\textbf{Multidimensional Divide-and-Conquer\\and Weighted Digital Sums}
\thanks{HKUST authors' work was partially supported by HK RGC CRG 613105.}}

\author{
\normalsize Y. K. CHEUNG
\thanks{Computer Science Dept, Courant Institute of Mathematical Sciences, New York Universty.
Work done while at Dept of Mathematics, Hong Kong UST. \emph{ykcheung@cims.nyu.edu}}
\and 
\normalsize Philippe FLAJOLET
\thanks{INRIA Rocquencourt, F-78153, Le Chesnay (France). \emph{Philippe.Flajolet@inria.fr}}
\and
\normalsize Mordecai GOLIN
\thanks{Dept of Computer Science \& Engineering, Hong Kong UST. \emph{golin@cs.ust.hk}}
\and
\normalsize C. Y. James LEE
\thanks{Work done while at Dept of Mathematics, Hong Kong UST. \emph{mateddy@gmail.com}}}

\maketitle

\begin{abstract}\setlength{\parskip}{.1in}\setlength{\parindent}{0in}

\noindent This paper studies three types of functions arising separately in the analysis of algorithms that we analyze exactly using similar Mellin transform techniques.

The first is the solution to a Multidimensional Divide-and-Conquer (MDC) recurrence that arises when solving problems on points in $d$-dimensional space.

The second involves weighted digital sums. Write $n$ in its binary representation
$n=(b_i b_{i-1}\cdots b_1 b_0)_2$
and set
$S_M(n) = \sum_{t=0}^i t^{\overline{M}} b_t 2^t$.
We analyze the average
$TS_M(n) = \frac{1}{n}\sum_{j<n} S_M(j)$.

The third is a different variant of weighted digital sums. Write $n$ as
$n=2^{i_1} + 2^{i_2} + \cdots + 2^{i_k}$
with $i_1 > i_2 > \cdots > i_k\geq 0$
and set $W_M(n) = \sum_{t=1}^k t^M 2^{i_t}$.
We analyze the average
$TW_M(n) = \frac{1}{n}\sum_{j<n} W_M(j)$.

We show that both the MDC functions and $TS_M(n)$ (with $d=M+1$) have solutions of the form
$$\lambda_d n \lg^{d-1}n + \sum_{m=0}^{d-2}\left(n\lg^m n\right)A_{d,m}(\lg n) + c_d,$$
where $\lambda_d,c_d$ are constants and $A_{d,m}(u)$'s are periodic functions with period one (given by absolutely convergent Fourier series). 
We also show that $TW_M(n)$ has a solution of the form
$$n G_M(\lg n) + d_M \lg^M n + \sum_{d=0}^{M-1}\left(\lg^d n\right)G_{M,d}(\lg n),$$
where $d_M$ is a constant, $G_M(u)$ and $G_{M,d}(u)$'s are again periodic functions with period one (given by absolutely convergent Fourier series).
\end{abstract}

\section{Introduction}\label{sect:intro}

In this paper we use Mellin Transform techniques to analyze three types of functions arising separately in the analysis of algorithms: Multidimensional Divide-and-Conquer and two different types of weighted digital sums.

\par\noindent\underline{(A) Multidimensional Divide-and-Conquer:}\\
The Multidimensional Divide-and-Conquer (MDC) recurrence first appeared in the description of the running time of algorithms for finding maximal points in multidimensional space.
Previous analyses by Monier \cite{MONIE-1980} gave only first order asymptotic,
showing that the running time for the $d$-dimensional version of the problem is
($\lg n \equiv \log_2 n$)
$$T_d(n) = \lambda_d n \lg^{d-1} n + o(n \lg^{d-1} n)$$
for some constant $\lambda_d$.
We will extend the Mellin Transform techniques for solving divide-and-conquer problems originally developed in \cite{FlGo-1994}
(see \cite{GrHw-2005} for a review of more recent innovations)
to derive \emph{exact} solutions, which will be in the form of
\begin{equation}\label{eq:TSMn-closed-form}
T_d(n) = \lambda_d n \lg^{d-1}n + \sum_{m=0}^{d-2}\left(n \lg^m n\right)A_{d,m}(\lg n) + c_d,
\end{equation}
where $\lambda_d,c_d$ are constants and $A_{d,m}(u)$'s are periodic functions with period one given by absolutely convergent Fourier series.

\par\noindent\underline{(B) Weighted Digital Sums of the First Type:}\\
The second type of function we study is a generalization of weighted digital sums (WDS).
Start by representing  integer $n$ in binary as $n=(b_i b_{i-1}\cdots b_1 b_0)_2$. Define
$$S_1(n) := \sum_{t=0}^{i} t b_t 2^t,$$
i.e., \emph{weight} the \ith{t} digit by its location in the representation.
One can also view this as $2$ times
$\sum_{t=0}^{i} t b_t 2^{t-1},$,
which is analogous to the \emph{derivative} of the binary representation of $n$.
This sum arises naturally in the analysis of binomial queues where Brown \cite{BROWN-1978} gave upper and lower bounds
$$\lceil n\lg n-2n\rceil\leq S_1(n)\leq\lfloor n\lg n\rfloor.$$

Generalizing $S_1(n)$ allows the ``weights'' to be any polynomial of $t$.
Set $S_0(n) := n$ and $\forall M\geq 1$, define
\begin{equation}\label{eq:def-SMn}
S_M(n) := \sum_{t=0}^i t^{\overline{M}} b_t 2^t,
\end{equation}
where $t^{\overline{M}} := t(t+1)(t+2)\cdots(t+M-1)$ is the \ith{M} rising factorial of $t$.
The function $S_M(n)$ is not smooth (see Figure \ref{fig:SMn-second-asymptotic}).
We will instead analyze its \emph{average}
\begin{equation}\label{eq:def-TSMn}
TS_M(n) := \frac{1}{n}\sum_{j<n} S_M(j).
\end{equation}
We will show that, surprisingly, $TS_M(n)$ has an exact formula,
which is in exactly the same form (\ref{eq:TSMn-closed-form}) derived above for the MDC problem
(with $d=M+1$ and different constants).

\begin{figure}[t]
\vspace*{-.1in}
\centering%
\scalebox{0.33}{\includegraphics{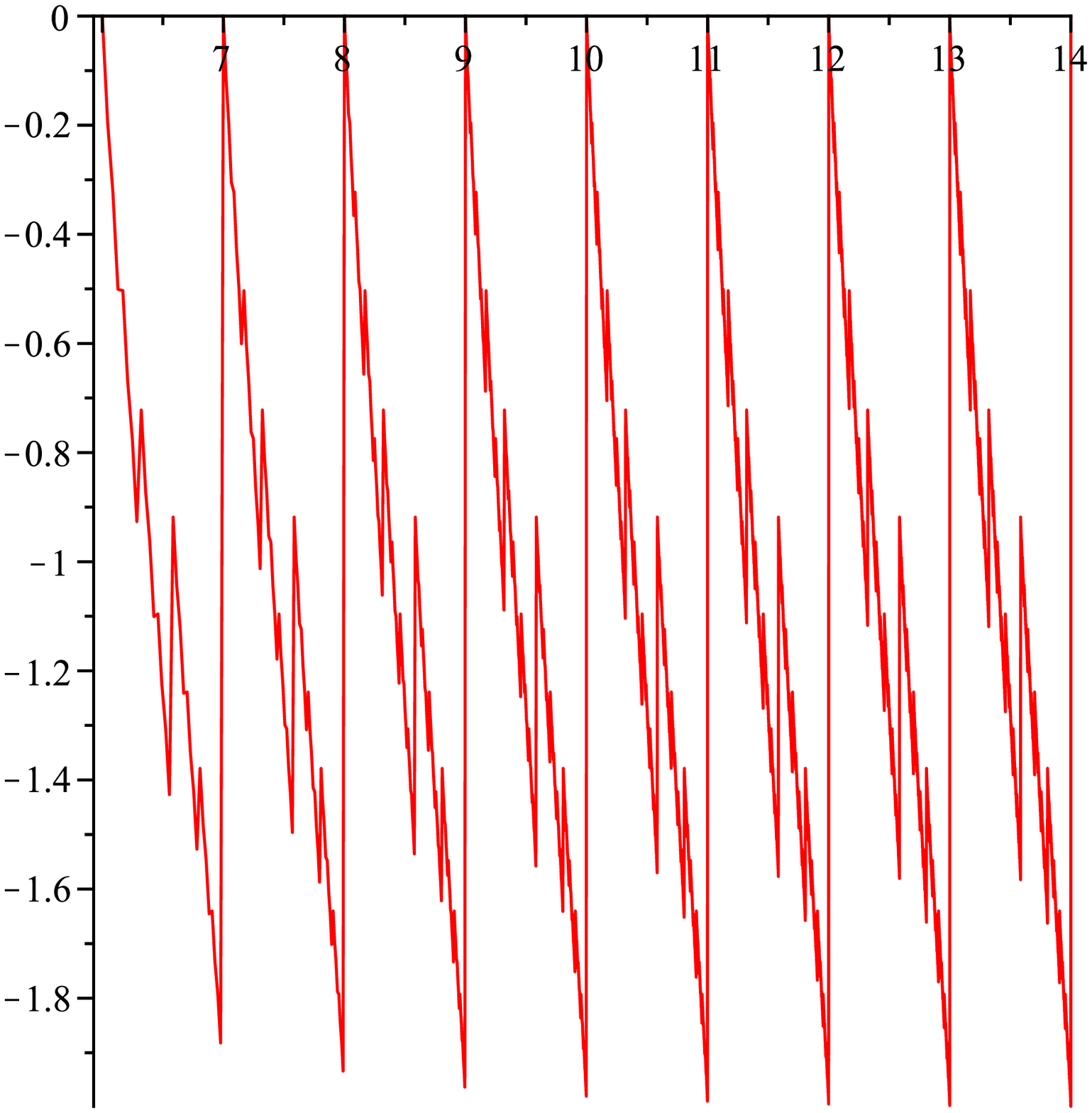}}
\scalebox{0.33}{\includegraphics{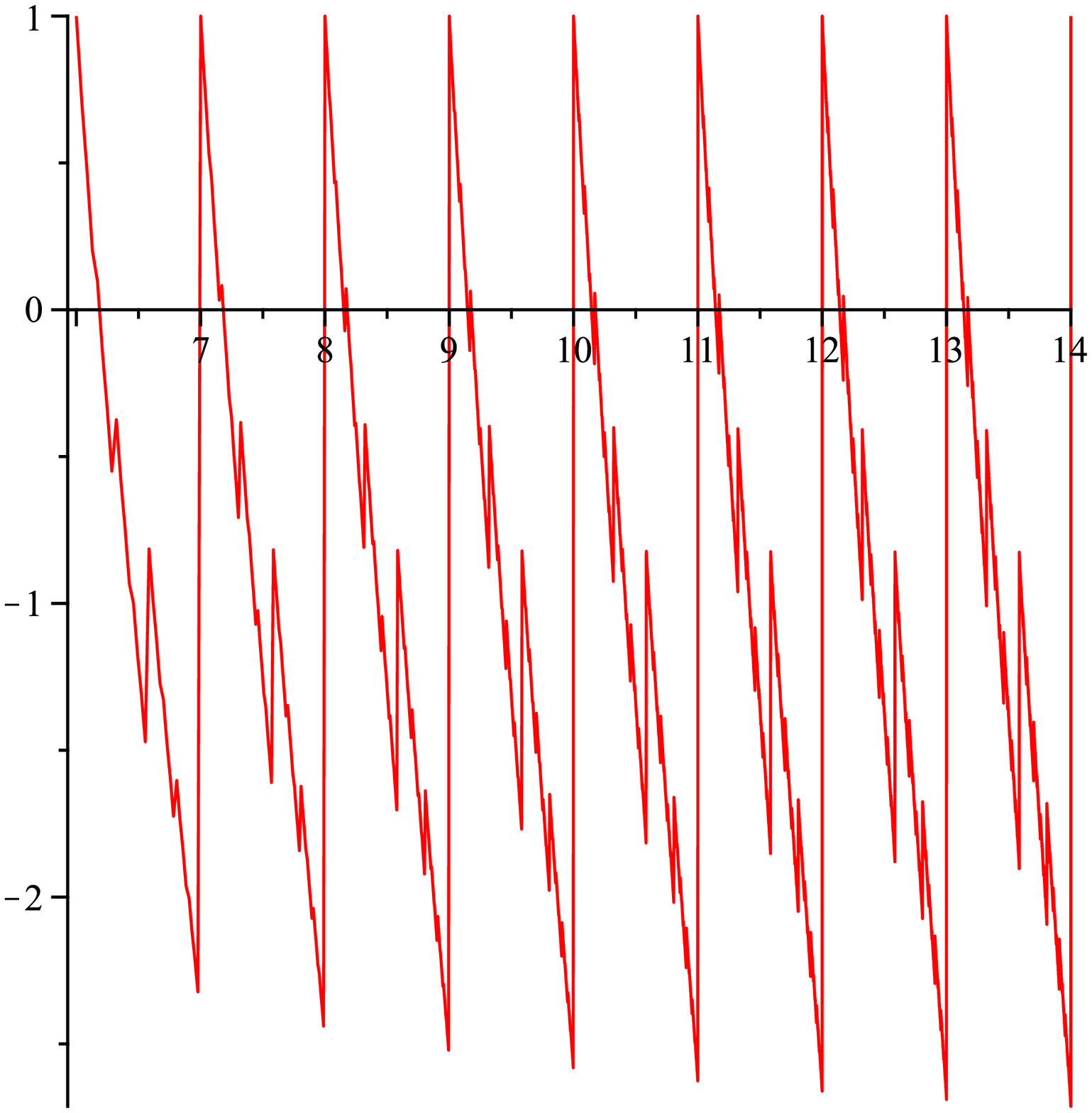}}
\caption{The graphs of $(S_M(n)-n\lg^M n)/(n\lg^{M-1}n)$ for $M=1$ (left) and $M=2$ (right) plotted against $\lg n$.
Although the functions appear periodic they possess ``large'' fluctuations.
These make direct analysis of $S_M(n)$ difficult, suggesting the analysis of its smoothed average instead.}
\label{fig:SMn-second-asymptotic}
\end{figure}

\par\noindent\underline{(C) Weighted Digital Sums of the Second Type:}\\
The third type of function we study is another WDS variant.
Its simplest form arises when analyzing the worst-case running time of bottom-up mergesort.
Assume\footnote{The actual worst-case time is $n_1+n_2-1$. But, any mergesort uses
exactly $n-1$ merges, so the running time derived with cost $n_1+n_2$ is exactly $n-1$
more than the real worst-case running time.}
that the worst-case running time to merge two sorted lists of sizes $n_1$ and $n_2$
into one sorted list is $n_1 + n_2$.

Define $C_w(n)$ to be the worst-case running time of bottom-up mergesort with $n$ elements.
Bottom-up mergesort essentially splits a list of $n$ items into two sublists,
sorts each recursively, and then merges them back together.
If $n$ is a power of $2$, then it splits the list into two even parts.
If $n$ is not a power of $2$ though,  i.e., $n=2^k+j$ with $1\leq j\leq 2^k-1$, then the
algorithm splits the items into one list of size $2^k$, one list of size $j$.
Thus it is known that $C_w(n)$ satisfies the recurrences:
\begin{eqnarray*}
C_w(2^k) & =&  k2^k.\\
C_w(2^k+j) & = & C_w(2^k) + C_w(j) + (2^k+j), \quad\mbox{ for $1\leq j\leq 2^k-1$}.
\end{eqnarray*}

Panny and Prodinger \cite{PaPr-1995} derived an exact solution for $C_w(n)$ containing a term $G(\log n)$,
where $G(x)$, defined by a Fourier series, is periodic with period one.
However, the Fourier series is only Ces\`{a}ro summable.
Furthermore, $G(x)$ is discontinuous at all dyadic points
(points of the form $x=n/2^m$, where $n$ is integer, $m$ is non-negative integer),
which are exactly the points of interest.

In this paper we will decompose $C_w(n)$ into two different types of WDS
and analyze the (smoothed) average of each part.
We will then generalize the functions found and analyze the generalizations.

Starting with the binary representation of $n$, ignore the $0$ bits
and write $n$ as the sum of descending powers of $2$, i.e.
$n=2^{i_1} + 2^{i_2} + \cdots + 2^{i_k}$ with $i_1 > i_2 > \cdots > i_k\geq 0$.
Iterating the above recurrence for $C_w(n)$ gives
$$C_w(n) = \sum_{t=1}^k i_t 2^{i_t} + \sum_{t=1}^k t 2^{i_t} - 2^{i_k} = S_1(n) + \sum_{t=1}^k t 2^{i_t} - 2^{i_k},$$
where $S_1(n)$ is the WDS of the first type defined previously.
This motivates the introduction of another variant of WDS:
\begin{equation}\label{eq:def-Wn}
W_1(n) := \sum_{t=1}^k t 2^{i_t}.
\end{equation}
As with the WDS of the first type, $W_1(n)$ is not smooth enough to be analyzed directly
(see Figure \ref{fig:WMn-first-asymptotic}),
so we instead study its average 
\begin{equation}\label{eq:def-TWn}
TW_1(n) := \frac{1}{n}\sum_{j<n} W_1(j).
\end{equation}

Similar to the WDS of the first type, this problem may be generalized by weighting
the powers of $2$ with polynomial weights\footnote{
For WDS of the first type we use weights of the form $t^{\overline{M}}$;
for WDS of the second type the weights are of the form $t^M$.
The difference is due to ease of analysis.
Both types of span the space of polynomials and hence our study allows \emph{any} polynomial weights.},
i.e. by defining
$W_0(n):=n$ and, $\forall M\geq 1$,
\begin{equation}\label{eq:def-WMn}
W_M(n) := \sum_{t=1}^k t^M 2^{i_t}
\end{equation}
and then introducing the average functions
\begin{equation}\label{eq:def-TWMn}
TW_M(n) := \frac{1}{n}\sum_{j<n} W_M(j).
\end{equation}

\begin{figure}[t]
\vspace*{-.1in}
\centering%
\scalebox{0.33}{\includegraphics{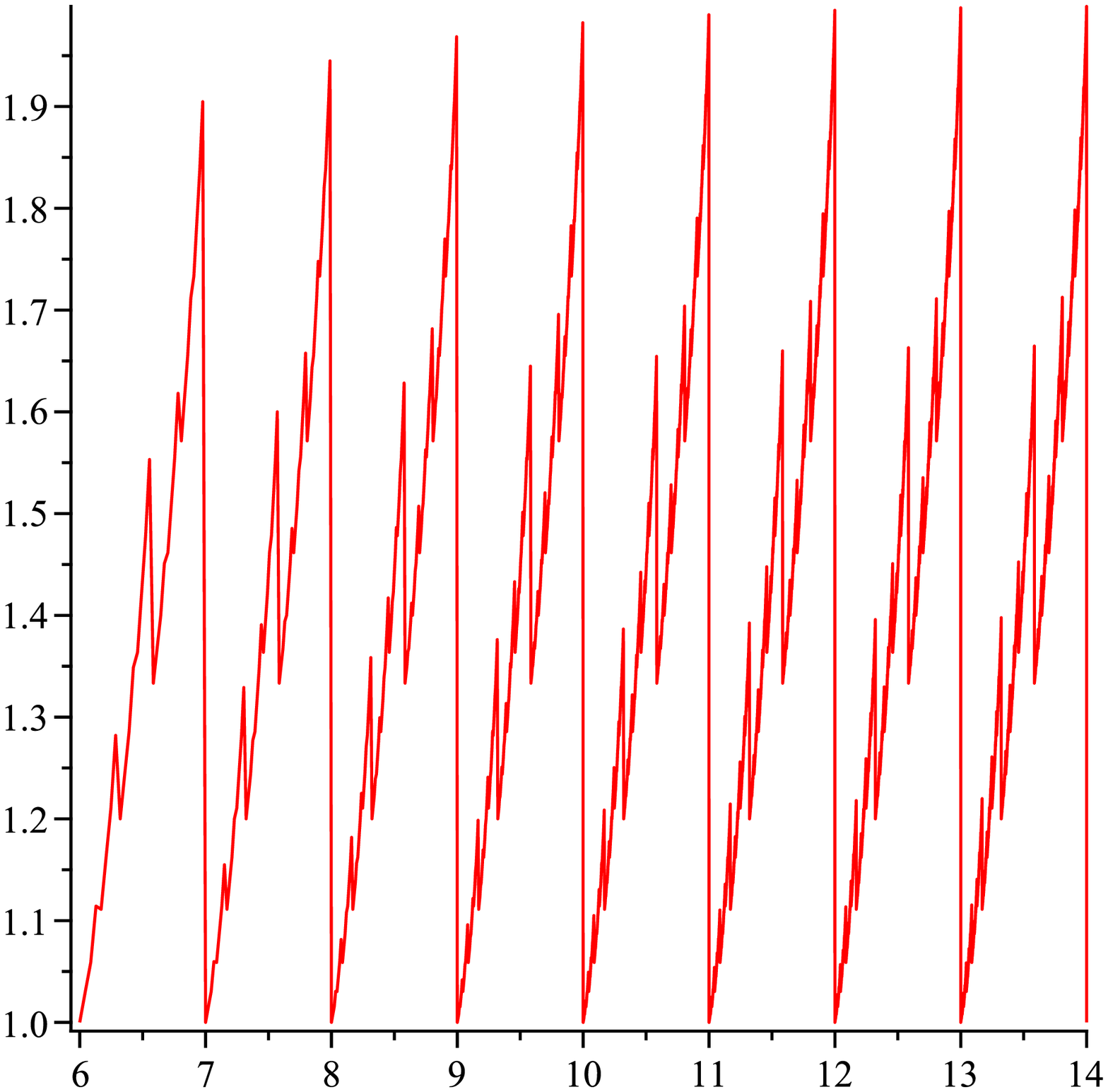}}
\scalebox{0.33}{\includegraphics{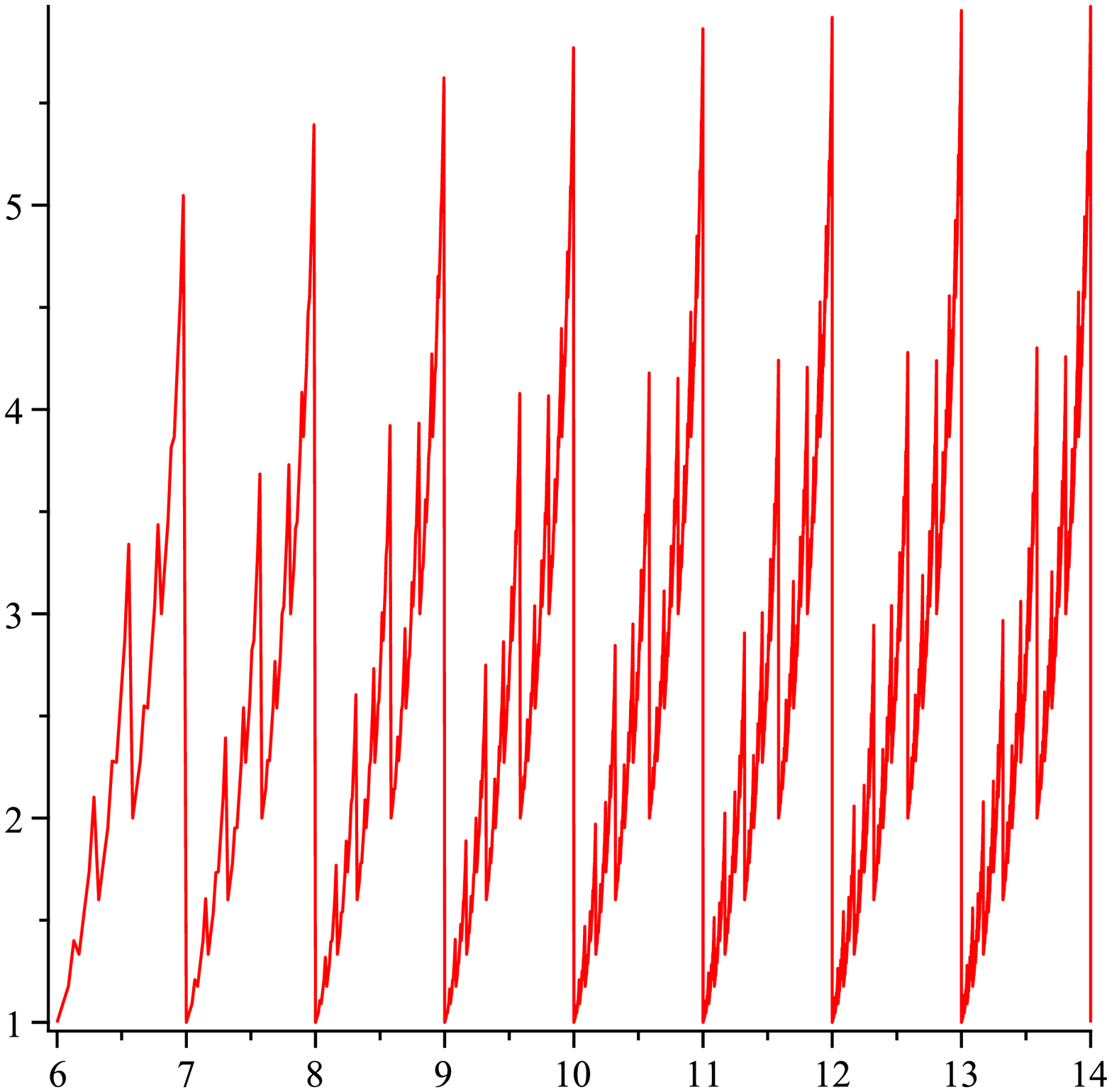}}
\caption{$W_M(n)/n$ plotted against $\lg n$ for $M=1$ (left) and $M=2$ (right).
Though these functions also appear periodic, they are not smooth and are with ``large'' fluctuations.
These make direct analysis of $W_M(n)$ hard.}
\label{fig:WMn-first-asymptotic}
\end{figure}

We will show that $TW_M(n)$ has an exact closed-form formula, which is in the form of
\begin{equation}\label{eq:TWMn-closed-form}
TW_M(n) = n G_M(\lg n) + d_M \lg^M n + \sum_{d=0}^{M-1} 
\left(\lg^d n\right)G_{M,d}(\lg n),
\end{equation}
where $d_M$ is a constant, $G_M(u)$ and $G_{M,d}(u)$'s are periodic functions with period one given by absolutely convergent Fourier series.

\medskip

Our approach to solving all three problems will be similar.
We first use the Mellin-Perron formula and problem-specific facts to reduce the analysis
to the calculation of an integral of the form
$$\int_{3-i\infty}^{3+i\infty} K(s) ds$$
for some problem specific kernel $K(s)$.
We then identify the singularities and residues of $K(s)$
and use the Cauchy residue theorem in the limit to evaluate the integral.

\begin{figure}[htp]
\centering%
\subfigure[$\left(f_n^3-\frac{1}{2}n\lg^2 n\right)/(n\lg n)$ vs.  $\lg n$]{
\scalebox{0.31}{\includegraphics{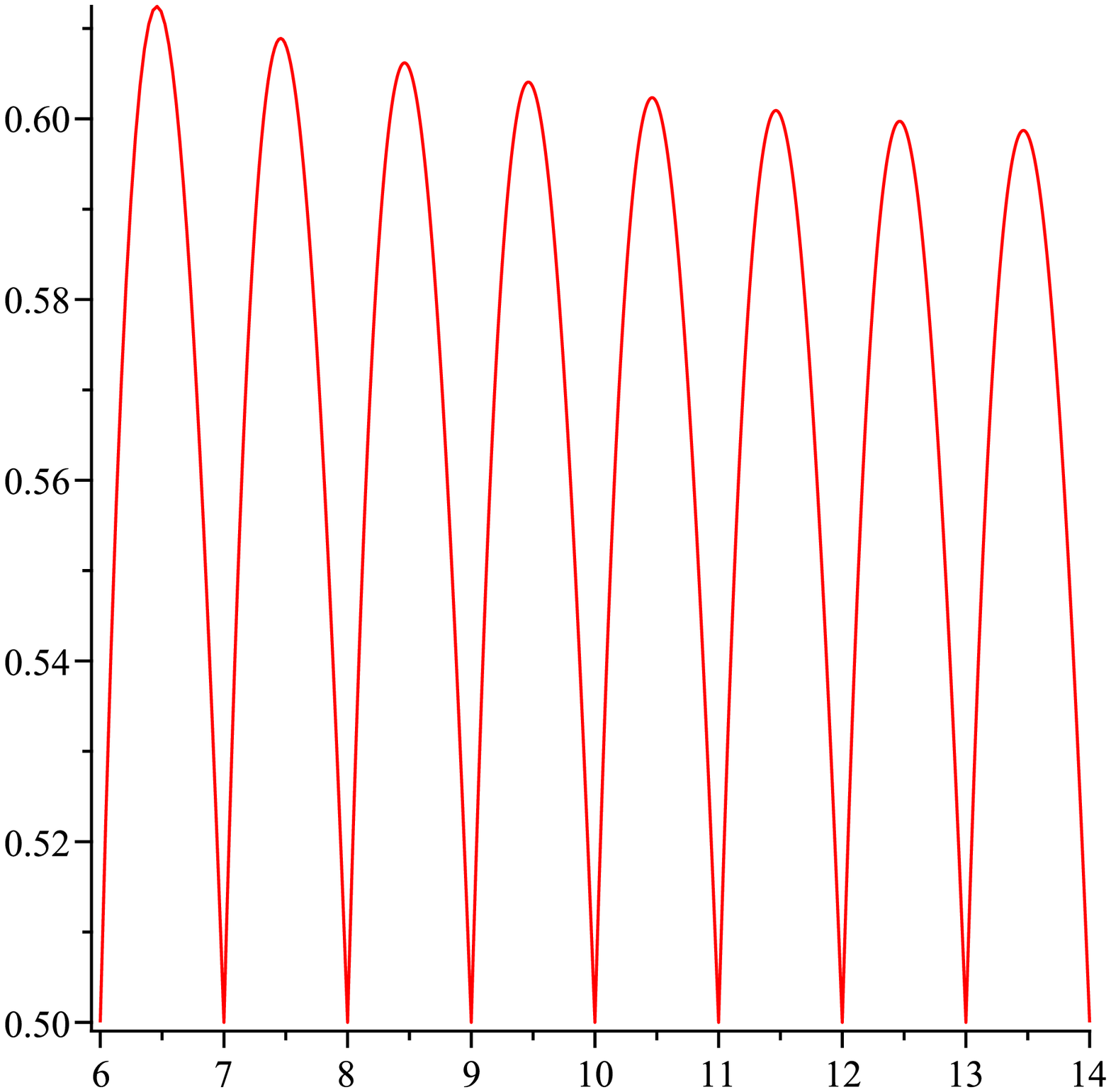}}
\label{fig:fn3_second_asym}}
\subfigure[$\left(f_n^4-\frac{1}{6}n\lg^3 n\right)/(n\lg^2 n)$ vs. $\lg n$]{
\scalebox{0.31}{\includegraphics{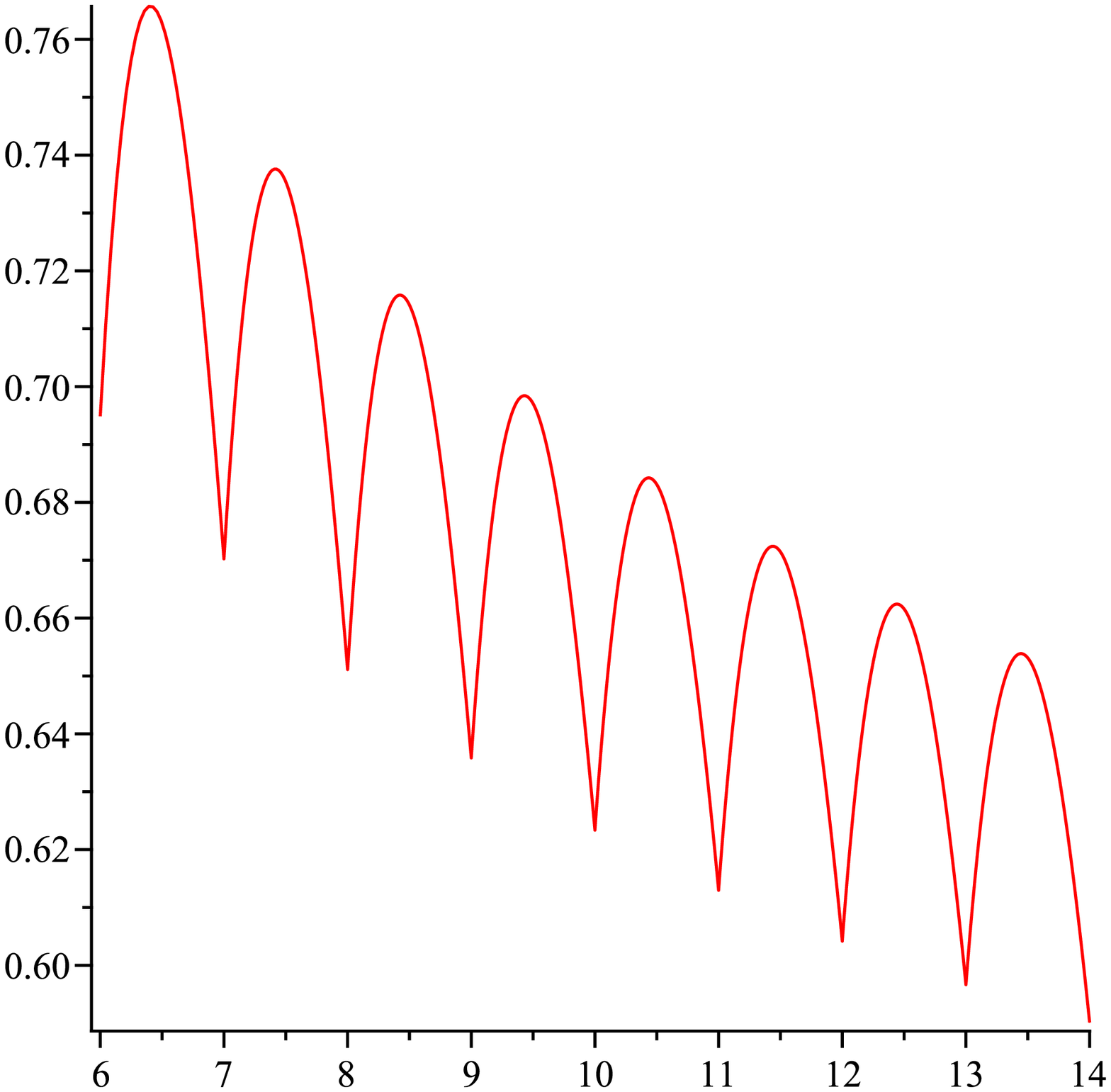}}
\label{fig:fn4_second_asym}}
\subfigure[$\left(TS_1(n)-\frac{1}{2}n\lg n\right)/n$ vs. $\lg n$]{
\scalebox{0.31}{\includegraphics{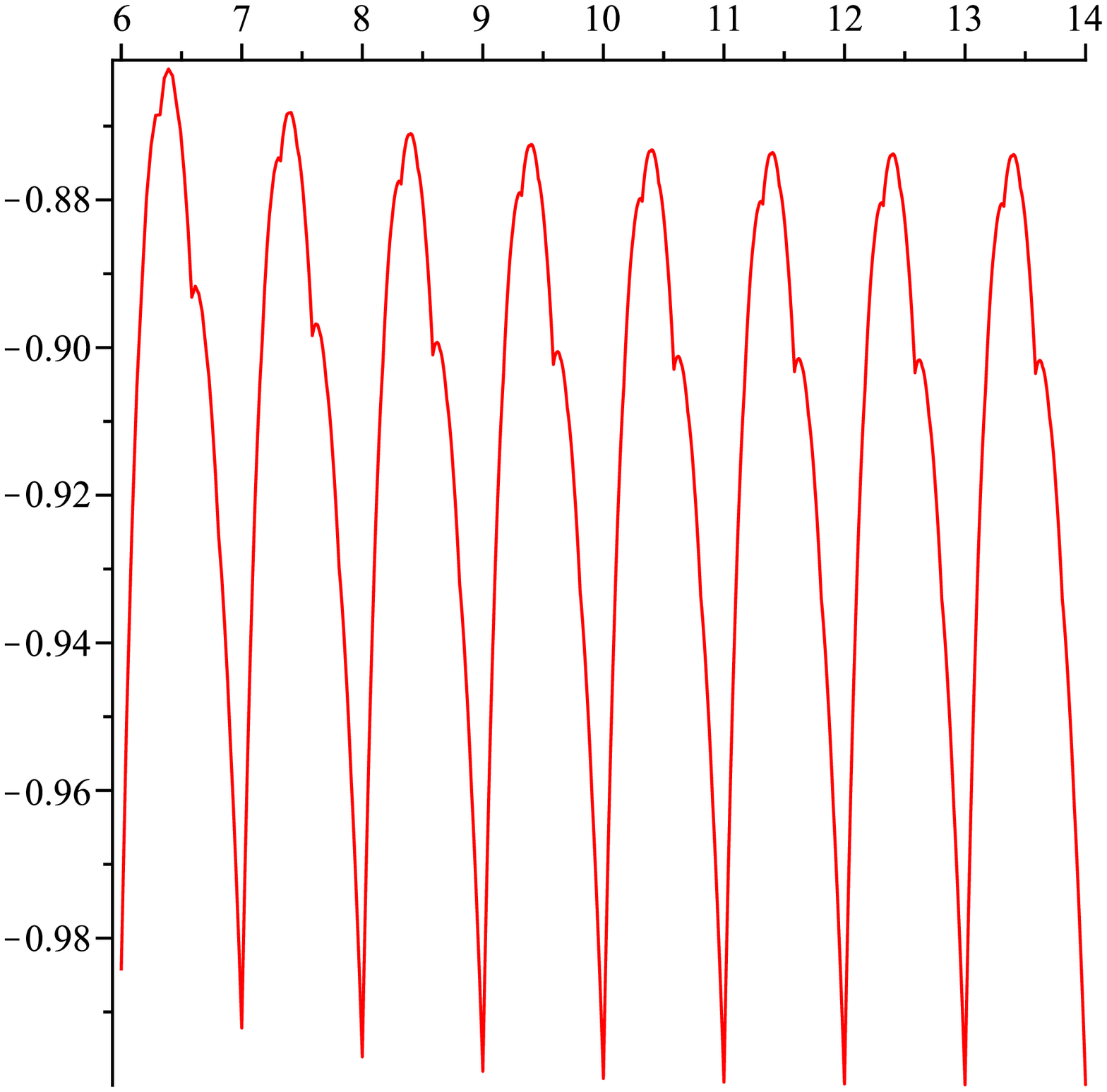}}
\label{fig:TS1n_second_asym}}
\subfigure[$\left(TS_2(n)-\frac{1}{2}n\lg^2 n\right)/(n\lg n)$ vs. $\lg n$]{
\scalebox{0.31}{\includegraphics{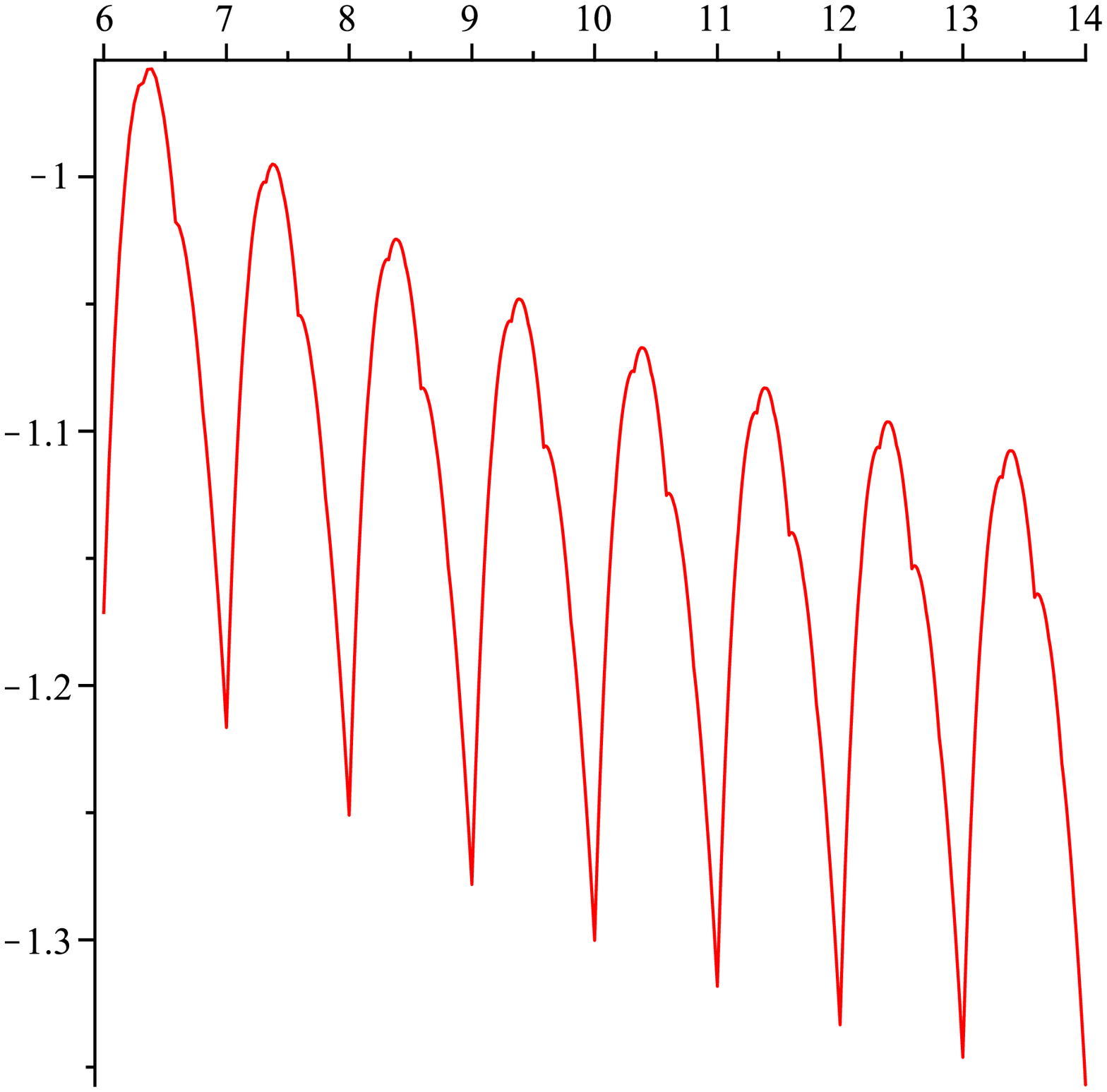}}
\label{fig:TS2n_second_asym}}
\subfigure[$TW_1(n)/n$ vs. $\lg n$]{
\scalebox{0.31}{\includegraphics{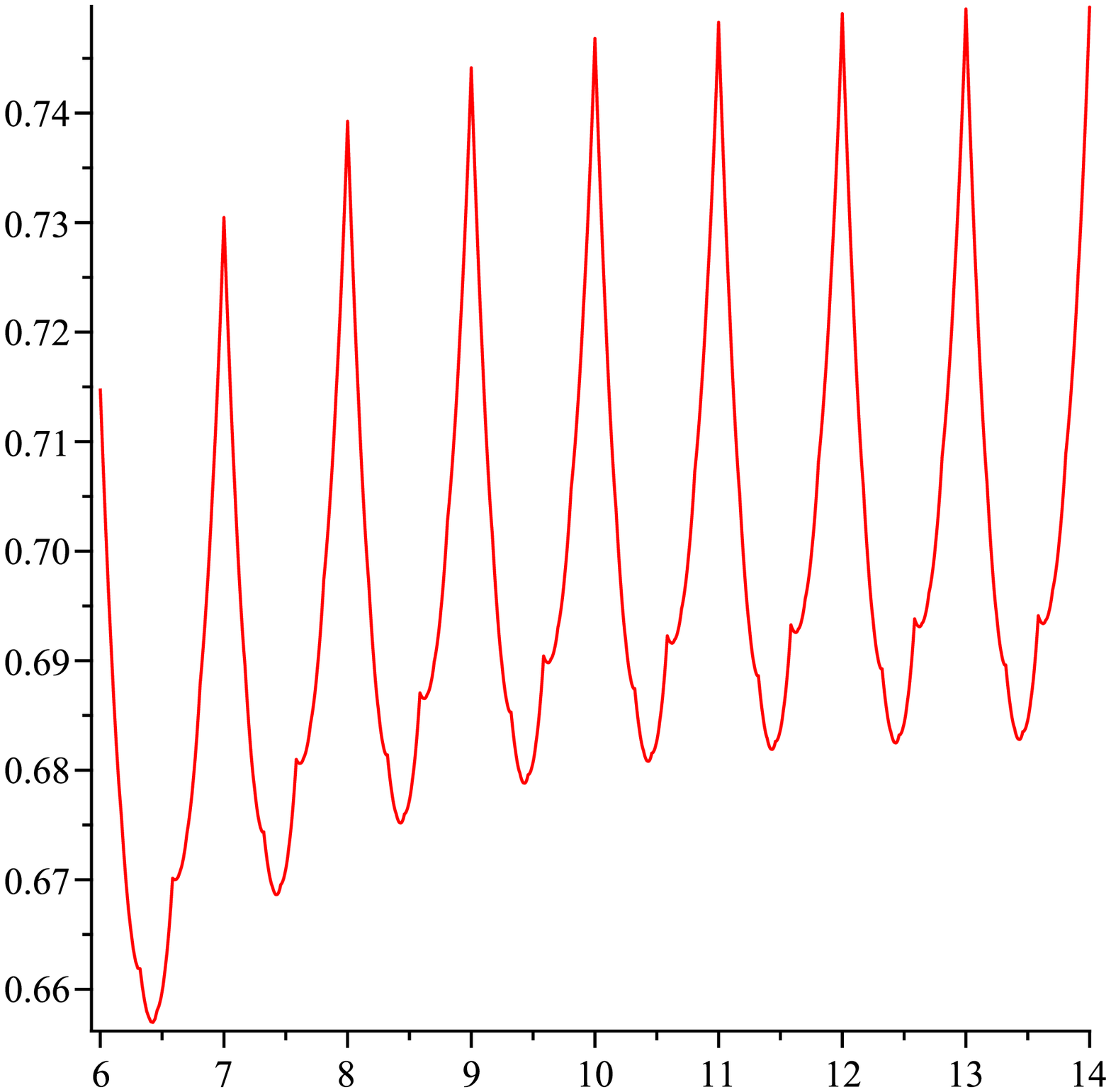}}
\label{fig:TW1n_first_asym}}
\subfigure[$TW_2(n)/n$ vs. $\lg n$]{
\scalebox{0.31}{\includegraphics{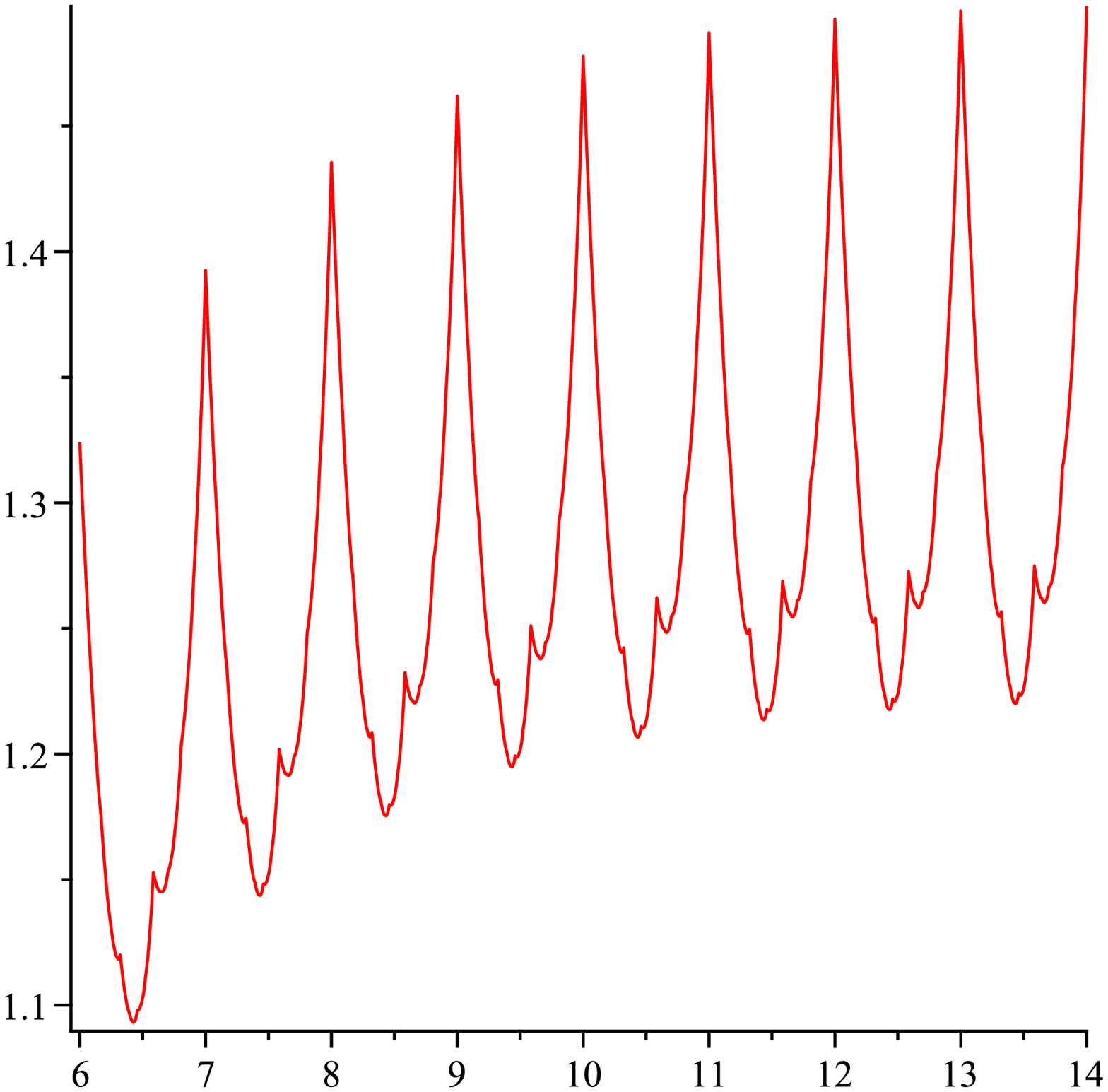}}
\label{fig:TW2n_first_asym}}
\caption{These figures illustrate the periodic nature of the second order asymptotics of
$f_n^k$ (when $k=3,4$),
$TS_M(n)$ (when $M=1,2$)
and the first order asymptotics of $TW_M(n)$ (when $M=1,2$).}
\end{figure}

\section{Background}\label{sect:back}

\subsection{The Mellin-Perron Fromula}\label{subsect:back-MPF}

The main tools used in this paper are Dirichlet generating functions and the Mellin-Perron formula.
For more background see, \cite[pp.13-23]{RIEDEL-MT-1996}, \cite{FGKPT-1994} and \cite[pp.762-767]{FlSe-2008}.

\begin{Theorem}[The Mellin-Perron formula]\label{thm:MP}
Let $\{\lambda_j\}$, $j=1,2,\ldots$ be a sequence and $c>0$ lie in the half-plane of absolute convergence of $\sum_{j=1}^{\infty} \lambda_j j^{-s}$. Then for any $m\geq 1$, 
\begin{equation}\label{eq:MPF-general-m}
\frac{1}{m!} \sum_{j<n} \lambda_j \left(1-\frac{j}{n}\right)^m = \frac{1}{2\pi i} \int_{c-i\infty}^{c+i\infty} \left(\sum_{j=1}^\infty \frac{\lambda_j}{j^s} \right) \frac{n^s ds}{s(s+1)(s+2)\cdots (s+m)}.
\end{equation}
\end{Theorem}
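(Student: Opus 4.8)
The plan is to reduce the formula to the classical ``discontinuous integral'' (Perron-type) identity and then interchange summation and integration. The key lemma I would establish first is: for $c>0$, $m\geq1$ and real $y>0$,
\[
\frac{1}{2\pi i}\int_{c-i\infty}^{c+i\infty}\frac{y^s\,ds}{s(s+1)(s+2)\cdots(s+m)}=
\begin{cases}
\dfrac{1}{m!}\left(1-\dfrac{1}{y}\right)^m, & y\geq1,\\[2mm]
0, & 0<y<1.
\end{cases}
\]
Since the kernel $1/(s(s+1)\cdots(s+m))$ is $O(|s|^{-(m+1)})=O(|s|^{-2})$ on vertical lines, this integral converges absolutely and a rectangular contour shift is legitimate. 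For $y\geq1$ one shifts the line of integration to $\Re s=-X$ and lets $X\to\infty$: the horizontal connecting segments vanish by the $O(|s|^{-2})$ decay, the shifted vertical integral is $O\bigl(y^{-X}X^{-m}\bigr)\to0$, and the contour encloses the simple poles $s=0,-1,\dots,-m$, whose residues sum to $\sum_{k=0}^m\frac{(-1)^ky^{-k}}{k!\,(m-k)!}=\frac{1}{m!}\sum_{k=0}^m\binom{m}{k}(-1/y)^k=\frac{1}{m!}(1-1/y)^m$. For $0<y<1$ one shifts instead to $\Re s=+X$, where $|y^s|=y^X$ decays exponentially and no poles are crossed, so the integral equals $0$.

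Next I would apply this lemma with $y=n/j$ for $j=1,2,\dots$. Because $c$ lies in the half-plane of absolute convergence of $\sum_j\lambda_jj^{-s}$, on the line $\Re s=c$ one has $\sum_j|\lambda_j|j^{-c}<\infty$; combined with the finiteness of $\int_{\Re s=c}|s(s+1)\cdots(s+m)|^{-1}\,|ds|$ (again using $m\geq1$), the double integral of absolute values is finite, so Fubini's theorem permits
\[
\frac{1}{2\pi i}\int_{c-i\infty}^{c+i\infty}\Bigl(\sum_{j\geq1}\frac{\lambda_j}{j^s}\Bigr)\frac{n^s\,ds}{s(s+1)\cdots(s+m)}
=\sum_{j\geq1}\lambda_j\cdot\frac{1}{2\pi i}\int_{c-i\infty}^{c+i\infty}\frac{(n/j)^s\,ds}{s(s+1)\cdots(s+m)}.
\]
By the lemma the $j$-th inner integral is $\frac{1}{m!}(1-j/n)^m$ when $j\leq n$ and $0$ when $j>n$; the boundary term $j=n$ equals $\frac{1}{m!}(1-1)^m=0$, so the right-hand side reduces to $\frac{1}{m!}\sum_{j<n}\lambda_j(1-j/n)^m$, which is the left-hand side of the asserted identity.

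The only genuinely delicate step is the rigor of the contour shift in the lemma — verifying that the horizontal segments and the shifted vertical integral vanish in the limit and that the swept half-plane contains no other singularities. This is exactly where the hypothesis $m\geq1$ is essential: it forces at least quadratic decay of the kernel, so no Ces\`aro or Abel summability device is needed and the residue theorem applies verbatim. (For $m=0$ the analogous identity holds only in a summability sense; the requirement $m\geq1$ is what makes the smoothing by the factor $(1-j/n)^m$ — equivalently, the passage to averages — analytically clean, which is the setting in which all three applications of this paper are carried out.)
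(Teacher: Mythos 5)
The paper does not prove Theorem~\ref{thm:MP}; it is stated as a standard background result with citations (Riedel's thesis, Flajolet et al.\ 1994, Flajolet--Sedgewick), so there is no in-paper proof to compare against. Your argument is the classical one those references use: establish the discontinuous Perron-type integral
\[
\frac{1}{2\pi i}\int_{c-i\infty}^{c+i\infty}\frac{y^s\,ds}{s(s+1)\cdots(s+m)}
=\frac{1}{m!}\Bigl(1-\frac{1}{y}\Bigr)^m\ (y\geq 1),\qquad 0\ (0<y<1),
\]
by shifting the contour and summing residues, then interchange sum and integral by Fubini. The details check out: the residue at $s=-k$ is $\frac{(-1)^k y^{-k}}{k!\,(m-k)!}$, which sums by the binomial theorem to $\frac{1}{m!}(1-1/y)^m$; the horizontal segments and the shifted vertical line contribute $O(y^{\mp X}X^{-m})\to 0$, with the factor $X^{-m}$ handling the borderline case $y=1$ precisely because $m\geq 1$; and absolute convergence of $\sum\lambda_j j^{-c}$ together with the $O(|t|^{-(m+1)})$ decay of the kernel (again needing $m\geq 1$) makes the double integral of absolute values finite, so Fubini is legitimate. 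Your closing observation — that $m\geq 1$ forces at least quadratic decay of the kernel and thereby removes any need for Ces\`aro or Abel summability, which is exactly what would be required if one tried $m=0$ — correctly pinpoints why the theorem is formulated with this hypothesis and why the paper works exclusively with smoothed averages. This is a complete and correct proof.
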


In particular, when $m=1$ and $m=2$,
\begin{eqnarray}
\frac{1}{n}\sum_{j<n} \lambda_j (n-j) & = & \frac{1}{2\pi i} \int_{c-i\infty}^{c+i\infty} \left(\sum_{j=1}^{\infty} \frac{\lambda_j}{j^s}\right)\frac{n^s ds}{s(s+1)},\label{eq:MPF-m=1}\\
\frac{1}{2n^2}\sum_{j<n} \lambda_j (n-j)^2 & = & \frac{1}{2\pi i} \int_{c-i\infty}^{c+i\infty} \left( \sum_{j=1}^{\infty} \frac{\lambda_j}{j^s} \right)\frac{n^s ds}{s(s+1)(s+2)}.\label{eq:MPF-m=2}
\end{eqnarray}

We will analyze the MDC functions and the two types of WDS
by rewriting them to summations in the form of the left hand side of (\ref{eq:MPF-m=1}).
WDS of the second type will also need a summation as in the left hand side of (\ref{eq:MPF-m=2}).
The Mellin-Perron formula will then enable us to evaluate the associated line integrals instead.
The line integrals will be evaluated \emph{exactly} via the Cauchy residue theorem
by considering integrations over some special contours.

In the right hand side of (\ref{eq:MPF-general-m}),
$\sum_{j=1}^\infty\lambda_j j^{-s}$,
the \emph{Dirichlet generating function} (DGF) of $\{\lambda_j\}$,
is the only factor depending upon $\{\lambda_j\}$.
The Cauchy residue theorem relates the value of the line integral to the residues at the poles of the kernel in the line integral,
thus understanding the locations and associated residues of the DGF's singularities
will be essential to evaluating the line integral.

Define the \emph{backward difference function} $\nabla A$ by $\nabla A(j) = A(j)-A(j-1)$ for any function $A$.
The following lemma will be needed later in the analysis of WDS.

\begin{Lemma}\label{thm:key}
Let $A$ be a function with $A(0)=0$ and
$$TA(n) = \frac{1}{n}\sum_{j<n} A(j).$$
Then 
$$TA(n) = \frac{1}{2\pi i}\int_{c-i\infty}^{c+i\infty}\left(\sum_{j=1}^\infty \frac{\nabla A(j)}{j^s}\right)\frac{n^s ds}{s(s+1)},$$
where $c>0$ lies in the half-plane of absolute convergence of $\sum_{j=1}^\infty \nabla A(j)j^{-s}$.
\end{Lemma}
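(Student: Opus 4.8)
The plan is to deduce the lemma from the $m=1$ case of the Mellin--Perron formula, i.e.\ from~(\ref{eq:MPF-m=1}) in Theorem~\ref{thm:MP}, applied to the sequence $\lambda_j=\nabla A(j)$. With that choice, the right-hand side of~(\ref{eq:MPF-m=1}) is literally the integral appearing in the statement, and the requirement that $c>0$ lie in the half-plane of absolute convergence of $\sum_{j\ge 1}\lambda_j j^{-s}=\sum_{j\ge 1}\nabla A(j) j^{-s}$ is exactly the hypothesis made in the lemma. Hence the whole proof reduces to verifying that the left-hand side of~(\ref{eq:MPF-m=1}) for this $\lambda_j$, namely $\frac1n\sum_{j<n}\nabla A(j)\,(n-j)$, coincides with $TA(n)=\frac1n\sum_{j<n}A(j)$.

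So the heart of the matter is the purely finite identity $\sum_{j=1}^{n-1}\nabla A(j)\,(n-j)=\sum_{j=1}^{n-1}A(j)$, valid whenever $A(0)=0$. I would prove it by exchanging the order of summation: for $1\le j\le n-1$ one has $n-j=\#\{\ell:\ j\le \ell\le n-1\}$, so that $\sum_{j=1}^{n-1}\nabla A(j)\,(n-j)=\sum_{\ell=1}^{n-1}\sum_{j=1}^{\ell}\nabla A(j)=\sum_{\ell=1}^{n-1}\bigl(A(\ell)-A(0)\bigr)=\sum_{\ell=1}^{n-1}A(\ell)$, the inner sum telescoping. (Equivalently one can do an Abel summation directly on $(A(j)-A(j-1))(n-j)$, shifting the index in the $A(j-1)$ part; the boundary term drops precisely because $A(0)=0$, the weight differences $(n-j)-(n-1-j)$ collapse to $1$, and the leftover $A(n-1)$ supplies the last summand.) Dividing by $n$ gives $\frac1n\sum_{j<n}\nabla A(j)\,(n-j)=TA(n)$, and combining this with~(\ref{eq:MPF-m=1}) completes the argument.

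I do not expect any real obstacle here: all analytic/convergence subtleties are already packaged inside Theorem~\ref{thm:MP} and inside the stated condition on $c$, so the only thing that requires care is the bookkeeping in the telescoping step, and in particular the role of the assumption $A(0)=0$ — without it an extra term $-\,n\,A(0)$ would survive and the identity would fail. For this reason I would present the order-exchange version, since it makes both the telescoping and the necessity of $A(0)=0$ transparent.
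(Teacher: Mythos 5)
Your proposal is correct and takes essentially the same approach as the paper: the paper's proof is the one-line observation that $TA(n)=\frac1n\sum_{j<n}(n-j)\nabla A(j)$ followed by an application of~(\ref{eq:MPF-m=1}). You simply spell out the verification of that finite identity (via order-exchange and telescoping, noting where $A(0)=0$ is used), which the paper leaves implicit.
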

\begin{proof}
Note that
$TA(n) = \frac{1}{n}\sum_{j<n} (n-j)\nabla A(j)$
and then apply (\ref{eq:MPF-m=1}).
\end{proof}

A similar lemma, previously proven by Flajolet and Golin \cite{FlGo-1994},
will be needed to analyze the MDC functions.
For any sequence $\{s_n\}$, define its \emph{double difference sequence}
$\{\Delta\nabla s_n\}$ by $\Delta\nabla s_n := s_{n+1} - 2 s_n + s_{n-1}$ for all $n$.

\begin{Lemma}\label{lem:solution-f-integral}
Consider the recurrence
\begin{equation}\label{eq:general-MDC-recurrence}
f_n = f_{\lfloor n/2 \rfloor} + f_{\lceil n/2 \rceil} + e_n
\end{equation}
with boundary conditions $e_0=e_1=0$ and $f_1=0$. Then
\begin{equation}\label{eq:FG}
D_{f_k}(s) = \sum_{j=1}^\infty\frac{\Delta\nabla f_j}{j^s} 
= \frac{1}{1-2^{-s}}\sum_{j=1}^{\infty} \frac{\Delta\nabla e_j^k}{j^s} 
\end{equation}
and 
$$f_n = \frac{n}{2\pi i}\int_{c-i\infty}^{c+i\infty} D_{f_k}(s) \frac{n^s ds}{s(s+1)},$$
where $c$ lies in the half-plane of absolute convergence of
$\sum_{j=1}^{\infty} \Delta\nabla e_j j^{-s}$.
\end{Lemma}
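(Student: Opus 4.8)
The plan is to derive the Dirichlet generating function identity (\ref{eq:FG}) by resolving the recurrence (\ref{eq:general-MDC-recurrence}) according to the parity of the index, and then to read off the integral representation from the Mellin--Perron formula (\ref{eq:MPF-m=1}) once $f_n$ has been rewritten as a weighted sum of second differences; the argument is essentially that of Flajolet and Golin \cite{FlGo-1994}. For the first part, set $g_n := f_{\lfloor n/2\rfloor}+f_{\lceil n/2\rceil}$, so that $f_n=g_n+e_n$ and $\Delta\nabla f_n=\Delta\nabla g_n+\Delta\nabla e_n$. I would compute $\Delta\nabla g_n=g_{n+1}-2g_n+g_{n-1}$ by cases: for $n=2m$ the relevant values are $g_{2m}=2f_m$ and $g_{2m\pm1}=f_m+f_{m\pm1}$, giving $\Delta\nabla g_{2m}=f_{m+1}-2f_m+f_{m-1}=\Delta\nabla f_m$; for $n=2m+1$ they are $g_{2m+1}=f_m+f_{m+1}$, $g_{2m+2}=2f_{m+1}$, $g_{2m}=2f_m$, giving $\Delta\nabla g_{2m+1}=0$. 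Hence $\Delta\nabla f_{2m}=\Delta\nabla f_m+\Delta\nabla e_{2m}$ for $m\ge1$ and $\Delta\nabla f_{2m+1}=\Delta\nabla e_{2m+1}$ for $m\ge0$, which together determine $\Delta\nabla f_j$ for every $j\ge1$. Splitting $D_{f_k}(s)=\sum_{j\ge1}\Delta\nabla f_j\,j^{-s}$ into its even- and odd-index parts and substituting, the even part yields $2^{-s}\sum_{m\ge1}\Delta\nabla f_m\,m^{-s}+\sum_{m\ge1}\Delta\nabla e_{2m}\,(2m)^{-s}$ and the odd part yields $\sum_{m\ge0}\Delta\nabla e_{2m+1}\,(2m+1)^{-s}$; since the last two sums recombine into $\sum_{j\ge1}\Delta\nabla e_j\,j^{-s}$, adding everything gives $D_{f_k}(s)=2^{-s}D_{f_k}(s)+\sum_{j\ge1}\Delta\nabla e_j\,j^{-s}$, i.e., (\ref{eq:FG}).

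For the integral representation, I would first establish the finite identity $f_n=\sum_{j<n}(n-j)\,\Delta\nabla f_j$. Writing $\Delta\nabla f_j=\nabla f_{j+1}-\nabla f_j$ and summing by parts, $\sum_{j=1}^{n-1}(n-j)(\nabla f_{j+1}-\nabla f_j)=\sum_{m=2}^{n}\nabla f_m-(n-1)\,\nabla f_1=(f_n-f_1)-(n-1)(f_1-f_0)$, and the boundary data $f_0=f_1=0$ (consistent with the $n=1$ instance of the recurrence, since $e_1=0$) collapse the right-hand side to $f_n$. Applying (\ref{eq:MPF-m=1}) with $\lambda_j=\Delta\nabla f_j$ and multiplying through by $n$ then yields $f_n=\frac{n}{2\pi i}\int_{c-i\infty}^{c+i\infty}D_{f_k}(s)\,\frac{n^s\,ds}{s(s+1)}$, as claimed.

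The step I expect to be the main obstacle is the analytic bookkeeping that makes the above rigorous, namely exhibiting a common half-plane of absolute convergence. The zeros of $1-2^{-s}$ all lie on the imaginary axis, and $|1-2^{-s}|\ge 1-2^{-c}>0$ on every vertical line $\Re s=c>0$; moreover $(1-2^{-s})^{-1}=\sum_{k\ge0}2^{-ks}$ is itself an absolutely convergent Dirichlet series for $\Re s>0$, so the Dirichlet convolution on the right of (\ref{eq:FG}) converges absolutely on $\Re s>\max(0,\sigma_e)$, where $\sigma_e$ denotes the abscissa of absolute convergence of $\sum_j\Delta\nabla e_j\,j^{-s}$ (positive in all of our applications). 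This legitimizes the rearrangement of series used above and shows that the hypothesis ``$c$ lies in the half-plane of absolute convergence of $\sum_j\Delta\nabla e_j\,j^{-s}$'' is exactly what the Mellin--Perron formula requires; alternatively one may cite \cite{FlGo-1994} for the original treatment.
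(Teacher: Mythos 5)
Your proof is correct and follows essentially the same strategy as the paper's Appendix~A: split the double-differenced recurrence by parity to obtain $\Delta\nabla f_{2m}=\Delta\nabla f_m+\Delta\nabla e_{2m}$ and $\Delta\nabla f_{2m+1}=\Delta\nabla e_{2m+1}$, solve the resulting even/odd split of the Dirichlet series for the functional equation, and pass to the integral via the identity $f_n=\sum_{j<n}(n-j)\Delta\nabla f_j$ combined with the order-one Mellin--Perron formula (\ref{eq:MPF-m=1}). The only minor deviations are cosmetic: you route through the auxiliary $g_n$ and compute $\Delta\nabla g_n$ directly rather than first forming $\nabla f$, you prove the summation identity by explicit partial summation where the paper merely calls it a basic calculation, and you make the convention $f_0=0$ (and the convergence half-plane) explicit where the paper leaves them implicit.
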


\subsection{Useful Facts Involving the Riemann-Zeta Function}\label{subsect:back-useful_identities}

The Riemann-Zeta function is defined by $\zeta(s) := \sum_{i>0} i^{-s}$ when $Re(s)>1$.
Since it will appear in the integral kernels in the analyses of the WDS, we list some basic facts concerning the Riemann-Zeta function \cite{TITCH-1986,WaWh-1963} that we will need.

First, $\zeta(s)$ can be analytically continued to be analytic in the whole complex plane
with the exception of a simple pole at $s=1$ with residue $1$.

Next, in \cite{FGKPT-1994}, Flajolet et. al. proved the identity
\begin{equation}\label{eq:left=0-m=1}
\frac{1}{2\pi i}\int_{-1/4-i\infty}^{-1/4+i\infty}\zeta(s)\frac{n^s ds}{s(s+1)} = 0.
\end{equation}
By mimicking their proof, we prove the similar formula
(for completeness the proof is provided in Appendix \ref{app:left-zero}):
\begin{equation}\label{eq:left=0-m=2}
\frac{1}{2\pi i}\int_{-5/4-i\infty}^{-5/4+i\infty} \zeta(s) \frac{n^s ds}{s(s+1)(s+2)} = 0.
\end{equation}

When integrating $\zeta(s)$ the following asymptotic bounds \cite{WaWh-1963} will be useful:

\begin{Lemma}\label{lem:zeta-bound}
If $s=\sigma+it$, where $\sigma,t\in\mathbb{R}$, the Riemann-Zeta function satisfies the bound
\begin{equation}\label{eq:zeta-bound-WW}
\zeta(s) = O(|t|^{\tau(\sigma)} \log |t|)
\end{equation}
where
\begin{equation}\label{eq:zeta-bound-regions-WW}
\tau(\sigma) = 
\begin{cases}
\frac{1}{2} - \sigma & \text{ for } \sigma \leq 0 \\
\frac{1}{2} 			& \text{ for } 0 \leq \sigma \leq \frac{1}{2} \\
1-\sigma 				& \text{ for } \frac{1}{2} \leq \sigma \leq 1 \\
0 							& \text{ for } 1 \leq \sigma.
\end{cases}
\end{equation}
\end{Lemma}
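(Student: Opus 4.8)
The plan is to establish the bound separately in the half-plane $\sigma\ge\tfrac12$, where one can work directly with the Dirichlet series, and in the half-plane $\sigma\le\tfrac12$, where one transfers the former estimates across via the functional equation. The starting point for the right half is the Euler--Maclaurin representation, valid for $\sigma>0$ and every integer $N\ge 1$,
$$\zeta(s)=\sum_{n=1}^{N}n^{-s}+\frac{N^{1-s}}{s-1}-\tfrac12 N^{-s}+s\int_N^\infty\frac{\{x\}-\tfrac12}{x^{s+1}}\,dx,$$
whose tail integral is $O\!\left(|s|\,\sigma^{-1}N^{-\sigma}\right)$; throughout one takes $N=\lceil|t|\rceil$ and assumes $|t|$ large.

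First I would dispose of $\sigma\ge 1$: the truncated sum is at most $\sum_{n\le N}n^{-1}=O(\log|t|)$, while $N^{1-\sigma}/|s-1|$, the term $\tfrac12N^{-s}$, and the tail are all $O(1)$ because $\sigma\ge 1$ and $N\asymp|t|$; hence $\zeta(s)=O(\log|t|)$, i.e.\ $\tau(\sigma)=0$. Next, for $\tfrac12\le\sigma\le1$ I would compare the truncated sum with $\int_1^N u^{-\sigma}\,du$, obtaining $\sum_{n\le N}n^{-\sigma}=O(|t|^{1-\sigma}\log|t|)$ — the logarithm absorbing the $1/(1-\sigma)$ singularity at the endpoint $\sigma=1$ (use $e^u-1\le ue^u$) — and check that the other three terms are $O(|t|^{1-\sigma})$; this gives $\tau(\sigma)=1-\sigma$ on this strip.

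For the left half I would invoke the functional equation $\zeta(s)=\chi(s)\,\zeta(1-s)$, with $\chi(s)=2^{s}\pi^{s-1}\sin(\pi s/2)\,\Gamma(1-s)$, together with the consequence of Stirling's formula (and $|\sin(\pi s/2)|\asymp e^{\pi|t|/2}$) that $|\chi(\sigma+it)|=O(|t|^{1/2-\sigma})$ uniformly for $\sigma$ in any fixed bounded interval. If $0\le\sigma\le\tfrac12$ then $1-\sigma\in[\tfrac12,1]$, so the previous step (and $\zeta(\bar s)=\overline{\zeta(s)}$) gives $|\zeta(1-s)|=O(|t|^{\sigma}\log|t|)$, hence $|\zeta(s)|=O(|t|^{1/2}\log|t|)$, i.e.\ $\tau(\sigma)=\tfrac12$; if $\sigma\le0$ then $1-\sigma\ge1$, so $|\zeta(1-s)|=O(\log|t|)$ and $|\zeta(s)|=O(|t|^{1/2-\sigma}\log|t|)$, i.e.\ $\tau(\sigma)=\tfrac12-\sigma$. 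Combining the four ranges yields the claimed $\tau(\sigma)$ everywhere.

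The work is entirely in the uniformity of the error estimates rather than in any new idea — the statement is classical and is quoted from \cite{WaWh-1963} — and the one place needing a little care is the ``corner'' $\sigma=1$: both $1/(s-1)$ in the Euler--Maclaurin formula and $\int u^{-\sigma}\,du$ become singular there, and it is precisely the choice $N\asymp|t|$ and the extra $\log|t|$ factor that keep the bound uniform across $\tfrac12\le\sigma\le1$ and match it continuously to the bound on $\sigma\ge1$. The only other non-routine input is the uniform Stirling estimate for $\chi$ on the vertical strips that the later contour shifts actually use; for those (bounded $\sigma$) it is standard.
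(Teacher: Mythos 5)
The paper does not prove this lemma --- it is quoted directly from Watson and Whittaker \cite{WaWh-1963} and used as a black box. Your proposal supplies a proof via the classical route, and it is essentially correct: the Euler--Maclaurin truncation with $N\asymp|t|$ gives the estimates on $\sigma\ge\tfrac12$, the trick $e^u-1\le u e^u$ handles the uniformity across the corner at $\sigma=1$ (turning the would-be $1/(1-\sigma)$ blowup into the harmless $\log|t|$ factor), and the functional equation $\zeta(s)=\chi(s)\zeta(1-s)$ with the Stirling-derived bound $|\chi(\sigma+it)|=O(|t|^{1/2-\sigma})$ (uniform for $\sigma$ in any fixed bounded strip) transfers those estimates to $\sigma\le\tfrac12$. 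The resulting exponents $0$, $1-\sigma$, $\tfrac12$, $\tfrac12-\sigma$ on the four ranges match the stated piecewise-linear $\tau(\sigma)$ and agree at the breakpoints $\sigma=0,\tfrac12,1$. Since the paper offers only a citation, your write-up is strictly more than what the paper contains; it is a valid instance of the standard convexity-bound derivation (one could equally cite Phragm\'en--Lindel\"of interpolation between the two edges $\sigma=\tfrac12$ and $\sigma=1$, which is what many modern treatments do and which would eliminate the separate middle-strip computation, but your direct approach is fine and matches the phrasing of the classical references). The only place to be careful, as you note, is the uniformity in $\sigma$ of the Stirling estimate for $\chi$; for the bounded vertical strips actually used in the later contour arguments of this paper (all poles lie on $\Re(s)\in\{0,1\}$ and the contours stay between $\Re(s)=-5/4$ and $\Re(s)=3$) that uniformity is standard and unproblematic.
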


\subsection{Useful Formulae Involving Some DGFs}\label{subsect:back-DGF}

To understand the locations and associated residues of the integral kernels,
we will need closed-form formulae of their associated DGFs. We start with some basic definitions.

\begin{Definition}\label{def:v-and-v2}
Express $n=(b_i b_{i-1}\cdots b_1 b_0)_2$ in its binary representation.\\
Set $v(n) := \sum_{t=0}^i b_t$ to be the number of ``1''s in the binary representation of $n$
 and $v_2(n)$ to be the number of trailing ``0''s in the binary representation of $n$.\\
For example,\\
if $n=44=(101100)_2$ then $v(n) = 3$ and $v_2(n)=2$;\\  
if $n=33=(100001)_2$ then $v(n) = 2$ and $v_2(n)=0$.
\end{Definition}

We can now introduce two useful DGFs.

\begin{Definition}\label{def:VM-ZM}
$\forall M\geq 0$, denote the DGFs of $v(n)^M$ and $(v(n)+v_2(n))^M$ by
$$V_M(s) := \sum_{j=1}^\infty\frac{v(j)^M}{j^s},
\quad\quad
Z_M(s) := \sum_{j=1}^\infty\frac{(v(j)+v_2(j))^M}{j^s}.$$
\end{Definition}

The analysis of these DGFs will require the following facts.

\begin{Lemma}\label{lem:property-v-and-v_2}
Let $n$ be a positive integer. Then
\begin{enumerate}
\item $v(2n)=v(n)$ and $v(2n+1)=v(n)+1$;
\item $v_2(2n)=v_2(n)+1$;
\item if $n$ is odd, $v_2(n)=0$;
\item $v(n)-v(n-1)=1-v_2(n)$.
\end{enumerate}
\end{Lemma}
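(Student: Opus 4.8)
The plan is to derive parts (1)--(3) directly from the effect of multiplication by $2$ on binary representations, and then to obtain part (4) --- the only non-trivial claim --- by strong induction on $n$, using the first three parts. Throughout I take the convention $v(0)=v_2(0)=0$ (the empty binary string), which is what makes the recursions and the base case below consistent.

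For part (1), if $n=(b_i\cdots b_0)_2$ then $2n=(b_i\cdots b_0\,0)_2$ and $2n+1=(b_i\cdots b_0\,1)_2$, so the digit sum is unchanged in the first case and increased by one in the second. For part (2) the same observation shows $2n$ is obtained from $n$ by appending a $0$ in the units position, which creates exactly one more trailing zero, so $v_2(2n)=v_2(n)+1$. Part (3) is immediate: if $n$ is odd its units digit $b_0$ equals $1$, hence $n$ has no trailing zeros and $v_2(n)=0$.

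For part (4) I would induct on $n\ge 1$, with base case $n=1$ giving $v(1)-v(0)=1=1-v_2(1)$. For the inductive step, if $n$ is odd write $n=2m+1$, so $n-1=2m$; by part (1), $v(n)-v(n-1)=\bigl(v(m)+1\bigr)-v(m)=1$, which equals $1-v_2(n)$ by part (3). If $n$ is even write $n=2m$ with $1\le m<n$, so $n-1=2(m-1)+1$; by part (1), $v(n)-v(n-1)=v(m)-\bigl(v(m-1)+1\bigr)=\bigl(v(m)-v(m-1)\bigr)-1$, which by the induction hypothesis equals $\bigl(1-v_2(m)\bigr)-1=-v_2(m)$, and by part (2) this is $-\bigl(v_2(n)-1\bigr)=1-v_2(n)$, completing the induction.

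Since each step is an elementary manipulation of binary strings, there is no genuine obstacle here; the only point needing a little care is the bookkeeping in the even case of part (4) --- tracking that subtracting $1$ from $n=2^{v_2(n)}q$ with $q$ odd flips the bit at position $v_2(n)$ from $1$ to $0$ and the $v_2(n)$ lower bits from $0$ to $1$, which is exactly what the inductive identity encodes. An entirely equivalent alternative is to make this ``borrow chain'' explicit and read off $v(n-1)=v(n)-1+v_2(n)$ directly, bypassing the induction.
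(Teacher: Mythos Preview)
Your proof is correct. Parts (1)--(3) match the paper exactly. For part (4) the paper takes the direct route you mention at the end: it writes $n=(b_i\cdots b_{t+1}\,1\,0\cdots 0)_2$ with $t=v_2(n)$, observes $n-1=(b_i\cdots b_{t+1}\,0\,1\cdots 1)_2$, and reads off $v(n)-v(n-1)=1-t$ immediately. Your main argument instead proceeds by strong induction, reducing the even case $n=2m$ to the statement for $m$ via parts (1) and (2). Both arguments are elementary and equally short; the inductive version has the mild advantage of making explicit how parts (1)--(3) already encode the content of (4), while the paper's borrow-chain argument is self-contained and avoids the need to fix a convention for $v(0)$.
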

\begin{proof}
If $n=(b_i b_{i-1}\cdots b_1 b_0)_2$ then $2n=(b_i b_{i-1}\cdots b_1 b_0,0)_2$ and $2n+1=(b_i b_{i-1}\cdots b_1 b_0,1)_2$, so $v(2n)=v(n)$ and $v(2n+1)=v(n)+1$.

If the binary representation of $n$ has $t$ trailing ``$0$''s, then the binary representation of $2n$ will have $t+1$ trailing ``$0$''s. This proves $v_2(2n)=v_2(n)+1$.

If $n$ is odd, the rightmost digit of the binary representation of $n$ must be $1$, i.e. there is no trailing ``0'' in the representation. Hence $v_2(n)=0$ for odd integer $n$.

Slightly rewriting $n$ as
$n=(b_i b_{i-1} b_{t+1},1,0,0\cdots 0,0)_2$
where $t=v_2(n)$ shows that
$n-1=(b_i b_{i-1} b_{t+1},0,1,1\cdots 1,1)_2$.
Therefore $\nabla v(n) = v(n) -v(n-1) = 1-v_2(n)$.
\end{proof}

From Lemma \ref{lem:property-v-and-v_2}, it is straightforward to prove the following lemma,
which includes formulae expressing some special DGFs in terms of $\zeta(s)$ and $V_M(s)$.
For completeness, we provide its proof in Appendix \ref{app:DGF}.

\begin{Lemma}\label{lem:DGF-closed}
For $M\geq 1$,
\begin{equation}\label{eq:DGF-odd-v}
\sum_{\mbox{\footnotesize{odd }}j}\frac{v(j)^M}{j^s} = \left(1-\frac{1}{2^s}\right)V_M(s).
\end{equation}
The following DGFs have closed-form formulae in terms of $\zeta(s)$:
\begin{equation}\label{eq:DGF-v2}
\sum_{j=1}^\infty\frac{v_2(j)}{j^s} = \frac{1}{2^s-1}\zeta(s),
\end{equation}
\begin{equation}\label{eq:DGF-nabla-v}
\sum_{j=1}^\infty\frac{\nabla v(j)}{j^s} = \frac{2^s-2}{2^s-1}\zeta(s).
\end{equation}
\end{Lemma}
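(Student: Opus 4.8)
The plan is to obtain all three identities as elementary consequences of Lemma~\ref{lem:property-v-and-v_2}, using the standard technique of splitting a Dirichlet series into its odd-indexed and even-indexed parts and then applying the substitution $j\mapsto 2k$ together with the transformation rules for $v$ and $v_2$. Throughout the argument I would restrict to the half-plane $Re(s)>1$, where every series in sight converges absolutely because $v(j)=O(\log j)$ and $v_2(j)=O(\log j)$; this legitimizes all rearrangements and the manipulation of the resulting functional equations. The closed forms then extend to the rest of the complex plane by analytic continuation of $\zeta$.

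For \eqref{eq:DGF-odd-v}, split $V_M(s)=\sum_{\text{odd }j}v(j)^M j^{-s}+\sum_{\text{even }j}v(j)^M j^{-s}$ and put $j=2k$ in the even part. By part~1 of Lemma~\ref{lem:property-v-and-v_2}, $v(2k)=v(k)$, so the even part equals $2^{-s}\sum_{k\ge1}v(k)^M k^{-s}=2^{-s}V_M(s)$; subtracting gives the odd-indexed sum as $(1-2^{-s})V_M(s)$. For \eqref{eq:DGF-v2}, write $P(s)$ for the DGF of $v_2$. By part~3 the odd-indexed terms all vanish, and by part~2, $v_2(2k)=v_2(k)+1$; hence $P(s)=\sum_{k\ge1}(v_2(k)+1)(2k)^{-s}=2^{-s}\bigl(P(s)+\zeta(s)\bigr)$, and solving this linear equation for $P(s)$ yields $P(s)=\zeta(s)/(2^s-1)$. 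Finally, \eqref{eq:DGF-nabla-v} follows at once from part~4, namely $\nabla v(j)=1-v_2(j)$: since $\sum_{j\ge1}j^{-s}=\zeta(s)$, we get $\sum_{j\ge1}\nabla v(j)\,j^{-s}=\zeta(s)-P(s)=\zeta(s)\bigl(1-1/(2^s-1)\bigr)=\zeta(s)\,(2^s-2)/(2^s-1)$.

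There is no serious obstacle here: the mathematical content lies entirely in Lemma~\ref{lem:property-v-and-v_2}, and what remains is bookkeeping. The only points that need care are (i) confirming absolute convergence on $Re(s)>1$ so that the odd/even splits and the substitution $j=2k$ are valid, and (ii) checking that the boundary conventions ($v(0)=0$, $v_2(1)=0$, and part~4 holding already at $j=1$) keep the index ranges in the functional equations consistent. Once these are in place, the derivations above are purely formal, which is why the full write-up can safely be relegated to Appendix~\ref{app:DGF}.
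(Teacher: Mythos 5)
Your proposal is correct and follows the same route as the paper's proof in Appendix~\ref{app:DGF}: split the Dirichlet series into odd and even parts, apply the transformation rules from Lemma~\ref{lem:property-v-and-v_2}, and solve the resulting linear functional equations, with \eqref{eq:DGF-nabla-v} obtained by subtracting \eqref{eq:DGF-v2} from $\zeta(s)$. The remarks on absolute convergence and boundary conventions are sensible additions but do not change the substance.
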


\subsection{Absolute Convergence of Fourier Series}\label{subsect:back-abs_conv}

In all three problems, evaluating the line integrals of the kernels will reduce to
summations of residues at poles regularly spaced along a vertical line.
These summations will best be expressed as Fourier series.
To be useful, we will need to show that these Fourier series converge absolutely.
Our major tools will be the following two lemmas.

\begin{Lemma}\label{thm:bound-derivative}
Let $\epsilon>0$, $\sigma_0,t_0\in\mathbb{R}$, $t_0\geq 1+\epsilon$ and $f$ be a complex function. If
\begin{enumerate}
\item $f$ is analytic in $X= \{s=\sigma+it : \sigma\geq\sigma_0-\epsilon , |t|\geq t_0-\epsilon\}$  and 
\item $\exists A, B >0$ such that  $\forall \sigma+it \in X$,\   $|f(\sigma+it)| = O(|t|^A\log^B |t|)$,
\end{enumerate}
then, for every fixed integer $q>0$,
$$\forall \sigma\geq\sigma_0,\, \forall |t|\geq t_0, \quad 
|f^{(q)}(\sigma+it)| = O(|t|^A\log^B |t|).$$
\end{Lemma}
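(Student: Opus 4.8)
The plan is to prove this via Cauchy's integral formula for derivatives, which converts a bound on $f$ itself into a bound on $f^{(q)}$ at the cost of a controlled factor. The key observation is that $f$ is analytic on a region $X$ that contains not just the half-plane/strip where we want the derivative bound, but a fixed $\epsilon$-neighborhood of it; this buffer is exactly what Cauchy's formula needs.

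First I would fix $q$, $\sigma \geq \sigma_0$, and $|t| \geq t_0$, and apply Cauchy's formula
$$f^{(q)}(\sigma+it) = \frac{q!}{2\pi i}\oint_{\gamma} \frac{f(z)}{(z-(\sigma+it))^{q+1}}\,dz,$$
where $\gamma$ is the circle of radius $r = \epsilon/2$ (say) centered at $\sigma+it$. Since $r < \epsilon$, every point $z$ on $\gamma$ has real part $\geq \sigma - \epsilon/2 > \sigma_0 - \epsilon$ and imaginary part of absolute value $\geq |t| - \epsilon/2 \geq t_0 - \epsilon/2 > t_0 - \epsilon$ (here using $t_0 \geq 1+\epsilon > \epsilon$ so that $|t| - \epsilon/2$ stays positive and the point stays in $X$ — more carefully, the sign of $\mathrm{Im}(z)$ agrees with the sign of $t$, and $|\mathrm{Im}(z)| \geq |t| - r$). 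Hence $\gamma \subset X$ and the hypothesis gives $|f(z)| = O(|\mathrm{Im}(z)|^A \log^B|\mathrm{Im}(z)|)$ uniformly on $\gamma$.

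Next I would estimate the integral: the length of $\gamma$ is $2\pi r$, the denominator satisfies $|z - (\sigma+it)|^{q+1} = r^{q+1}$ on $\gamma$, and on $\gamma$ we have $|\mathrm{Im}(z)| \leq |t| + r$, so
$$|f^{(q)}(\sigma+it)| \leq \frac{q!}{2\pi}\cdot 2\pi r \cdot \frac{\max_{z\in\gamma}|f(z)|}{r^{q+1}} = \frac{q!}{r^q}\max_{z\in\gamma}|f(z)| = O\!\left((|t|+r)^A \log^B(|t|+r)\right).$$
Since $r = \epsilon/2$ is a fixed constant and $|t| \geq t_0 \geq 1+\epsilon$, we have $(|t|+r)^A \log^B(|t|+r) = O(|t|^A \log^B|t|)$ by the usual monotonicity/asymptotic comparison (for $|t|$ bounded away from $1$ both $x^A$ and $\log^B x$ are, up to constants, comparable at $|t|$ and $|t|+r$). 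Combining, $|f^{(q)}(\sigma+it)| = O(|t|^A\log^B|t|)$, with the implied constant depending only on $q$, $A$, $B$, $\epsilon$, and the implied constant in hypothesis (2), which is what the statement asserts.

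There is no serious obstacle here; the only points requiring care are (i) checking that the whole circle $\gamma$ stays inside $X$ — this is where the condition $t_0 \geq 1+\epsilon$ (ensuring $|t|-r$ does not cross zero and keeping $\log|\mathrm{Im}(z)|$ well-defined and positive) and the $\epsilon$-slack in the definition of $X$ are used — and (ii) confirming uniformity of the implied constant over the ranges of $\sigma$ and $t$, which follows because the radius $r$ and the comparison between $|t|$ and $|t|\pm r$ are uniform. I would also remark that $r$ can be taken to be any fixed value in $(0,\epsilon)$; choosing $r=\epsilon/2$ is merely for concreteness.
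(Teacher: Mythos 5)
Your proof is correct and follows essentially the same route as the paper's: apply Cauchy's integral formula for $f^{(q)}$ over a small circle around $\sigma+it$ that stays inside $X$, then bound the integrand using the hypothesis on $f$. The paper takes the radius to be $\epsilon$ rather than $\epsilon/2$ and is terser about verifying that the circle lies in $X$ and that $(|t|\pm r)^A\log^B(|t|\pm r)=\Theta(|t|^A\log^B|t|)$, both of which you spell out carefully.
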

\begin{proof}
From the Cauchy integral formula,
for all $s=\sigma+it$ with $\sigma\geq\sigma_0$ and $|t|\geq t_0$,
$$f^{(q)}(s) = \frac{q!}{2\pi i}\varoint_\mathcal{C} \frac{f(z)}{(z-s)^{n+1}} dz,$$
where $\mathcal{C} = \{z : |z-s|=\epsilon\}$. Hence
\begin{eqnarray*}
|f^{(q)}(s)| &\leq& \frac{q!}{2\pi}\varoint_\mathcal{C} \left|\frac{f(z)}{(z-s)^{n+1}}\right| dz\\
&\leq& \frac{q!}{2\pi}\times (2\pi\epsilon)\times O(|t|^A\log^B |t|)\times \frac{1}{\epsilon^{n+1}}\\
&=& O(|t|^A\log^B |t|)
\end{eqnarray*}
for fixed $q$ and $\epsilon$.
\end{proof}

Before stating the next lemma,
we clarify that the statement ``$h(s)$ has a pole of order at most $N$ at $s=s_0$'',
allows the possibility that $h(s)$ is analytic at $s=s_0$ (and might even have a zero there).

\begin{Lemma}\label{thm:sum-residues}
Let 
$g(s) = L(s) f(s)\frac {n^s}{s(s+1)}$.
$\forall j \in \mathbb{Z}$, set $\theta_j = \sigma + \frac{2\pi j}{\ln 2}i$. If
\begin{enumerate}
\item $\forall j\in\mathbb{Z}\setminus\{0\}$, $f$ is analytic at $s=\theta_j$,
\item $\exists A<1,\, B\geq 0$, such that for all integers positive integers $q,$
 $|f^{(q)}(\theta_j)| = O(|j|^A\log^B |j|)$ (where the constant in the big $O$ may depend upon $q$)
\item $\forall j\in\mathbb{Z}$, $L(s)$ has a pole of order at most $n_1$ at $s=\theta_j$;\\
furthermore, the coefficients of the Laurent series of $L(s)$ are identical at each $s=\theta_j$,
\item $\frac{f(s)}{s(s+1)}$ has a pole of order at most $n_2$ at $s=\theta_0$,
\end{enumerate}
then the sum of residues at $s=\theta_j$ can be written in the form
\begin{equation}\label{eq:sum-residues}
\sum_{j\in\mathbb{Z}}\mbox{Res}(g(s),s=\theta_j) = \sum_{i=n_1}^{n_1+n_2-1} \lambda_i n^\sigma \lg^i n + \sum_{i=0}^{n_1-1} F_i(\lg n) n^\sigma \lg^i n,
\end{equation}
where the $\lambda_i$'s are constants and $F_i(u)$'s are periodic functions with period one given by their Fourier series $F_i(u) = \sum_{j\in\mathbb{Z}} a_{i,j}e^{2\pi iju}$.
Furthermore, all the Fourier series $F_i(u)$ are absolutely convergent.
\end{Lemma}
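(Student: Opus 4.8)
The plan is to compute the residue of $g$ at each pole $\theta_j$ directly from local expansions, collect the contributions by powers of $\lg n$ (I write $r$ for this power, playing the role of the index $i$ in the statement), and then check the two things that must hold: the poles $\theta_j$ with $j\ne 0$ produce only powers $\lg^r n$ with $r\le n_1-1$ and, summed over $j$, absolutely convergent Fourier series in $\lg n$; while the single pole $\theta_0$ produces a genuine polynomial in $\lg n$ of degree up to $n_1+n_2-1$, supplying the ``constant'' part of the answer.

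First I would fix $j\ne 0$ and expand around $\theta_j$. Writing $s=\theta_j+w$ and using $n^s=n^{\theta_j}e^{w\ln n}$ with $n^{\theta_j}=n^\sigma e^{2\pi i j\lg n}$ (this is where the period-one dependence on $\lg n$ enters), the residue equals $n^\sigma e^{2\pi i j\lg n}$ times the coefficient of $w^{-1}$ in the product of the principal part $\sum_{m=-n_1}^{-1}c_m w^m$ of $L$ (the same for every $j$, by hypothesis~3), the Taylor series of the analytic function $\phi_j(w):=f(\theta_j+w)/\bigl[(\theta_j+w)(\theta_j+w+1)\bigr]$, and $e^{w\ln n}=\sum_{k\ge 0}(\ln n)^k w^k/k!$. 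Extracting $[w^{-1}]$ shows this is a polynomial in $\lg n$ of degree at most $n_1-1$, the top power arising from pairing $c_{-n_1}w^{-n_1}$ with $(\ln n)^{n_1-1}w^{n_1-1}/(n_1-1)!$. Hence $\mathrm{Res}(g,\theta_j)=n^\sigma e^{2\pi i j\lg n}\sum_{r=0}^{n_1-1}a_{r,j}\lg^r n$, where each $a_{r,j}$ is a fixed finite linear combination, with coefficients independent of $j$ and $n$, of the Taylor coefficients $\phi_j^{[p]}$, $0\le p\le n_1-1$.

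The key estimate is the bound on $a_{r,j}$. By the Cauchy product, $\phi_j^{[p]}$ is a combination of $f^{(u)}(\theta_j)$, $0\le u\le p$, and the Taylor coefficients at $w=0$ of $1/\bigl[(\theta_j+w)(\theta_j+w+1)\bigr]$; the latter are $O(|j|^{-2})$ since $|\theta_j|=\Theta(|j|)$, and hypothesis~2 (applied also to $f=f^{(0)}$ itself, which satisfies the same bound in all the intended applications) gives $|f^{(u)}(\theta_j)|=O(|j|^A\log^B|j|)$. Therefore $|a_{r,j}|=O(|j|^{A-2}\log^B|j|)$, and since $A<1$ we have $A-2<-1$, so $\sum_{j\ne 0}|a_{r,j}|<\infty$ for every $r$. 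I expect this to be the main obstacle: organizing the multiplicative bookkeeping of the three expansions and then reading off the decay rate cleanly. The role of the hypothesis $A<1$ is exactly to make the Fourier series converge absolutely after the extra factor $|j|^{-2}$ gained from $1/\bigl(s(s+1)\bigr)$.

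Finally I would treat $\theta_0=\sigma$ separately. There $L$ has a pole of order at most $n_1$ and $f(s)/\bigl(s(s+1)\bigr)$ a pole of order at most $n_2$, so $g$ has a pole of order at most $n_1+n_2$; the same local expansion, now with $n^{\theta_0}=n^\sigma$ and no oscillatory factor, gives $\mathrm{Res}(g,\theta_0)=n^\sigma\sum_{r=0}^{n_1+n_2-1}\mu_r\lg^r n$ for constants $\mu_r$. Adding this to $\sum_{j\ne 0}\mathrm{Res}(g,\theta_j)=n^\sigma\sum_{r=0}^{n_1-1}\lg^r n\,\sum_{j\ne 0}a_{r,j}e^{2\pi i j\lg n}$ and regrouping by powers of $\lg n$ yields the claimed form: the powers $r=n_1,\dots,n_1+n_2-1$ occur only in the $\theta_0$ term, so $\lambda_r=\mu_r$; for $r=0,\dots,n_1-1$ one sets $F_r(u)=\mu_r+\sum_{j\ne 0}a_{r,j}e^{2\pi i j u}$, a period-one function whose Fourier series $\sum_{j\in\mathbb Z}a_{r,j}e^{2\pi i j u}$ (with $a_{r,0}:=\mu_r$) converges absolutely by the estimate above; absolute convergence of the whole sum of residues follows immediately.
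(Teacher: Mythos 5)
Your proposal is correct and follows essentially the same route as the paper: expand each factor of $g$ locally at $\theta_j$, extract the coefficient of $(s-\theta_j)^{-1}$, sort the resulting terms by the power of $\lg n$ coming from $n^s=n^\sigma e^{2\pi i j\lg n}e^{w\ln n}$, treat $\theta_0$ separately (where the pole order can be as high as $n_1+n_2$), and use the $O(|j|^{-2})$ decay from $1/(s(s+1))$ together with hypothesis~2 and $A<1$ to get absolute convergence. The only cosmetic difference is that you bundle $f(s)/(s(s+1))$ into a single analytic factor $\phi_j$ rather than carrying the Laurent series of $f$, $1/s$, and $1/(s+1)$ separately as the paper does; your parenthetical remark that the $q=0$ bound on $f(\theta_j)$ is also needed (hypothesis~2 as stated covers only $q\ge 1$) is a genuine observation that the paper glosses over.
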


\begin{proof}
We first introduce a notation. When $r$ is clear from the context, 
$$[m\,|\,a_m , a_{m+1}, a_{m+2}, \cdots]$$
represents the Laurent series $\sum_{i=0}^\infty a_{m+i} (s-r)^{m+i}$.

We start by stating the Laurent series of each factor of $g(s)$ at $s=\theta_j$,
where $\forall j\in\mathbb{Z}\setminus\{0\}$:
\begin{eqnarray*}
L(s) &=& \left[-n_1 \,\Big|\, l_0, l_1, l_2, l_3, \cdots\right]\\
f(s) &=& \left[0 \,\Big|\, f(\theta_j), f'(\theta_j), \frac{1}{2!}f''(\theta_j), \frac{1}{3!}f^{(3)}(\theta_j),\cdots\right]\\
n^s &=& n^\sigma e^{2\pi ij\lg n}\left[0 \,\Big|\, 1, \ln n, \frac{1}{2!}\ln^2 n, \frac{1}{3!}\ln^3 n, \cdots\right]\\
\frac{1}{s} &=& \left[0 \,\Big|\, \frac{1}{\theta_j}, -\frac{1}{(\theta_j)^2}, \frac{1}{(\theta_j)^3},\cdots\right]\\
\frac{1}{s+1} &=& \left[0 \,\Big|\, \frac{1}{\theta_j+1}, -\frac{1}{(\theta_j+1)^2}, \frac{1}{(\theta_j+1)^3},\cdots\right]
\end{eqnarray*}
The residue of $g(s)$ at $s=\theta_j$ is obtained by multiplying all these series together
and extracting the coefficient of the term $(s-\theta_j)^{-1}$.
The residue will therefore be the sum of terms, each term of the form
$$l_{x_1}\times \frac{1}{x_2!}f^{(x_2)}(\theta_j)\times n^\sigma e^{2\pi ij\lg n}\frac{\ln^{x_3} 2}{x_3!}\lg^{x_3} n\times \frac{(-1)^{x_4}}{(\theta_j)^{x_4+1}}\times \frac{(-1)^{x_5}}{(\theta_j+1)^{x_5+1}},$$
where $\forall i, x_i\geq 0$ and $\sum_{i=1}^5 x_i = n_1-1$.

Hence the sum of these residues, when sorted according to the variable $x_3$, is
\begin{equation}\label{eq:residue-theta_j}
\sum_{x_3=0}^{n_1-1} n^\sigma\left(\lg^{x_3}n\right)\frac{\ln^{x_3}2}{x_3!}\left( \sum_{j\in\mathbb{Z}\setminus\{0\}} J_j(x_1,x_2,x_4,x_5,n_1-1-x_3) e^{2\pi ij\lg n}\right),
\end{equation}
where
$$J_j(x_1,x_2,x_4,x_5,r) = \sum_{x_1+x_2+x_4+x_5=r} \left(l_{x_1}\times \frac{1}{x_2!}f^{(x_2)}(\theta_j)\times\frac{(-1)^{x_4}}{(\theta_j)^{x_4+1}}\times \frac{(-1)^{x_5}}{(\theta_j+1)^{x_5+1}}\right).$$

The relevant Laurent series of $g(s)$ at $s=\theta_0$ are:
\begin{eqnarray*}
L(s) &=& \left[-n_1 \,\Big|\, l_0, l_1, l_2, l_3, \cdots\right]\\
n^s &=& n^\sigma\left[0 \,\Big|\, 1, \ln n, \frac{1}{2!}\ln^2 n, \frac{1}{3!}\ln^3 n, \cdots\right]\\
\frac{f(s)}{s(s+1)} &=& \left[-n_2 \,\Big|\, h_0, h_1, h_2, h_3, \cdots\right]
\end{eqnarray*}
By multiplying all these series together
and extracting the coefficient of the term $(s-\theta_0)^{-1}$,
the residue at $s=\theta_0$ is found to be of the form
\begin{equation}\label{eq:residue-theta_0}
\sum_{i=n_1}^{n_1+n_2-1} \lambda_i n^\sigma\lg^i n + \sum_{i=0}^{n_1-1} \lambda_i n^\sigma\lg^i n.
\end{equation}
The second summation in (\ref{eq:residue-theta_0}) combines with (\ref{eq:residue-theta_j})
to give the second summation in (\ref{eq:sum-residues})
and the first summation in (\ref{eq:residue-theta_0}) gives the first summation in (\ref{eq:sum-residues}).

We now prove the absolute convergence of the Fourier series. Take $M = \max_{0\leq i\leq n_1-1} |l_i|$.
Note that for $q\leq n_1-1$, $|f^{(q)}(\theta_j)| = O(|j|^A\log^B |j|)$. Thus
\begin{eqnarray*}
&& \left| J_j(x_1,x_2,x_4,x_5,n_1-1-x_3) \right|\\
&\leq&
\binom{n_1+2-x_3}{3}\left|l_{x_1}\times \frac{1}{x_2!}f^{(x_2)}(\theta_j) \times \frac{(-1)^{x_4}}{(\theta_j)^{x_4+1}}\times \frac{(-1)^{x_5}}{(\theta_j+1)^{x_5+1}}\right|\\
&\leq&
\binom{n_1+2-x_3}{3}\times M\times O(|j|^A\log^B |j|)\times O\left(\frac{1}{|j|^2}\right)\\
&=& O\left(\frac{\log^B |j|}{|j|^{2-A}}\right).
\end{eqnarray*}
Since $(2-A)>1$, the Fourier series is absolutely convergent.
\end{proof}

To conclude, we note that as we only upper bound the order of poles but do not know their exact order,
$\lambda_i$ may be zero and the $F_i(u)$'s may be constant functions, or even zero functions.

\section{Multidimensional Divide-and-Conquer}\label{sect:MDC}

\subsection{Background of Multidimensional Divide-and-Conquer}

\emph{Multidimensional Divide-and-Conquer} (MDC) was first introduced by Bentley and Shamos \cite{BeSh-1976,BENTL-1980}
in the context of solving multidimensional computational geometry problems.
The generic idea is to solve a problem on $n$ $d$-dimensional points by
(i) first splitting the points into two almost equal subsets and solving the problem seperately
on each subset, then
(ii) taking all $n$ points, projecting them down to $(d-1)$ dimensional space and solving the problem on the projected set, and finally
(iii) constructing a solution to the complete problem by intelligently combining the solutions to the 3 previously solved ones.
The recursion bottoms out when the dimension $d=2$, in which case a straightforward solution is given, or when $n=1$, which has a trivial solution.

The methodology can be applied to give good solutions for many problems,
including the Empirical Cumulative Distribution Function (ECDF) problem,
maxima, range searching, closest pair, and the all nearest neighbour problem.

Of particular interest to us is the \emph{all-points ECDF problem in $\mathbb{R}^k$} (ECDF-$k$).
For two points $x=(x_1,x_2,\cdots ,x_k)$, $y=(y_1,y_2,\cdots ,y_k)\in\mathbb{R}^k$,
we say $x$ \emph{dominates} $y$ if $x_i\geq y_i$ for all $1\leq i\leq k$.
Given a set $S$ of $n$ points in $\mathbb{R}^k$,
the \emph{rank} of a point $x$ is the number of points in $S$ dominated by $x$.
The ECDF-$k$ problem is to compute the rank of each point in $S$.

When $k=2$, a slight modification of bottom-up mergesort will solve ECDF-$2$ in $\Theta(n\log n)$ time.
Monier \cite{MONIE-1980} proposed an MDC algorithm for solving ECDF-$k$ for larger $k$,
based on the description of Bentley \cite{BENTL-1980}.
Monier analyzed the worst-case running time of this algorithm, $T(n,k)$,
described by the following recurrence:
\begin{equation}\label{eq:T_MDC_def}
T(n,k) = 
	\left\{
		\begin{array}{ll}
T\left(\left\lfloor\frac{n}{2}\right\rfloor,k\right) +T\left(\left\lceil\frac{n}{2}\right\rceil,k\right)+T(n,k-1)+n & \mbox{if $n>1, k >2$},\\
			1 & \mbox{if $n=1, k>2$},\\
			n \lg n & \mbox{if $n\geq 1, k=2$}.
		\end{array}
	\right.
\end{equation}

By translation into a combinatorial path-counting problem he derived the first order asymptotic of $T(n,k)$.
More specifically, he showed that, for fixed $k$,
$$T(n,k) = \frac{1}{(k-1)!} n \lg^{k-1} n + \Theta (n \lg^{k-2} n).$$

We will derive exact solutions for the ECDF-$k$ running time using Lemma \ref{lem:solution-f-integral} from \cite{FlGo-1994}.
To do so, we will have to slightly modify the case $k=2$ to have a more precise initial condition.
In what follows we will denote $T(n,k)$ by $f_n^k$.  The recurrences corresponding to (\ref{eq:T_MDC_def}) will be:
\begin{equation}\label{eq:ECDFk-recurrence_1}
f_n^k=
\begin{cases}
f_{\lfloor n/2\rfloor}^k+f_{\lceil n/2\rceil}^k+e^k_n, & n \geq 2\\
0, & n = 1
\end{cases}
\end{equation}
where
\begin{equation}\label{eq:ECDFk-recurrence_2}
e^k_n =
\begin{cases}
f_n^{k-1}+n-1, & k \geq 3 \\
n - 1, & k=2.
\end{cases}
\end{equation}

\subsection{Deriving the DGF}\label{subsect:MDC-DGF}

To use Lemma \ref{lem:solution-f-integral} 
first requires a better understanding of the DGF of $\Delta\nabla f_n^k$,
which we denote by  by $D_{f_k}(s)$.
Start by noting that, directly from the lemma,
$$D_{f_k}(s) = \frac{1}{1-2^{-s}}\sum_{j=1}^{\infty} \frac{\Delta\nabla e_j^k}{j^s}.$$
One can work out directly that $\Delta\nabla e_1^2 =1$ while,
for $j\geq 2$, $\Delta\nabla e_j^2 = 0$. Thus,
\begin{equation}\label{eq:Df2}
D_{f_2}(s) = \sum_{j=1}^{\infty} \frac{\Delta\nabla f_j^2}{j^s} = \frac{1}{1-2^{-s}}\sum_{j=1}^{\infty} \frac{\Delta\nabla e_j^2}{j^s} = \frac{1}{1-2^{-s}}.
\end{equation}

For $k\geq 3$, 
$$\Delta\nabla e_j^k =
\begin{cases}
\Delta\nabla f_j^{k-1}, &\mbox{for $j \geq 2$}\\
e_2^k = f^{k-1}_2 + 1 = \Delta\nabla f^{k-1}_1 + 1, &\mbox{for $j = 1$.}
\end{cases}$$
Hence 
\begin{eqnarray*}
D_{f_k}(s) &=& \frac{1}{1-2^{-s}} \sum_{j=1}^{\infty} \frac{\Delta\nabla e_j^k}{j^s}\\
&=& \frac{1}{1-2^{-s}} \left( \Delta\nabla f_1^{k-1} + 1 + \sum_{j=2}^{\infty} \frac{\Delta\nabla f_j^{k-1}}{j^s} \right)\\
&=& \frac{1}{1-2^{-s}} + \frac{D_{f_{k-1}}(s)}{1-2^{-s}}.
\end{eqnarray*}

Iterating the above recurrence with initial condition  (\ref{eq:Df2}) yields
$$D_{f_k}(s) = \frac{1}{1-2^{-s}} + \frac{1}{(1-2^{-s})^2} + \cdots + \frac{1}{(1-2^{-s})^{k-1}}.$$

From Lemma \ref{lem:solution-f-integral}, for $k>1$,
\begin{eqnarray}\label{eq:f_n^k-recurrence}
f_n^k &=& \frac{n}{2\pi i} \int_{3-i\infty}^{3+i\infty} \left( \sum_{d=1}^{k-1} \frac{1}{(1-2^{-s})^d} \right) \frac{n^s ds}{s(s+1)}\nonumber\\
&=& f_n^{k-1} + \frac{n}{2\pi i} \int_{3-i\infty}^{3+i\infty} \frac{1}{(1-2^{-s})^{k-1}}\frac{n^s ds}{s(s+1)}.
\end{eqnarray}

We note that Flajolet and Golin \cite{FlGo-1994} explicitly solve the $k=2$ boundary case:
$$f_n^2 = n \lg n + n A_0^2(\lg n) + 1$$
where, setting $\beta_j := \frac{2\pi j}{\ln 2}i$,
$$A_0^2(u) = \left(\frac{1}{2}-\frac{1}{\ln 2}\right) + \frac{1}{\ln 2}\sum_{j\in\mathbb{Z}\setminus\{0\}} \frac{1}{\beta_j (\beta_j+1)} e^{2\pi i j u}.$$

\subsection{Evaluation of Integrals}\label{subsect:MDC-evaluation}

We now evaluate the integral in (\ref{eq:f_n^k-recurrence}):
\begin{equation}\label{eq:def-I_fk(s)}
I_k := \frac{n}{2\pi i} \int_{3-i\infty}^{3+i\infty} \frac{1}{(1-2^{-s})^{k-1}}\frac{n^s ds}{s(s+1)}.
\end{equation}

Fix some real $R>0$ and consider the counterclockwise rectangular contour
$\Upsilon = \Upsilon_1 \bigcup \Upsilon_2 \bigcup \Upsilon_3 \bigcup \Upsilon_4$,
where (see Figure \ref{fig:contour_upsilon})
\begin{eqnarray}\label{eq:Upsilon}
&\Upsilon_1 = \{ 3+iy : -R \leq y \leq R \} \quad\quad\quad&\Upsilon_2 = \{ x+iR : -R \leq x \leq 3 \} \nonumber\\
&\Upsilon_3 = \{ -R+iy : -R \leq y \leq R \} \quad\quad&\Upsilon_4 = \{ x-iR : -R \leq x \leq 3 \} 
\end{eqnarray}

Denote the kernel of the integral in (\ref{eq:def-I_fk(s)}) by $K_k(s)$:
\begin{equation}\label{eq:kernel-fk}
K_k(s)=\frac{n^s}{(1-2^{-s})^{k-1}s(s+1)}.
\end{equation}

\begin{figure}[t]
\vspace*{-.1in}
\centering%
\scalebox{0.3}{\includegraphics{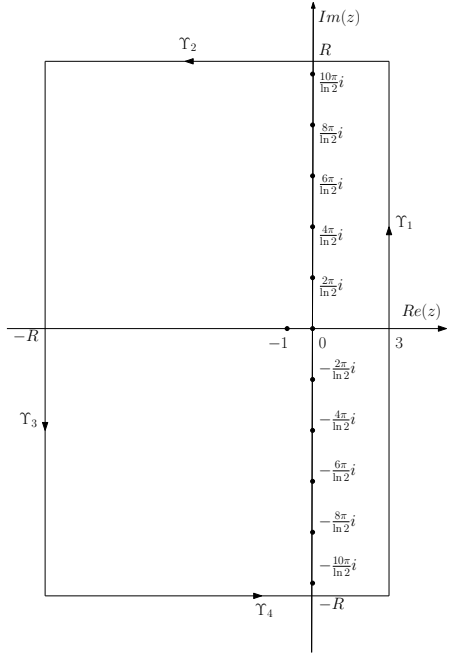}}\\
\caption{Contour $\Upsilon$ defined in (\ref{eq:Upsilon}).
The dots represent the poles of $K_k(s)$ inside $\Upsilon$.}
\label{fig:contour_upsilon}
\end{figure}

Note that $I_k = \lim_{R \rightarrow \infty} \frac{n}{2\pi i} \int_{\Upsilon_1} K_k(s) ds$.

We now show that, for $q=2,3,4$, $\lim_{R \rightarrow \infty} \int_{\Upsilon_q} K_k(s) ds =0$. Thus
$$I_k =\lim_{R\rightarrow\infty}\frac{n}{2\pi i}\int_\Upsilon K_k(s) ds.$$
By the Cauchy residue theorem, $I_k$ will be equal to $n$ times the sum of the residues at the poles inside $\Upsilon$ as $R\rightarrow\infty$.

The poles of $K_k(s)$ inside $\Upsilon$ are:
\begin{enumerate}
\item A pole of order $k$ at $s=0$;
\item Poles of order $(k-1)$ at $s=\beta_j=\frac{2\pi j}{\ln 2}i$, where $j\in \mathbb{Z}\setminus\{0\}$;
\item A simple pole at $s=-1$.
\end{enumerate}
To avoid poles of $K_k(s)$ on $\Upsilon$, we only consider values of $R=R_j := \frac{(2j+0.5) \pi}{\ln 2}$. 

Now consider the horizontal paths $q=2,4$. Then
\begin{eqnarray*}
\left| \int_{\Upsilon_q} K_k(s) ds \right| 
&\leq& \int_{-R_j\pm iR_j}^{3\pm iR_j} \left| K_k(s) \right| ds\\
&\leq&  \left(\max_{-R_j\leq \sigma\leq 3} \left| \frac{n^{\sigma}}{(1\pm 2^{-\sigma}i)^{k-1}} \right| \frac{1}{R_j(R_j+1)}\right) \int_{-R_j}^3 d\sigma = O(j^{-1}).
\end{eqnarray*}

For the leftmost path it is easy to see
$$\left| \int_{\Upsilon_3} K_k(s) ds \right| = O\left( \frac{1}{R_j (2^{k-1}n)^{R_j}} \right) = o(j^{-1}).$$

Hence $I_k$ is $n$ times the sum of the residues at the poles of $K_k(s)$ inside $\Upsilon$,
taking $R_j\rightarrow\infty$.

\begin{Theorem}\label{thm:MDC_equation}
For $k\geq 2$,
\begin{equation}\label{eq:fnk-general-exact}
f_n^k = \frac{1}{(k-1)!} n \lg^{k-1} n + \sum_{m=0}^{k-2} \left(n \lg^m n\right) A_m^k(\lg n) + c_k,
\end{equation}
where $A^k_m(u)$'s are periodic functions with period one,
which are given by absolutely convergent Fourier series
$$A_m^k(u) = \sum_{j\in\mathbb{Z}} a_{k,m,j} e^{2\pi iju}$$
whose coefficients $a_{k,m,j}$ can be determined explicitly.
In particular, the average value of $A_{k-2}^k(u)$ is
$$a_{k,k-2,0} = \frac{1}{(k-2)!}\left(\frac{k+1}{2}-\frac{1}{\ln 2}\right).$$
Furthermore, if $k$ is even, $c_k = 1$; if $k$ is odd, $c_k = 0$. 
\end{Theorem}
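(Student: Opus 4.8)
The plan is to prove Theorem~\ref{thm:MDC_equation} by induction on $k$, using the recurrence~(\ref{eq:f_n^k-recurrence}) together with the residue computation set up in Section~\ref{subsect:MDC-evaluation}. The base case $k=2$ is exactly the formula $f_n^2 = n\lg n + nA_0^2(\lg n)+1$ quoted from \cite{FlGo-1994}, which matches~(\ref{eq:fnk-general-exact}) with leading term $\frac{1}{1!}n\lg n$, one periodic term $A_0^2$, and $c_2=1$. For the inductive step, I would assume the claimed form for $f_n^{k-1}$ and use $f_n^k = f_n^{k-1} + I_k$, so the entire task reduces to evaluating the single integral $I_k$ defined in~(\ref{eq:def-I_fk(s)}) with kernel $K_k(s)$ from~(\ref{eq:kernel-fk}). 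The excerpt has already shown that the contributions along $\Upsilon_2,\Upsilon_3,\Upsilon_4$ vanish as $R_j\to\infty$, so $I_k$ equals $n$ times the sum of residues of $K_k(s)$ at $s=0$, at $s=\beta_j$ ($j\neq 0$), and at $s=-1$.

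Next I would compute each family of residues. The residue at $s=-1$ is a simple-pole residue of $\frac{n^s}{(1-2^{-s})^{k-1}s(s+1)}$, giving a constant times $n^{-1}$; combined with the outer factor $n$ this contributes a pure constant, and I expect a parity argument (the sign of $(1-2)^{-(k-1)} = (-1)^{k-1}$) to show this constant is $0$ when $k$ is odd and contributes to $c_k=1$ when $k$ is even — matching the stated $c_k$, which by induction also absorbs the constant $c_{k-1}$. The residues at $s=\beta_j$ for $j\neq 0$ come from a pole of order $k-1$; here I would invoke Lemma~\ref{thm:sum-residues} with $L(s)=(1-2^{-s})^{-(k-1)}$ (whose Laurent coefficients at each $\beta_j$ are indeed identical, since $2^{-s}$ is periodic with period $\beta_1$), $f(s)\equiv 1$, $n_1 = k-1$, $n_2 = 1$. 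That lemma directly yields a sum of residues of the form $\sum_{i=k-1}^{k-1}\lambda_i n^\sigma\lg^i n + \sum_{i=0}^{k-2} F_i(\lg n)n^\sigma\lg^i n$ at $\sigma=0$, i.e.\ a term $\lambda_{k-1}n\lg^{k-1}n$ plus lower-order periodic terms, all with absolutely convergent Fourier series. The residue at $s=0$ is a pole of order $k$ and contributes the dominant $n\lg^{k-1}n$ piece plus further $n\lg^m n$ terms; I would extract it by multiplying the Laurent series of $(1-2^{-s})^{-(k-1)}$, $n^s$, $1/s$, $1/(s+1)$ at $s=0$ and reading off the coefficient of $s^{-1}$. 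Adding the $s=0$ contribution to the $\lg^{k-1}n$ coefficient from the $\beta_j$ residues, and adding $f_n^{k-1}$'s leading $\frac{1}{(k-2)!}n\lg^{k-2}n$, should telescope to the claimed $\frac{1}{(k-1)!}n\lg^{k-1}n$; here the key identity is that $(1-2^{-s})^{-(k-1)} \sim (s\ln 2)^{-(k-1)}$ near $s=0$, so the order-$k$ pole of $K_k$ has leading behavior $\frac{n^s}{(s\ln2)^{k-1}s}$, whose residue's top $\lg$-power coefficient is $\frac{1}{(k-1)!(\ln2)^{k-1}}\cdot(\ln 2)^{k-1} = \frac{1}{(k-1)!}$ after the $n^s$ expansion supplies $\frac{(\ln n)^{k-1}}{(k-1)!}$.

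To pin down the stated average value $a_{k,k-2,0} = \frac{1}{(k-2)!}\bigl(\frac{k+1}{2}-\frac{1}{\ln 2}\bigr)$ of $A_{k-2}^k(u)$, I would track the constant ($j=0$) Fourier coefficient of the $\lg^{k-2}n$ term, which receives contributions from three sources: the $j=0$ Fourier coefficient $a_{k-1,k-3,0}$ of the corresponding term of $f_n^{k-1}$ via the recurrence, the $\lg^{k-2}n$-coefficient in the order-$k$ residue of $K_k$ at $s=0$ (obtained by expanding $(1-2^{-s})^{-(k-1)}$ to one more order: $(1-2^{-s})^{-(k-1)} = (s\ln2)^{-(k-1)}\bigl(1 + \frac{k-1}{2}s\ln 2 + O(s^2)\bigr)$, together with the $-1/s$ and $-1/(s+1)$ expansions), and the constant term coming from the $\beta_j$-residue sum via Lemma~\ref{thm:sum-residues}. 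I would compute each of these explicitly for general $k$ and verify they sum to the asserted value; checking $k=2$ against the known $a_{2,0,0}=\frac12-\frac1{\ln2}$ gives a sanity check on the bookkeeping.

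The main obstacle will be the constant-bookkeeping, not the structural argument: carefully expanding the order-$k$ Laurent series at $s=0$ (involving the Stirling-type coefficients of $(1-2^{-s})^{-(k-1)}$, i.e.\ the expansion of $\bigl(\frac{s}{1-e^{-s\ln 2}}\bigr)^{k-1}$, which brings in Bernoulli-number-like constants), combining it correctly with the inductive contribution from $f_n^{k-1}$, and disentangling which pieces land in the pure-constant $\lambda_i$'s versus the periodic $F_i$'s, all while keeping the $\ln 2$ powers straight. The parity claim for $c_k$ also requires some care: one must check that the only purely-constant contribution that is \emph{not} of the form $n\lg^m n$ is the $s=-1$ residue scaled by $n$, that $f_n^{k-1}$ contributes $c_{k-1}$ unchanged, and that $c_{k-1} + (\text{residue at }s=-1) $ equals $c_k$ with the claimed alternating behavior — essentially because $\mathrm{Res}_{s=-1} K_k \cdot n = \frac{n^{-1}}{(1-2)^{k-1}\cdot(-1)}\cdot n = \frac{(-1)^k}{(-1)^{k-1}} \cdot(\pm1)$, which toggles sign with $k$ and cancels against $c_{k-1}$ on odd steps while doubling contextually on even steps; I would verify the exact value equals $0$ or $1$ as stated rather than merely its parity.
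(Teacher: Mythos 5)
Your proposal follows the paper's proof essentially step for step: induction on $k$ with the Flajolet--Golin base case at $k=2$, the recurrence $f_n^k = f_n^{k-1} + I_k$, contour integration over $\Upsilon$, the $(-1)^k$ residue at $s=-1$ driving the parity of $c_k$, and Lemma~\ref{thm:sum-residues} applied with $\sigma=0$, $L(s)=(1-2^{-s})^{-(k-1)}$, $f\equiv 1$, $n_1=k-1$, $n_2=1$ to package the residues at $s=\beta_j$. Your proposal to explicitly expand the order-$k$ pole at $s=0$ to pin down $\lambda_{k-1}=1/(k-1)!$ is precisely what the paper asserts (``can be explicitly calculated'') but does not display, so this is a sound elaboration rather than a different route. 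One small point of wording to tighten: the $n\lg^{k-1}n$ coefficient comes \emph{only} from the order-$k$ pole at $s=\beta_0=0$ (the $\beta_j$, $j\neq 0$, poles have order $k-1$ and contribute at most $\lg^{k-2}n$), and $f_n^{k-1}$'s leading $\frac{1}{(k-2)!}n\lg^{k-2}n$ does not ``telescope'' into it but is simply absorbed into $A^{k}_{k-2}$; there is no cancellation across degrees to check.
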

\begin{proof}
We proceed by induction on $k$. As previously mentioned, for $k=2$ this theorem was already proved by Flajolet and Golin \cite{FlGo-1994}.

Now assume that (\ref{eq:fnk-general-exact}) is true for $k=k_0-1$. 
The residue of $K_{k_0}(s)$ at $s=-1$ is
$$\mbox{Res}\left(K_{k_0}(s),s=-1\right)=\frac{(-1)^{k_0}}{n}.$$

We can now apply Lemma \ref{thm:sum-residues}, by taking $\sigma=0$, $L(s) = (1-2^{-s})^{-(k_0-1)}$ (its Laurent series coefficients at each $s=\beta_j$ are identical) and $f(s) = 1$. Since $f^{(q)}(s)\equiv 0$ when $q\geq 1$, we may take $A=B=0$. The order of poles of $L(s)$ at $s=\beta_j$ is $k_0-1$ and the order of pole of $\frac{f(s)}{s(s+1)}$ at $s=\beta_0=0$ is $1$.
 
The sum of residues at $s=\beta_j$, where $j\in\mathbb{Z}$, is given by
$$\lambda_{k_0-1} \lg^{k_0-1} n + \sum_{m=0}^{k_0-2}\left(\lg^m n\right) B_m^{k_0}(\lg n),$$
where $B_m^{k_0}(u)$'s are periodic functions with period one which are given by absolutely convergent Fourier series. $\lambda_{k_0-1}$ can be explicitly calculated to be $1/(k_0-1)!$.

Hence by (\ref{eq:f_n^k-recurrence}),
\begin{eqnarray*}
&& f_n^{k_0}\\
&=& f_n^{k_0-1} + n\left[ \frac{\lg^{k_0-1} n}{(k_0-1)!} + \sum_{m=0}^{k_0-2}\left(\lg^m n\right) B_m^{k_0}(\lg n) + \frac{(-1)^{k_0}}{n} \right]\\
&=& \frac{n \lg^{k_0-2} n}{(k_0-2)!} + \sum_{m=0}^{k_0-3} \left(n \lg^m n\right) A_m^{k_0-1}(\lg n) + c_{k_0-1}\\
&&\qquad + \frac{n\lg^{k_0-1} n}{(k_0-1)!} + \sum_{m=0}^{k_0-2}\left(n \lg^m n\right) B_m^{k_0}(\lg n) + (-1)^{k_0}\\
&=& \frac{n\lg^{k_0-1} n}{(k_0-1)!} + \left(\frac{1}{(k_0-2)!} + B_{k_0-2}^{k_0}(\lg n)\right)n\lg^{k_0-2} n \\
&&\qquad + \sum_{m=0}^{k_0-3}\left(A_m^{k_0-1}(\lg n) + B_m^{k_0}(\lg n)\right)n\lg^m n + c_{k_0-1} + (-1)^{k_0}.
\end{eqnarray*}

Letting $A_{k_0-2}^{k_0}(u) := \frac{1}{(k_0-2)!} + B_{k_0-2}^{k_0}(u)$ and $A_m^{k_0}(u) := A_m^{k_0-1}(u) + B_m^{k_0}(u)$ for $m=0,1,\cdots,k_0-2$ proves  (\ref{eq:fnk-general-exact}).
The average value of $A_{k-2}^k(u)$ is found by expressing all the Laurent series (in the proof of Lemma \ref{thm:sum-residues}) explicitly.

Finally, since $c_k =c_{k_0-1} + (-1)^{k_0}$,
$c_k$ alternates between being even and odd with $c_2=1$.
\end{proof}

\section{Weighted Digital Sums of the First Type}\label{sect:WDS1}
We now analyze $TS_M(n) = \sum_{j < n} S_M(j)$
as defined in (\ref{eq:def-SMn}) and (\ref{eq:def-TSMn}).
By Lemma \ref{thm:key}, this reduces to evaluating
\begin{equation}\label{eq:TSMn-integral-raw-dgf}
TS_M(n) = \frac{1}{2\pi i} \int_{c-i\infty}^{c+i\infty} \left( \sum_{j=1}^{\infty} \frac{\nabla S_M(j)}{j^s} \right)\frac{n^s ds}{s(s+1)}.
\end{equation}

\subsection{Deriving the DGF}\label{subsect:WDS1-DGF}

We start by deriving a closed form for 
\begin{equation}\label{eq:def-AMs}
A_M(s) := \sum_{j=1}^{\infty} \frac{\nabla S_M(j)}{j^s}.
\end{equation}

Recall that $S_M(n) = \sum_{t=0}^i t^{\overline{M}} b_t 2^t$.
Observe that if $n=(b_i b_{i-1}\cdots b_1 b_0)_2$, then 
$$2n=(b_i b_{i-1}\cdots b_1 b_0,0)_2\quad\mbox{and}\quad 2n+1=(b_i b_{i-1}\cdots b_1 b_0,1)_2.$$
In particular, when $M\geq 1$,
the weight $t^{\overline{M}}$ for the rightmost digit $(t=0)$ is always zero, so
\begin{equation}\label{eq:SM(2n+1)-recurrence}
S_M(2n+1) = S_M(2n).
\end{equation}

Next, observe that
\begin{equation}\label{eq:S1(2n)-recurrence}
S_1(2n) = \sum_{t=0}^{i} (t+1)b_t 2^{t+1} = 2\sum_{t=0}^{i} t b_t 2^t + 2 \sum_{t=0}^{i} b_t 2^t = 2 S_1(n) + 2n
\end{equation}
and for $M\geq 2$,
\begin{eqnarray}
S_M(2n)	&=& \sum_{t=0}^{i} (t+1)^{\overline{M}} b_t 2^{t+1} \nonumber\\
			&=& 2 \sum_{t=0}^i t^{\overline{M}} b_t 2^t + M \sum_{t=0}^{i} (t+1)^{\overline{M-1}} b_t 2^{t+1}\nonumber\\
&=& 2 S_M(n) + M S_{M-1}(2n). \label{eq:SM(2n)-recurrence}
\end{eqnarray}

These facts lead to:
\begin{Lemma}\label{lem:dgf-nabla-SM-exact}
\begin{equation}
\label{eq:AM_def}
A_M(s)=  M! \frac{2^{(M-1)(s-1)}}{(2^{s-1}-1)^M} \zeta(s).
\end{equation}
\end{Lemma}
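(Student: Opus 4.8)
The plan is to compute $A_M(s) = \sum_{j\geq 1} \nabla S_M(j) j^{-s}$ by first establishing a recurrence in $M$ for $A_M(s)$ using the identities (\ref{eq:SM(2n+1)-recurrence})--(\ref{eq:SM(2n)-recurrence}), then solving that recurrence with the base case $M=1$, and finally recognizing the closed form. The key observation is that $\nabla S_M$ can be evaluated separately on even and odd arguments. From (\ref{eq:SM(2n+1)-recurrence}) we get $\nabla S_M(2n+1) = S_M(2n+1) - S_M(2n) = 0$ for $M\geq 1$, so only even indices contribute to the Dirichlet series; this already explains the $2^{s-1}$ scaling since summing over $j=2n$ pulls out a factor $2^{-s}$. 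Meanwhile $\nabla S_M(2n) = S_M(2n) - S_M(2n-1) = S_M(2n) - S_M(2(n-1))$ by (\ref{eq:SM(2n+1)-recurrence}) applied to $2n-1 = 2(n-1)+1$, and then (\ref{eq:SM(2n)-recurrence}) or (\ref{eq:S1(2n)-recurrence}) turns this difference into something expressible in terms of $\nabla S_M(n)$, $\nabla S_{M-1}(2n)$, and $\nabla(2n) = 2$.

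Concretely, first I would handle $M=1$. Using (\ref{eq:S1(2n)-recurrence}), $S_1(2n) - S_1(2(n-1)) = 2(S_1(n) - S_1(n-1)) + 2(n-(n-1)) = 2\nabla S_1(n) + 2$, so $\nabla S_1(2n) = 2\nabla S_1(n) + 2$. Then
\begin{eqnarray*}
A_1(s) = \sum_{n\geq 1}\frac{\nabla S_1(2n)}{(2n)^s} = \frac{1}{2^s}\sum_{n\geq 1}\frac{2\nabla S_1(n)+2}{n^s} = \frac{2}{2^s}A_1(s) + \frac{2}{2^s}\zeta(s),
\end{eqnarray*}
which solves to $A_1(s) = \dfrac{2}{2^s-2}\zeta(s) = \dfrac{1}{2^{s-1}-1}\zeta(s)$, matching (\ref{eq:AM_def}) at $M=1$. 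For $M\geq 2$, (\ref{eq:SM(2n)-recurrence}) gives $S_M(2n) - S_M(2(n-1)) = 2\nabla S_M(n) + M(S_{M-1}(2n) - S_{M-1}(2(n-1))) = 2\nabla S_M(n) + M\nabla S_{M-1}(2n)$, hence $\nabla S_M(2n) = 2\nabla S_M(n) + M\nabla S_{M-1}(2n)$. Summing $j^{-s}$ over even $j=2n$ yields
\begin{eqnarray*}
A_M(s) = \frac{2}{2^s}A_M(s) + \frac{M}{2^s}\sum_{n\geq 1}\frac{\nabla S_{M-1}(2n)}{n^s} = \frac{2}{2^s}A_M(s) + M\,A_{M-1}(s),
\end{eqnarray*}
where the last equality uses that $\nabla S_{M-1}$ also vanishes on odd arguments (valid for $M-1\geq 1$; the case $M=2$ needs $\nabla S_1(2n+1)=0$, which holds), so $\sum_n \nabla S_{M-1}(2n) n^{-s} = 2^s\sum_j \nabla S_{M-1}(j) j^{-s} = 2^s A_{M-1}(s)$. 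This rearranges to $A_M(s) = \dfrac{M\,2^{s}}{2^s-2}A_{M-1}(s) = \dfrac{M\,2^{s-1}}{2^{s-1}-1}A_{M-1}(s)$.

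Finally I would iterate this recurrence down to $M=1$:
$$A_M(s) = \left(\prod_{m=2}^{M}\frac{m\,2^{s-1}}{2^{s-1}-1}\right)A_1(s) = \frac{M!}{1}\cdot\frac{2^{(M-1)(s-1)}}{(2^{s-1}-1)^{M-1}}\cdot\frac{\zeta(s)}{2^{s-1}-1} = M!\,\frac{2^{(M-1)(s-1)}}{(2^{s-1}-1)^M}\zeta(s),$$
which is exactly (\ref{eq:AM_def}). For rigor I should note the region of absolute convergence: $\nabla S_M(j) = O(j\log^M j)$ (each term $t^{\overline M}b_t 2^t$ with $t = \Theta(\log j)$ and the sum telescoping), so the series converges for $\mathrm{Re}(s) > 2$, which is where all manipulations above are justified; the closed form then extends $A_M(s)$ meromorphically to the whole plane.

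The main obstacle I anticipate is purely bookkeeping rather than conceptual: one must be careful that the odd-argument vanishing $\nabla S_{M-1}(2n+1) = 0$ holds for the relevant range of $M$ (it uses $M-1\geq 1$, i.e. $M\geq 2$, which is exactly the range where (\ref{eq:SM(2n)-recurrence}) is invoked), and that the re-indexing from a sum over even $j$ to a sum over all $n$ correctly accounts for the $2^{-s}$ factors at each stage. Getting the constant and the exponent of $2^{s-1}$ right in the final product is the only place a slip is likely. Everything else follows directly from the three recurrences (\ref{eq:SM(2n+1)-recurrence})--(\ref{eq:SM(2n)-recurrence}) already established in the text together with the definition of $\zeta(s)$.
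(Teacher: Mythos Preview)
Your proof is correct and follows the same inductive strategy as the paper: establish the base case $A_1(s)=(2^{s-1}-1)^{-1}\zeta(s)$ from (\ref{eq:S1(2n)-recurrence}) and (\ref{eq:SM(2n+1)-recurrence}), then use (\ref{eq:SM(2n)-recurrence}) to step up in $M$. Your execution is in fact a bit cleaner than the paper's: you extract the one-step recurrence $A_M(s)=\dfrac{M\cdot 2^{s-1}}{2^{s-1}-1}\,A_{M-1}(s)$ and telescope, whereas the paper first iterates (\ref{eq:SM(2n)-recurrence}) to expand $S_k(2n)$ fully in terms of $S_{k-i}(n)$ for all $i$, invokes the induction hypothesis for every $A_{k-i}(s)$, and then collapses the resulting geometric-type sum $1+\frac{1}{2^{s-1}}\sum_{i=1}^{k-1}\bigl(\frac{2^{s-1}}{2^{s-1}-1}\bigr)^{k-i}$ by hand.
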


\begin{proof}
The proof is by induction on $M$. When $M=1$, by (\ref{eq:SM(2n+1)-recurrence}) and (\ref{eq:S1(2n)-recurrence}), we get
$$\nabla S_1(2n) = 2\nabla S_1(n) + 2\quad\mbox{and}\quad \nabla S_1(2n+1) = 0.$$

Hence
$$A_1(s) = \sum_{j=1}^{\infty} \frac{\nabla S_1(j)}{j^s} = \sum_{l=1}^{\infty} \frac{\nabla S_1(2l)}{(2l)^s} = \sum_{l=1}^{\infty} \frac{2\nabla S_1(l)+2}{(2l)^s} = \frac{1}{2^{s-1}}(A_1(s)+\zeta(s)).$$
Then  $A_1(s) = (2^{s-1}-1)^{-1}\zeta(s)$ and the lemma is proved for $M=1$.

Now assume the lemma is true for $M<k$. Iterating (\ref{eq:SM(2n)-recurrence}) gives 
$$S_k(2n) = 2S_k(n) + 2 \left(\sum_{i=1}^{k-1} k^{\underline{i}} S_{k-i}(n)\right) + 2k!n,$$
where $k^{\underline{i}} = k(k-1)\cdots (k-i+1)$ is the \ith{i} falling factorial of $k$.

Appling  (\ref{eq:SM(2n+1)-recurrence}) gives
\begin{eqnarray*}
\nabla S_k(2n) &=& 2 \nabla S_k(n) + 2\left(\sum_{i=1}^{k-1} k^{\underline{i}} \nabla S_{k-i}(n)\right) + 2k!,\\
\nabla S_k(2n+1) &=& 0.
\end{eqnarray*}

Substituting the above two formulae into (\ref{eq:def-AMs}) yields
\begin{eqnarray*}
A_k(s) &=& \sum_{j=1}^\infty\frac{\nabla S_k(j)}{j^s} = \sum_{l=1}^\infty\frac{\nabla S_k(2l)}{(2l)^s}\\
&=& \sum_{l=1}^\infty\left(\frac{2 \nabla S_k(l) + 2\left(\sum_{i=1}^{k-1} k^{\underline{i}} \nabla S_{k-i}(l)\right) + 2k!}{(2l)^s}\right)\\
&=& \frac{1}{2^{s-1}}A_k(s) + \frac{1}{2^{s-1}}\sum_{i=1}^{k-1}k^{\underline{i}} A_{k-i}(s) + \frac{k!}{2^{s-1}}\zeta(s)\\
&=& \frac{1}{2^{s-1}}A_k(s) + \frac{1}{2^{s-1}}\left[\sum_{i=1}^{k-1}k^{\underline{i}}  (k-i)!\frac{2^{(k-i-1)(s-1)}}{(2^{s-1}-1)^{k-i}}\zeta(s) + k!\zeta(s)\right]\\
&=& \frac{1}{2^{s-1}}A_k(s) + \frac{k!\zeta(s)}{2^{s-1}}\left(1 + \sum_{i=1}^{k-1} \frac{2^{(k-i-1)(s-1)}}{(2^{s-1}-1)^{k-i}}\right)\\
&=& \frac{1}{2^{s-1}}A_k(s) + \frac{k!\zeta(s)}{2^{s-1}}\left(1 + \frac{1}{2^{s-1}} \sum_{i=1}^{k-1}\left(\frac{2^{s-1}}{2^{s-1}-1}\right)^{k-i}\right)\\
&=& \frac{1}{2^{s-1}}A_k(s) + \frac{k!\zeta(s)}{2^{s-1}}\left(\frac{2^{s-1}}{2^{s-1}-1}\right)^{k-1}
\end{eqnarray*}
and hence
$$A_k(s) = k! \frac{2^{(k-1)(s-1)}}{(2^{s-1}-1)^k} \zeta(s).$$
\end{proof}

\subsection{Evaluation of the Integral}\label{subsect:WDS1-evaluation}

Substituting the result of Lemma \ref{lem:dgf-nabla-SM-exact} into the integral of
 (\ref{eq:TSMn-integral-raw-dgf}) gives
\begin{equation}\label{eq:TSMn-integral-exact-dgf}
TS_M(n) = \frac{M!}{2\pi i} \int_{3-i\infty}^{3+i\infty} \frac{2^{(M-1)(s-1)}}{(2^{s-1}-1)^M} \zeta(s)\frac{n^s ds}{s(s+1)}.
\end{equation}

Fix some real $R>0$ and consider the counterclockwise rectangular contour
$\Gamma = \Gamma_1\cup\Gamma_2\cup\Gamma_3\cup\Gamma_4$, where (see Figure \ref{fig:contour_gamma1})
\begin{eqnarray}\label{eq:Gamma}
&\Gamma_1 = \{ 3+iy : -R \leq y \leq R \},
\quad\quad\quad&
\Gamma_2 = \{ x+iR : -1/4 \leq x \leq 3 \}, \nonumber\\
&\Gamma_3 = \{ -1/4+iy : -R \leq y \leq R \},\quad
&\Gamma_4 = \{ x-iR : -1/4 \leq x \leq 3 \}
\end{eqnarray}

\begin{figure}[t]
\vspace*{-.1in}
\centering%
\scalebox{0.38}{\includegraphics{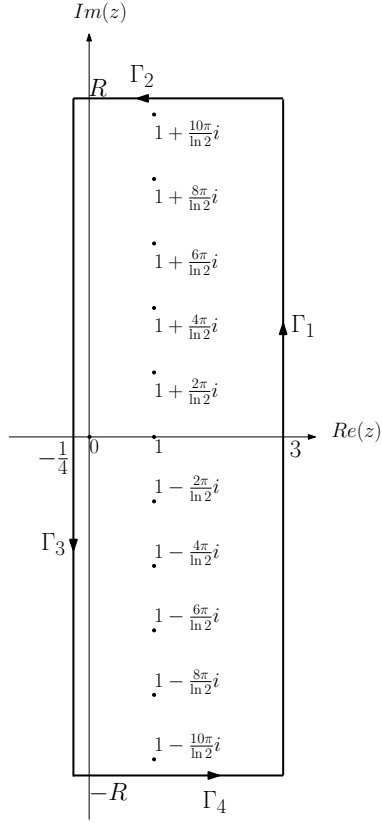}}
\caption[The Contour $\Gamma$ for evaluating $TS_M(n)$]{Contour $\Gamma$ from  (\ref{eq:Gamma}). The dots represent the poles of $K_M(s)$ inside $\Gamma$.}
\label{fig:contour_gamma1}
\end{figure}

Denote the kernel of the integral in (\ref{eq:TSMn-integral-exact-dgf}) by $K_M(s)$:
\begin{equation}\label{eq:kernel-TM}
K_M(s) = \frac{2^{(M-1)(s-1)}\zeta(s)n^s}{(2^{s-1}-1)^Ms(s+1)}.
\end{equation}
Note that $TS_M(n) = \lim_{R \rightarrow \infty} \frac {M!} {2\pi i} \int_{\Gamma_1} K_M(s) ds$. As in the MDC case, we now  show that $\lim_{R\rightarrow\infty}\int_{\Gamma_q} K_M(s) ds = 0$ for $q=2,3,4$. Thus 
$$TS_M(n) = \lim_{R\rightarrow\infty}\frac{M!}{2\pi i}\int_\Gamma K_M(s) ds.$$
Hence by the Cauchy residue theorem, $TS_M(n)$ will be equal to $M!$ times the sum of the residues at the poles inside $\Gamma$ as $R\rightarrow\infty$.

We know that $\zeta(s)$ has a simple pole at $s=1$. The poles of $K_M(s)$ inside $\Gamma$ are:
\begin{enumerate}
\item A pole of order $(M+1)$ at $s=1$;
\item Poles of order $M$ at $s=\alpha_j := 1 + \frac{2\pi j}{\ln 2}i$, where $j\in\mathbb{Z}\setminus\{0\}$;
\item A simple pole at $s=0$.
\end{enumerate}

To avoid  poles of $K_M(s)$ on $\Gamma$, we again only consider values of 
$R= R_j = \frac{(2j+0.5) \pi}{\ln 2}$.

To show that $\int_{\Gamma_q} K_M(s) ds=0$ for $q=2,3,4$ as $R\rightarrow\infty$, we
need the  following two lemmas.

\begin{Lemma}\label{thm:bound_top_bottom_zeta_general}
Consider integral 
$$I(R) = \int_{-a+iR}^{3+iR} f(s) \zeta(s) n^s ds,$$
where $0<a\leq\frac{5}{4}$. 
Furthermore, suppose that for $s=\sigma+it$ with $-a\leq\sigma\leq 3$, $|f(s)|=O(|t|^{-2})$. Then, both as
$R\rightarrow\infty$ and $R\rightarrow -\infty$, $I(R)\rightarrow 0$.
\end{Lemma}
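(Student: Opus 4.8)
The plan is to split the integral $I(R)$ into three pieces according to where $\sigma$ lies relative to the critical strip, using the $\zeta$-bound of Lemma \ref{lem:zeta-bound}, and to show each piece tends to $0$ as $|R|\to\infty$. Concretely, write
$$I(R) = \left(\int_{-a+iR}^{0+iR} + \int_{0+iR}^{1+iR} + \int_{1+iR}^{3+iR}\right) f(s)\zeta(s) n^s\, ds.$$
On the segment $\sigma \in [1,3]$ we have $\tau(\sigma) = 0$, so $\zeta(s) = O(\log R)$; on $\sigma\in[0,1]$ we have $\tau(\sigma)\le \frac12$, so $\zeta(s) = O(R^{1/2}\log R)$; and on $\sigma\in[-a,0]$ we have $\tau(\sigma) = \frac12 - \sigma \le \frac12 + a \le \frac12 + \frac54 = \frac74$, so $\zeta(s) = O(R^{7/4}\log R)$. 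In every case $\zeta$ grows at most like a fixed power of $R$ times a log.

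The key point is that on the horizontal segment $s = \sigma + iR$ the factor $|f(s)| = O(R^{-2})$ decays faster than any of these growth rates, and $|n^s| = n^\sigma \le n^3$ is bounded uniformly on the segment (since $\sigma$ ranges over the fixed bounded interval $[-a,3]$). Hence on each sub-segment the integrand is bounded in absolute value by $O(n^3 \cdot R^{-2}\cdot R^{7/4}\log R) = O(R^{-1/4}\log R)$, and the length of each sub-segment is a fixed constant (at most $a+3$). Therefore $|I(R)| = O(R^{-1/4}\log R) \to 0$ as $R\to\infty$, and the identical estimate with $|R|$ in place of $R$ handles $R\to-\infty$ since the $\zeta$-bounds depend only on $|t| = |R|$. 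One should note that we only need the bound along the horizontal lines $\mathrm{Im}(s) = \pm R$, so the restriction to $R = R_j$ avoiding the poles is not even required here — but if one wants a uniform statement it is harmless to pass to the limit along $R_j$.

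The main (minor) obstacle is purely bookkeeping: verifying that the exponent coming out of the combination ``(decay of $f$) minus (growth of $\zeta$)'' is genuinely negative in the worst region $\sigma\in[-a,0]$. With $a\le \frac54$ the worst exponent is $-2 + \frac74 = -\frac14 < 0$, so the argument goes through; this is exactly why the hypothesis pins $a$ to $\frac54$ rather than allowing it to be arbitrarily large. No residue computation or contour deformation is needed — the lemma is just a uniform ``horizontal segments vanish'' estimate, and the proof is a two-line application of the triangle inequality once the three-region case split and the $\zeta$-bound are in place.
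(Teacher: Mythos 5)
Your proposal is correct and is essentially the paper's own argument: the paper simply applies the single worst-case bound $|\zeta(s)| = O(|R|^{7/4}\log|R|)$ valid on the whole segment (since $\Re(s)\ge -5/4$ there), combines it with $|f|=O(|R|^{-2})$ and $|n^s|\le n^3$, multiplies by the constant segment length $3+a$, and concludes $|I(R)| = O(|R|^{-1/4}\log|R|)\to 0$. Your three-region case split is a cosmetic refinement (you end up using the $\sigma\in[-a,0]$ worst case anyway), and your explicit remark that the estimate is in $|R|$ and so covers $R\to-\infty$ is a small tidiness gain over the paper, which states only the $R\to\infty$ limit.
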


\begin{proof}
Along the path of the integral, $\Re(s)\geq-5/4$. Lemma \ref{lem:zeta-bound} gives the bound
$$|\zeta(s)| = O(|R|^{7/4} \log |R|).$$

Together with the given fact that $|f(s)|=O\left(|t|^{-2}\right)$,
\begin{eqnarray*}
\left| \int_{-a+iR}^{3+iR} f(s) \zeta(s) n^s ds\right| & \leq & \int_{-a+iR}^{3+iR} \left| f(s) \zeta(s) n^s \right| ds \\
& \leq & \int_{-a+iR}^{3+iR} \left(O(|R|^{-2}) \times O(|R|^{7/4} \log |R|) \times n^3\right) ds \\
& \leq & (3+a) O(|R|^{-1/4} \log |R|) n^3 \\
& \rightarrow & 0
\end{eqnarray*}
as $R\rightarrow\infty$.
\end{proof}

\begin{Lemma}\label{thm:bound_left_zeta_general}
Suppose
$$g(s) = \sum_{j=0}^{\infty} g_j (K_j)^s$$
for some real sequence $\{g_j\}$ and positive integer sequence $\{K_j\}$.
If this series is uniformly convergent
for $ s \in \left\{-\frac{1}{4} + it \,:\, t \in \mathbb{R}\right\}$
then
$$\int_{-1/4-i\infty}^{-1/4+i\infty} g(s)\zeta(s)\frac{n^s ds}{s(s+1)} =0.$$
If
the series is uniformly convergent
for $ s \in \left\{-\frac{5}{4} + it \,:\, t \in \mathbb{R}\right\}$
then
$$\int_{-5/4-i\infty}^{-5/4+i\infty} g(s)\zeta(s)\frac{n^s ds}{s(s+1)(s+2)} = 0.$$
\end{Lemma}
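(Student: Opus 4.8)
The plan is to reduce both identities to the already-established formulae (\ref{eq:left=0-m=1}) and (\ref{eq:left=0-m=2}) by moving the summation that defines $g(s)$ inside the contour integral, and then applying those formulae termwise with $n$ replaced by the positive integer $K_j n$. First I would note that on a vertical line $\Re(s)=\sigma$ and for a positive integer $K_j$ one has $n^s(K_j)^s=(K_jn)^s$ (both equal $e^{s(\ln n+\ln K_j)}$), so formally
\[
g(s)\,\frac{\zeta(s)\,n^s}{s(s+1)} \;=\; \sum_{j=0}^{\infty} g_j\,\frac{\zeta(s)\,(K_jn)^s}{s(s+1)},
\]
and similarly with the degree-three denominator $s(s+1)(s+2)$ on the line $\Re(s)=-\tfrac54$.

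The technical heart of the argument is that the kernel is absolutely integrable along the relevant vertical line, which is exactly what Lemma \ref{lem:zeta-bound} supplies. On $\Re(s)=-\tfrac14$ we have $\sigma\le 0$, so $\tau(\sigma)=\tfrac12-\sigma=\tfrac34$ and $\zeta(s)=O(|t|^{3/4}\log|t|)$ for large $|t|$, while $1/(s(s+1))=O(|t|^{-2})$ and $|n^s|=n^{-1/4}$ is bounded; the product is $O(|t|^{-5/4}\log|t|)$, which is integrable in $t$, and the kernel is continuous on the line (no pole of $\zeta$ there, $|s|$ and $|s+1|$ bounded below), taking care of a neighbourhood of the real axis. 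The identical computation on $\Re(s)=-\tfrac54$, where $\tau(\sigma)=\tfrac74$, against $\zeta(s)n^s/(s(s+1)(s+2))=O(|t|^{7/4}\log|t|)\cdot O(|t|^{-3})=O(|t|^{-5/4}\log|t|)$, shows that kernel too lies in $L^1$ of its line.

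With that in hand, the interchange is immediate: writing $S_N(s)=\sum_{j=0}^{N} g_j(K_j)^s$ for the partial sums and $h(s)$ for the kernel, the hypothesis that $S_N\to g$ uniformly on the line together with $h\in L^1$ of the line gives
\[
\left|\int_L\big(g(s)-S_N(s)\big)h(s)\,ds\right| \;\le\; \Big(\sup_{s\in L}|g(s)-S_N(s)|\Big)\int_L|h(s)|\,|ds| \;\longrightarrow\; 0,
\]
so $\int_L g(s)h(s)\,ds=\lim_{N\to\infty}\sum_{j=0}^{N} g_j\int_L \zeta(s)(K_jn)^s/(s(s+1))\,ds$ (resp.\ with the cubic denominator). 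Finally, for each $j$ the number $K_jn$ is a positive integer, so (\ref{eq:left=0-m=1}) applies verbatim with $n$ replaced by $K_jn$, making every term of the sum zero; hence the whole integral vanishes, and the second claim follows the same way from (\ref{eq:left=0-m=2}) on the line $\Re(s)=-\tfrac54$.

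The only nonroutine point, and hence the step I expect to guard most carefully, is the legitimacy of the interchange of summation and integration in the third step: it is precisely to make this work that the hypothesis requires \emph{uniform} convergence of $\sum_j g_j(K_j)^s$ on the full line and that Lemma \ref{lem:zeta-bound} is invoked to produce the $L^1$ bound on the kernel. It is also worth recording explicitly that formulae (\ref{eq:left=0-m=1}) and (\ref{eq:left=0-m=2}) hold with an arbitrary positive integer in place of $n$ — that is exactly the fact being used in the last step, since each $K_jn$ is such an integer.
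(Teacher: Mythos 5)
Your proof is correct and follows essentially the same approach as the paper: substitute the series for $g(s)$, interchange summation and integration by uniform convergence, and apply (\ref{eq:left=0-m=1}) (resp.\ (\ref{eq:left=0-m=2})) termwise with $n$ replaced by the positive integer $K_j n$. You are in fact somewhat more careful than the paper's one-line justification of the interchange, since you supply the $L^1$ bound on the kernel via Lemma \ref{lem:zeta-bound}, which is indeed what is needed to make uniform convergence on an unbounded line suffice.
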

\begin{proof}
For the first integral,  note that
\begin{eqnarray*}
\int_{-1/4-i\infty}^{-1/4+i\infty} g(s)\zeta(s)\frac{n^s ds}{s(s+1)}
& = & \int_{-1/4-i\infty}^{-1/4+i\infty} \left( \sum_{j=0}^{\infty} g_j (K_j)^s \right)\zeta(s)\frac{n^s ds}{s(s+1)}\\
& = & \sum_{j=0}^{\infty} \left(g_j \int_{-1/4-i\infty}^{-1/4+i\infty} \zeta(s)\frac{(K_jn)^s ds}{s(s+1)}\right)\\
& = & 0
\end{eqnarray*}
The first equality is the definition of $g(s),$  the second follows
from the uniform convergence of  the series and the last equality 
follows from (\ref{eq:left=0-m=1}).

The second integral is evaluated similarly, using (\ref{eq:left=0-m=2})
in place of (\ref{eq:left=0-m=1}).
\end{proof}

To evaluate the integrals along  $\Gamma_2$ and $\Gamma_4$, note that  $\left|2^{(M-1)(s-1)}(2^{s-1}-1)^{-M}\right|$ is bounded as $j\rightarrow\infty$ and $\left|\frac{1}{s(s+1)}\right|=O(j^{-2})$. Thus, by Lemma \ref{thm:bound_top_bottom_zeta_general},  as $R_j\rightarrow\infty$,
$$\int_{\Gamma_2} K_M(s) ds \rightarrow 0,
\quad
\int_{\Gamma_4} K_M(s) ds \rightarrow 0.$$

To evaluate the integral along $\Gamma_3$, note that $\sigma<0$ along $\Gamma_3$, so 
we may write
$$\frac{1}{2^{s-1}-1} = -1 - \left(\frac{1}{2}\right)2^s - \left(\frac{1}{4}\right)4^s - \left(\frac{1}{8}\right)8^s \cdots.$$

The series is both absolutely convergent and uniformly convergent on $-1/4 + (-\infty,\infty)i$, so we may write
 (see \cite[pp.74-75]{RUDIN-1976})
$$\frac{2^{(M-1)(s-1)}}{(2^{s-1}-1)^M} = \sum_{j=0}^{\infty} a_j (2^{M+j-1})^s$$
for some $\{a_j\}$, where this new series is again uniformly convergent on $-1/4 + (-\infty,\infty)i$. By Lemma \ref{thm:bound_left_zeta_general},
$$\lim_{R_j\rightarrow\infty} \int_{\Gamma_3} K_M(s) ds \rightarrow 0.$$

We have successfully shown that the integrals along $\Gamma_2,$ $\Gamma_3$ and $\Gamma_4$ vanish as $R_j\rightarrow\infty$, and hence $TS_M(n)$ is $M!$ times the sum of the residues at the poles of $K_M(s)$ inside $\Gamma$, after taking $R_j\rightarrow\infty$.

\begin{Theorem}\label{thm:TSn-general}
For $M\geq 1$, 
\begin{equation}\label{eq:TMn-general-exact}
TS_M(n) = \frac{1}{2} n \lg^M n + \sum_{d=0}^{M-1} \left(n\lg^d n\right)F_{M,d}(\lg n) + (-1)^{M+1} M!,
\end{equation}
where $F_{M,d}(u)$'s are periodic functions with period one, which are given by absolutely convergent Fourier series
$$F_{M,d}(u)=\sum_{j\in\mathbb{Z}} f_{M,d,j} e^{2\pi ij u}$$
whose coefficients $f_{M,d,j}$ can be determined explicitly. In particular, the average value of $F_{M,M-1}(u)$ is
$$f_{M,M-1,0} = \frac{M}{4\ln 2}[2\gamma_0-3+(M-2)\ln 2]\approx \frac{M^2}{4}-0.915648 M.$$
\end{Theorem}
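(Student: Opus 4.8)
The plan is to finish the residue computation already set up above. The preceding argument has reduced $TS_M(n)$ to $M!$ times the sum, taken as $R_j\to\infty$, of the residues of $K_M(s)$ at the poles inside $\Gamma$: a pole of order $M+1$ at $s=1$, poles of order $M$ at $s=\alpha_j=1+\frac{2\pi j}{\ln2}i$ for $j\in\mathbb Z\setminus\{0\}$, and a simple pole at $s=0$. I would split this into the family $\{\alpha_j\}_{j\in\mathbb Z}$ (note $\alpha_0=1$, so the order-$(M+1)$ pole is handled together with the others) and the stray simple pole at $s=0$, treating the family with Lemma \ref{thm:sum-residues} and the last pole by hand.

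For the family, apply Lemma \ref{thm:sum-residues} with $\sigma=1$ (so its poles $\theta_j=\sigma+\frac{2\pi j}{\ln2}i$ are exactly the $\alpha_j$), $L(s)=2^{(M-1)(s-1)}\bigl(2^{s-1}-1\bigr)^{-M}$ and $f(s)=\zeta(s)$, so that $L(s)f(s)\frac{n^s}{s(s+1)}=K_M(s)$. The hypotheses hold: $\zeta$ is analytic at each $\alpha_j$ with $j\neq0$; by Lemma \ref{lem:zeta-bound} (where $\tau(\sigma)$ is small for $\sigma$ near $1$) together with Lemma \ref{thm:bound-derivative}, every derivative obeys $|\zeta^{(q)}(\alpha_j)|=O(\log|j|)$, so one may take $A=1/2<1$ and $B=1$; $L$ has a pole of order exactly $M$ at every $\alpha_j$, and since $2^{s-1}$ has period $2\pi i/\ln2$ the Laurent coefficients of $L$ agree at all $\alpha_j$, so $n_1=M$; and $\zeta(s)/\bigl(s(s+1)\bigr)$ has a simple pole at $\alpha_0=1$ since $s(s+1)=2\neq0$ there, so $n_2=1$. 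Lemma \ref{thm:sum-residues} then yields
$$\sum_{j\in\mathbb Z}\mathrm{Res}\bigl(K_M(s),s=\alpha_j\bigr)=\lambda_M\,n\lg^M n+\sum_{d=0}^{M-1}\bigl(n\lg^d n\bigr)\widetilde F_d(\lg n),$$
with each $\widetilde F_d$ periodic of period one and given by an absolutely convergent Fourier series; since $R_j=\frac{(2j+0.5)\pi}{\ln2}$ encloses exactly the $\alpha_{j'}$ with $|j'|\le j$, this is indeed the $R_j\to\infty$ limit of the enclosed residues. The leftover pole is simple, with $\mathrm{Res}(K_M(s),s=0)=\lim_{s\to0}sK_M(s)=\frac{2^{1-M}\zeta(0)}{(2^{-1}-1)^M}=(-1)^{M+1}$. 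Multiplying the whole sum by $M!$ and setting $F_{M,d}:=M!\,\widetilde F_d$ gives (\ref{eq:TMn-general-exact}) up to the two explicit constants $\tfrac12$ and $f_{M,M-1,0}$.

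Those constants come from a local Laurent expansion at $s=1$. Writing $s=1+w$, expand $2^{(M-1)w}=1+(M-1)(\ln2)w+\cdots$, $\zeta(1+w)=w^{-1}+\gamma_0+\cdots$, $n^{1+w}=n\sum_{k\ge0}\frac{(\ln n)^k}{k!}w^k$, $\bigl(2^w-1\bigr)^{-M}=(w\ln2)^{-M}\bigl(1-\tfrac{M\ln2}{2}w+\cdots\bigr)$ (using $z/(e^z-1)=1-\tfrac z2+\cdots$), and $\bigl((1+w)(2+w)\bigr)^{-1}=\tfrac12\bigl(1-\tfrac32w+\cdots\bigr)$, and extract the coefficient of $w^{-1}$. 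The top power $\lg^M n$ is produced solely by pairing the $w^{-1}$ of $\zeta$ with the $w^M$ term of $n^{1+w}$, giving coefficient $\tfrac1{2\,M!}$ for $n\lg^M n$, hence leading term $\tfrac12 n\lg^M n$ after the factor $M!$. Going one order further, write $Q(w):=2^{(M-1)w}\bigl(\tfrac{w\ln2}{2^w-1}\bigr)^M\bigl((1+w)(2+w)\bigr)^{-1}$, which is analytic at $0$ with $Q(0)=\tfrac12$ and $w$-coefficient $q_1=\tfrac{(M-2)\ln2}{4}-\tfrac34$; then the coefficient of $n\lg^{M-1}n$ in $\mathrm{Res}(K_M(s),s=1)$ equals $\frac{q_1+\gamma_0/2}{(M-1)!\ln2}$, and since the constant term of the Fourier series $F_{M,M-1}$ is $M!$ times this (the non-constant Fourier coefficients arise only from the $\alpha_j$ with $j\neq0$), this reproduces $f_{M,M-1,0}=\frac{M}{4\ln2}\bigl[2\gamma_0-3+(M-2)\ln2\bigr]$, with the numerical estimate following by substituting $\gamma_0$.

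The routine parts are checking the hypotheses of Lemma \ref{thm:sum-residues} for $f=\zeta$ and matching the $R_j\to\infty$ limit to the full bilateral sum. The part that needs care — and the main obstacle — is the last step: tracking precisely which pairings of Laurent coefficients feed each power of $\lg n$ in the order-$(M+1)$ pole at $s=1$, and converting $\ln n$ to $\lg n$ consistently so no factor of $\ln2$ is dropped. That bookkeeping is exactly what pins down $\tfrac12$ and the closed form for $f_{M,M-1,0}$.
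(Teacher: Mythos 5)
Your proposal is correct and follows essentially the same route as the paper: apply Lemma \ref{thm:sum-residues} with $\sigma=1$, $L(s)=2^{(M-1)(s-1)}(2^{s-1}-1)^{-M}$, $f(s)=\zeta(s)$, $n_1=M$, $n_2=1$, handle the stray simple pole at $s=0$ separately (giving $(-1)^{M+1}$), and multiply by $M!$. The only thing you add is the explicit Laurent bookkeeping at $s=1$ that pins down the constants $\tfrac12$ and $f_{M,M-1,0}$, which the paper delegates to the remark ``found by expressing all the Laurent series explicitly''; your $Q(0)=\tfrac12$, $q_1=\tfrac{(M-2)\ln2}{4}-\tfrac34$, and the resulting $f_{M,M-1,0}=\tfrac{M}{4\ln2}[2\gamma_0-3+(M-2)\ln2]$ all check out.
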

\begin{proof}
As shown, $TS_M(n)$ is $M!$ times the sum of residues at the poles of $K_M(s)$ inside $\Gamma$ as $R\rightarrow\infty$. The residue of $K_M(s)$ at $s=0$ is
$$\mbox{Res}(K_M(s),s=0) = (-1)^{M+1}.$$

By Lemma \ref{lem:zeta-bound}, we have the bound $|\zeta(\sigma+it)|=O(|t|^\epsilon \log |t|)$ when $\sigma\geq 1-\epsilon$ for some sufficiently small $\epsilon$. By Lemma \ref{thm:bound-derivative}, $|\zeta^{(q)}(\alpha_j)| = O(|j|^\epsilon \log |j|)$ for any fixed positive integer $q$.

In Lemma \ref{thm:sum-residues}, take $\sigma=1$, $L(s) = 2^{(M-1)(s-1)}(2^{s-1}-1)^{-M}$ (its Laurent series coefficients at each $s=\alpha_j$ are identical) and $f(s)=\zeta(s)$. From last paragraph we can take $A=\epsilon$ and $B=1$. The order of poles of $L(s)$ at $s=\alpha_j$ is $M$, and the order of pole of $\frac{f(s)}{s(s+1)}$ at $s=\alpha_0=1$ is $1$.

The sum of residues at $s=\alpha_j$, where $j\in\mathbb{Z}$, is given by
$$\lambda_M n \lg^M n + \sum_{d=0}^{M-1} \left(n\lg^d n\right)\overline{F}_{M,d}(\lg n),$$
where $\overline{F}_{M,d}(u)$'s are periodic functions with period one which are given by absolutely convergent Fourier series. $\lambda_M$ can be explicitly calculated to be $1/(2M!)$.

Hence
\begin{eqnarray*}
TS_M(n) &=& M!\left[ \frac{n \lg^M n}{2M!} + \sum_{d=0}^{M-1} \left(n\lg^d n\right)\overline{F}_{M,d}(\lg n) + (-1)^{M+1} \right]\\
&=& \frac{1}{2}n\lg^M n + \sum_{d=0}^{M-1} \left(n\lg^d n\right)M!\overline{F}_{M,d}(\lg n) + (-1)^{M+1}M!.
\end{eqnarray*}
Letting $F_{M,d}(u) := M!\overline{F}_{M,d}(u)$ for $d=0,1,\cdots,M-1$, proves (\ref{eq:TMn-general-exact}). The average value of $F_{M,M-1}(u)$ is found by expressing all the Laurent series (in the proof of Lemma \ref{thm:sum-residues}) explicitly.
\end{proof}

\section{Weighted Digital Sums of the Second Type}\label{sect:WDS2_higher}

We now analyze $TW_1(n)= \sum_{j<n} W_1(j)$ as defined by (\ref{eq:def-Wn}) and (\ref{eq:def-TWn}). The analysis will be extended to $TW_M(n)$ for
$M>1$ in the next section.

The general methodology used to analyze $TW_1(n)$ is the same as in the previous sections; use Lemma \ref{thm:key} to rewrite
\begin{equation}\label{eq:TWn-integral-raw-dgf}
TW_1(n) = \frac{1}{2\pi i}\int_{c-i\infty}^{c+i\infty}\left(\sum_{j=1}^\infty \frac{\nabla W_1(j)}{j^s}\right) \frac{n^s ds}{s(s+1)}.
\end{equation}
The main difficulty that will be encountered  is that the DGF here  
will not be  ``nice'' enough to permit  integrating the kernel
directly. We will have to split the DGF into two parts, using the
$m=1$ case of (\ref{eq:MPF-general-m}) to evaluate the first part and the $m=2$ case
to evaluate the second part.

\subsection{Deriving the DGF}\label{subsect:WDS2-higher-DGF}

Set
\begin{equation}\label{eq:def-BMs}
B_M(s) := \sum_{j=1}^\infty\frac{\nabla W_M(j)}{j^s}.
\end{equation}
to be the DGF of $\nabla W_M(j)$. We start by deriving, for all $M\geq 1$,
a formula for $B_M(s)$ in terms of DGFs $V_M(s)$ and $Z_M(s)$ introduced in Definition \ref{def:VM-ZM}. We will then analyze the case $M=1$ in this section, and leave the cases $M\geq 2$ to the next section.

\begin{Lemma}\label{lem:dgf-nabla-WM-exact}
$$B_M(s)=\frac{2^s-1}{2^s-2}V_M(s)-\frac{1}{2^s-2}Z_M(s).$$
\end{Lemma}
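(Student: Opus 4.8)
The plan is to derive a recurrence for $\nabla W_M(j)$ analogous to the ones obtained for $S_M$ in the previous section, translate it into a functional equation for $B_M(s)$, and solve that equation. First I would understand how $W_M$ behaves under $n\mapsto 2n$ and $n\mapsto 2n+1$. If $n=2^{i_1}+\cdots+2^{i_k}$ with $i_1>\cdots>i_k\ge 0$, then $2n=2^{i_1+1}+\cdots+2^{i_k+1}$, so each exponent shifts up by one but the ordinal position $t$ of each power is unchanged; hence $W_M(2n)=\sum_t t^M 2^{i_t+1}=2W_M(n)$. For $2n+1=2^{i_1+1}+\cdots+2^{i_k+1}+2^0$, the new term $2^0$ is the \emph{last} in the decreasing order, i.e.\ it occupies position $k+1$, and the other positions are unchanged, so $W_M(2n+1)=2W_M(n)+(k+1)^M=2W_M(n)+(v(n)+1)^M$ since $k=v(n)$. (One must also check the degenerate small cases / $n=0$ so that the recursion is well-founded with $W_M(0)=0$.)

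Next I would compute $\nabla W_M$. Using $\nabla W_M(2n)=W_M(2n)-W_M(2n-1)$ and the identity just derived for odd arguments written at $n-1$ (since $2n-1=2(n-1)+1$), we get $\nabla W_M(2n)=2W_M(n)-2W_M(n-1)-(v(n-1)+1)^M=2\nabla W_M(n)-(v(n-1)+1)^M$. By Lemma \ref{lem:property-v-and-v_2}(4), $v(n-1)=v(n)-1+v_2(n)$, so $v(n-1)+1=v(n)+v_2(n)$, giving $\nabla W_M(2n)=2\nabla W_M(n)-(v(n)+v_2(n))^M$. Similarly $\nabla W_M(2n+1)=W_M(2n+1)-W_M(2n)=2W_M(n)+(v(n)+1)^M-2W_M(n)=(v(n)+1)^M$. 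Now split the Dirichlet sum defining $B_M(s)$ over even and odd indices:
\begin{eqnarray*}
B_M(s) &=& \sum_{l\ge 1}\frac{\nabla W_M(2l)}{(2l)^s} + \sum_{l\ge 0}\frac{\nabla W_M(2l+1)}{(2l+1)^s}\\
&=& \frac{1}{2^s}\sum_{l\ge 1}\frac{2\nabla W_M(l)-(v(l)+v_2(l))^M}{l^s} + \sum_{l\ge 1}\frac{(v(l)+1)^M}{(2l+1)^s},
\end{eqnarray*}
where I have used that at $l=0$ the odd term vanishes (taking $W_M(1)=\nabla W_M(1)$ consistent with $W_M(0)=0$, and noting $(v(0)+1)^M$ issues are handled by the $j=1$ base case — this bookkeeping at $l=0,1$ is a point to get right). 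The first piece is $2^{1-s}B_M(s)-2^{-s}Z_M(s)$. For the odd piece, rewrite $\sum_{l\ge 1}(v(l)+1)^M(2l+1)^{-s}$ in terms of a sum over all \emph{odd} integers $m=2l+1\ge 3$ of $(v((m-1)/2)+1)^M m^{-s}$; since $m$ odd means its last binary digit is $1$, $m=(\cdots 1)_2$ and $(m-1)/2=(\cdots)_2$ drops that digit, so $v((m-1)/2)+1=v(m)$. Hence the odd piece equals $\sum_{\text{odd }m\ge 3} v(m)^M m^{-s}$, and after adjusting the $m=1$ term (where $v(1)^M=1$) this is $\big(\sum_{\text{odd }m} v(m)^M m^{-s}\big)-1 = (1-2^{-s})V_M(s)-1$ by (\ref{eq:DGF-odd-v}). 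Collecting everything:
$$B_M(s) = 2^{1-s}B_M(s) - 2^{-s}Z_M(s) + (1-2^{-s})V_M(s) - 1,$$
and solving for $B_M(s)$ gives $(1-2^{1-s})B_M(s) = (1-2^{-s})V_M(s) - 2^{-s}Z_M(s) - 1$, i.e.
$$B_M(s) = \frac{(2^s-1)V_M(s) - Z_M(s) - 2^s}{2^s-2}.$$
This matches the claimed formula up to the stray $-2^s/(2^s-2)=-1-2/(2^s-2)$ term, which must be absorbed by a more careful accounting of the finitely many small-index corrections (the $j=1$ term of $B_M$, $V_M$, $Z_M$): since $\nabla W_M(1)=W_M(1)=1$, $v(1)^M=1$, $(v(1)+v_2(1))^M=1$, these base terms should cancel the extra constant exactly.

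The main obstacle I expect is precisely this bookkeeping at the boundary indices $l=0$ and $j=1$: the even/odd splitting and the re-indexing of the odd sum both introduce $\pm$ constants, and one must track the $j=1$ contributions to $B_M$, $V_M$, and $Z_M$ carefully to see the constant $-2^s$ collapse so that the clean closed form $B_M(s)=\frac{2^s-1}{2^s-2}V_M(s)-\frac{1}{2^s-2}Z_M(s)$ emerges. Everything else — the recurrences for $W_M(2n)$, $W_M(2n+1)$, $\nabla W_M$, and the application of (\ref{eq:DGF-odd-v}) and Lemma \ref{lem:property-v-and-v_2} — is routine. Note this derivation is uniform in $M$ and uses no induction on $M$ (unlike Lemma \ref{lem:dgf-nabla-SM-exact}), because the weight $t^M$ of the newly appended position $k+1$ is simply $(v(n)+1)^M$, a self-contained quantity, rather than expanding via falling factorials.
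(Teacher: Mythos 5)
Your approach is the same as the paper's in all essentials: derive the recurrences for $W_M(2n)$ and $W_M(2n+1)$, pass to $\nabla W_M$, split the Dirichlet sum over parity, and solve the resulting functional equation. But there is one concrete error that produces your stray $-2^s/(2^s-2)$ term, and your closing remark that it ``must be absorbed by a more careful accounting'' is wishful rather than a proof. The error is the claim that the $l=0$ term of the odd sum vanishes. It does not: $\nabla W_M(1) = W_M(1) - W_M(0) = 1 - 0 = 1$. So the odd piece is
$$\sum_{l\ge 0}\frac{\nabla W_M(2l+1)}{(2l+1)^s} = 1 + \sum_{l\ge 1}\frac{(v(l)+1)^M}{(2l+1)^s} = 1 + \sum_{\text{odd }m\ge 3}\frac{v(m)^M}{m^s} = 1 + \left[\left(1-\tfrac{1}{2^s}\right)V_M(s) - 1\right] = \left(1-\tfrac{1}{2^s}\right)V_M(s),$$
with the $+1$ from the $l=0$ term exactly cancelling the $-1$ you subtract when you re-extend the sum over odd $m\ge 3$ to all odd $m$. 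With this fix your functional equation becomes $(1-2^{1-s})B_M(s)=(1-2^{-s})V_M(s)-2^{-s}Z_M(s)$, which solves to the stated formula with no leftover constant.

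The paper avoids this bookkeeping entirely by the small but useful trick of writing the odd recurrence intrinsically as $\nabla W_M(2n+1)=v(2n+1)^M$, i.e.\ in terms of the odd argument itself. That identity is valid for $n=0$ as well (indeed $\nabla W_M(1)=1=v(1)^M$), so the entire odd part of $B_M(s)$ is immediately recognized as $\sum_{\text{odd }j}v(j)^M/j^s = (1-2^{-s})V_M(s)$ by (\ref{eq:DGF-odd-v}), with no case split at $l=0$ and no $\pm 1$ corrections to chase. Your observation that no induction on $M$ is needed (unlike Lemma \ref{lem:dgf-nabla-SM-exact}) because the new weight $(v(n)+1)^M$ is self-contained is correct and matches the paper.
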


\begin{proof}
Observe that if $n$ is expressed as  $n=2^{i_1} + 2^{i_2} + \cdots + 2^{i_k}$ with $i_1 > i_2 > \cdots > i_k\geq 0$, then
\begin{eqnarray*}
W_M(2n) &=& \sum_{t=1}^k t^M 2^{i_t+1} = 2\sum_{t=1}^k t^M 2^{i_t} = 2W_M(n),\\
W_M(2n+1) &=& \sum_{t=1}^k t^M 2^{i_t+1} + (k+1)^M = 2W_M(n)+(v(n)+1)^M.
\end{eqnarray*}
Recalling from Lemma \ref{lem:property-v-and-v_2} that $v(n)-v(n-1)=1-v_2(n)$ and $v(2n+1)=v(n)+1$ gives
$$\nabla W_M(2n)=2\nabla W_M(n)-(v(n)+v_2(n))^M\qquad\mbox{and}\qquad\nabla W_M(2n+1)=v(2n+1)^M.$$

Then, (\ref{eq:DGF-odd-v}) in Lemma \ref{lem:DGF-closed} permits writing
\begin{eqnarray*}
B_M(s) &=& \sum_{\mbox{\footnotesize{odd }}j}\frac{v(j)^M}{j^s}+\sum_{l=1}\frac{2\nabla W_M(l)-(v(l)+v_2(l))^M}{(2l)^s}\\
&=& \left(1-\frac{1}{2^s}\right)V_M(s) + \frac{1}{2^{s-1}} B_M(s) - \frac{1}{2^s} Z_M(s).
\end{eqnarray*}
Solving for $B_M(s)$ proves the lemma.
\end{proof}

For $M=1$, applying (\ref{eq:DGF-v2}) from Lemma \ref{lem:DGF-closed} to Lemma \ref{lem:dgf-nabla-WM-exact} gives
\begin{eqnarray}\label{eq:B1s-closed}
B_1(s) &=& \frac{2^s-1}{2^s-2}V_1(s)-\frac{1}{2^s-2}Z_1(s)\nonumber\\
&=& \frac{2^s-1}{2^s-2}V_1(s)-\frac{1}{2^s-2} \left(V_1(s) + \frac{1}{2^s-1}\zeta(s)\right)\nonumber\\
&=& V_1(s) - \frac{1}{(2^s-1)(2^s-2)}\zeta(s).
\end{eqnarray}

Substituting this into (\ref{eq:TWn-integral-raw-dgf}) yields
\begin{equation}\label{eq:TW-integral-exact}
TW_1(n) = \frac{1}{2\pi i} \int_{3-i\infty}^{3+i\infty} V_1(s)\frac{n^s ds}{s(s+1)} - \frac{1}{2\pi i} \int_{3-i\infty}^{3+i\infty} \frac{\zeta(s)}{(2^s-1)(2^s-2)}\frac{n^s ds}{s(s+1)}.
\end{equation}

The second integral can be evaluated exactly by the method used in Section \ref{subsect:WDS1-evaluation}.
Evaluating the first integral requires more work.

Historically, $v(n)$ was one of the first digital functions to be analyzed using the Mellin transform techniques.
The original analysis in 1975 by Delange \cite{DELAN-1975} used
a combinatorial decomposition of the binary representations of integers
to directly derive an exact Fourier series formula for $\sum_{j<n} v(j)$.
In 1994, Flajolet et. al. \cite{FGKPT-1994} reproved Delange's result
using the Mellin transform techniques.
However, $V_1(s)$, the DGF of $v(n)$, does not seem to have been explicitly studied before Hwang's analysis \cite{HWANG-1998} in 1998.
First, denote $I_r(s)$ by
\begin{equation}\label{eq:def-Irs}
I_r(s) := -\frac{1}{2}\sum_{i=1}^\infty v(i)^r\left(\frac{1}{(2i)^s}-\frac{2}{(2i+1)^s}+\frac{1}{(2i+2)^s}\right).
\end{equation}
Standard algebraic manipulations, e.g. in \cite[pp.536]{HWANG-1998}, let us rewrite a summation of this form as an integral:
\begin{equation}\label{eq:Irs-integral}
I_r(s) = \frac{s}{2^s}\int_1^\infty\frac{v(\lfloor x\rfloor)^r}{x^{s+1}}\xi(x) dx,
\end{equation}
where
$$\xi(x) =
\begin{cases}
-\frac{1}{2} & , \mbox{if $\lfloor x \rfloor \leq x < \lfloor x \rfloor + \frac{1}{2},$} \\
\frac{1}{2} & , \mbox{if $\lfloor x \rfloor + \frac{1}{2} \leq x < \lfloor x \rfloor + 1.$}
\end{cases}$$

Hwang \cite{HWANG-1998} derived the following formula of $V_1(s)$,  revealing its singularities in $\Re(s)>-1$:
\begin{equation}\label{eq:dgf-V(s)-hwang}
V_1(s) = \frac{2^s-1}{2^s-2}\zeta(s) - \frac{1}{2(2^s-1)}\zeta(s) + \frac{2^s}{2^s-2} I_1(s),
\end{equation}

Substituting (\ref{eq:dgf-V(s)-hwang}) into (\ref{eq:B1s-closed}) yields
\begin{equation}\label{eq:B1s}
B_1(s) = \frac{2^{s+1}-1}{2(2^s-1)}\zeta(s)+\frac{2^s}{2^s-2}I_1(s).
\end{equation}
From the integral form of $I_1(s)$, we know that it is analytic in $\Re(s)>-1$. Hence, (\ref{eq:B1s}) (together with the fact that $\zeta(s)$ has no zero on the line $\Re(s)=0$) shows that at $s=0,\beta_j$, $B_1(s)$ possesses simple poles. Depending upon the values of $I_1(1)$ and $I_1(\alpha_j)$, $B_1(s)$ may either possess simple poles at $s=1,\,\alpha_j$ or be analytic at $s=1,\,\alpha_j$. These are all possible poles of $B_1(s)$ inside $\Gamma$ which we defined in (\ref{eq:Gamma}).

Using Hwang's representation {\em would} yield a closed-form formula for $TW_1(n)$ by considering contour $\Gamma$.
Unfortunately, the residues appearing in the resulting  Fourier coefficients
would be expressed in terms of the value of  $I_1(s)$ at various poles,
something which is not well understood.
In the next subsection, we will show how to use the higher order version of the Mellin-Perron formula to sidestep this issue and express the Fourier coefficients in terms of the Riemann-Zeta function.

\subsection{Moving Up to a Higher Order Case of the Mellin-Perron Formula}

We now see how to manipulate the first integral in (\ref{eq:TW-integral-exact}) to yield a formula in terms of values of the Riemann-Zeta function.
 
The general approach  is to note that  $V_1(s)$ is the DGF of $v(j)$, so the first integral in (\ref{eq:TW-integral-exact}), when transformed from integral back to summation by (\ref{eq:MPF-m=1}), is a double summation of $v(j)$. A double summation of $v(j)$ is also a triple summation of $\nabla v(j)$, and we can write a closed-form formula for the DGF of $\nabla v(j)$ in terms of $\zeta(s)$.
Equation (\ref{eq:MPF-m=2}) then provides  an  exact formula of the triple summation of $\nabla v(j)$, and we can evaluate the first integral in (\ref{eq:TW-integral-exact}).

We now present the details. Define
$$TV(n) := \frac{1}{n}\sum_{j=1}^{n}\sum_{i=1}^{j-1}v(i).$$
Algebraic manipulations permit 
writing $TV(n)$ in 
two  different ways:
\begin{equation}\label{eq:TTVn-double-summation}
TV(n) = \frac{1}{n}\sum_{k<n} v(k) (n-k),
\end{equation}
and
\begin{equation}\label{eq:TTVn-triple-summation}
TV(n) = \frac{1}{n}\sum_{k<n} \nabla v(k) \left[ \frac{(n-k)^2+(n-k)}{2}\right].
\end{equation}

Applying (\ref{eq:MPF-m=1}) to (\ref{eq:TTVn-double-summation}), yields
\begin{equation}\label{eq:TTVn-double-summation-integral}
TV(n) = \frac{1}{2\pi i}\int_{3-i\infty}^{3+i\infty} V_1(s)\frac{n^s ds}{s(s+1)},
\end{equation}
where the right side is exactly the first integral in (\ref{eq:TW-integral-exact}).

Applying (\ref{eq:MPF-m=2}) and (\ref{eq:MPF-m=1}) to (\ref{eq:TTVn-triple-summation}) gives the alternate expression
\begin{eqnarray}\label{eq:TTVn-triple-summation-integral}
&&TV(n)\\
&=& \frac{n}{2\pi i}\int_{3-i\infty}^{3+i\infty} \left(\sum_{j=1}^{\infty} \frac{\nabla v(j)}{j^s}\right) \frac{n^s ds}{s(s+1)(s+2)} + \frac{1}{4\pi i} \int_{3-i\infty}^{3+i\infty} \left(\sum_{j=1}^{\infty} \frac{\nabla v(j)}{j^s}\right)\frac{n^s ds}{s(s+1)}.\nonumber
\end{eqnarray}

Setting (\ref{eq:TTVn-double-summation-integral}) equal to (\ref{eq:TTVn-triple-summation-integral}) and using the closed-form formula for $\sum_{j=1}^{\infty}\nabla v(j)j^{-s}$ in Lemma \ref{lem:DGF-closed} gives
\begin{eqnarray*}
&&\frac{1}{2\pi i}\int_{3-i\infty}^{3+i\infty} V_1(s)\frac{n^s ds}{s(s+1)}\\
&=& \frac{n}{2\pi i} \int_{3-i\infty}^{3+i\infty} \frac{2^s-2}{2^s-1} \zeta(s)\frac{n^s ds}{s(s+1)(s+2)} + \frac{1}{4\pi i} \int_{3-i\infty}^{3+i\infty} \frac{2^s-2}{2^s-1} \zeta(s)\frac{n^s ds}{s(s+1)}.
\end{eqnarray*}

Substituting the above equality into (\ref{eq:TW-integral-exact}) yields a ``nicer''
integral representation for $TW_1(n)$.
\begin{eqnarray}\label{eq:TW-integral-exact-zeta-only}
TW_1(n) & = & \frac{n}{2\pi i} \int_{3-i\infty}^{3+i\infty} \frac{2^s-2}{2^s-1} \zeta(s)\frac{n^s ds}{s(s+1)(s+2)} \nonumber\\
& & + \frac{1}{4\pi i} \int_{3-i\infty}^{3+i\infty} \frac{2^s-2}{2^s-1} \zeta(s)\frac{n^s ds}{s(s+1)} \nonumber\\
& & - \frac{1}{2\pi i} \int_{3-i\infty}^{3+i\infty} \frac{1}{(2^s-1)(2^s-2)}\zeta(s)\frac{n^s ds}{s(s+1)}.
\end{eqnarray}

\subsection{Evaluation of Integrals}\label{subsect:WDS2_higher-evaluation}

The three integrals in (\ref{eq:TW-integral-exact-zeta-only}) can be evaluated almost exactly as in Section \ref{subsect:WDS1-evaluation}. That is, for the first integral consider contour $\Gamma ' = \Gamma '_1\cup\Gamma '_2\cup\Gamma '_3\cup\Gamma '_4$, where
\begin{eqnarray}\label{eq:Gamma2}
&\Gamma '_1 = \{ 3+iy : -R \leq y \leq R \},
\quad\quad\quad&
\Gamma '_2 = \{ x+iR : -5/4 \leq x \leq 3 \}, \nonumber\\
&\Gamma '_3 = \{ -5/4+iy : -R \leq y \leq R \},\quad
&\Gamma '_4 = \{ x-iR : -5/4 \leq x \leq 3 \}.
\end{eqnarray}
For the second and the third integrals consider contour $\Gamma$ defined in (\ref{eq:Gamma}). Next, prove that the integrals along the left, top and bottom paths tend to zero (using Lemma \ref{thm:bound_top_bottom_zeta_general} and Lemma \ref{thm:bound_left_zeta_general}). Finally, evaluate the sum of residues at the poles inside $\Gamma '$ or $\Gamma$. Since these are almost exactly the same as in Section \ref{subsect:WDS1-evaluation}, we leave out the details, only stating the results. See Figure \ref{fig:contour_WDS2_higher} for the contours.

\begin{figure}[t]
\vspace*{-.1in}
\centering%
\scalebox{0.31}{\includegraphics{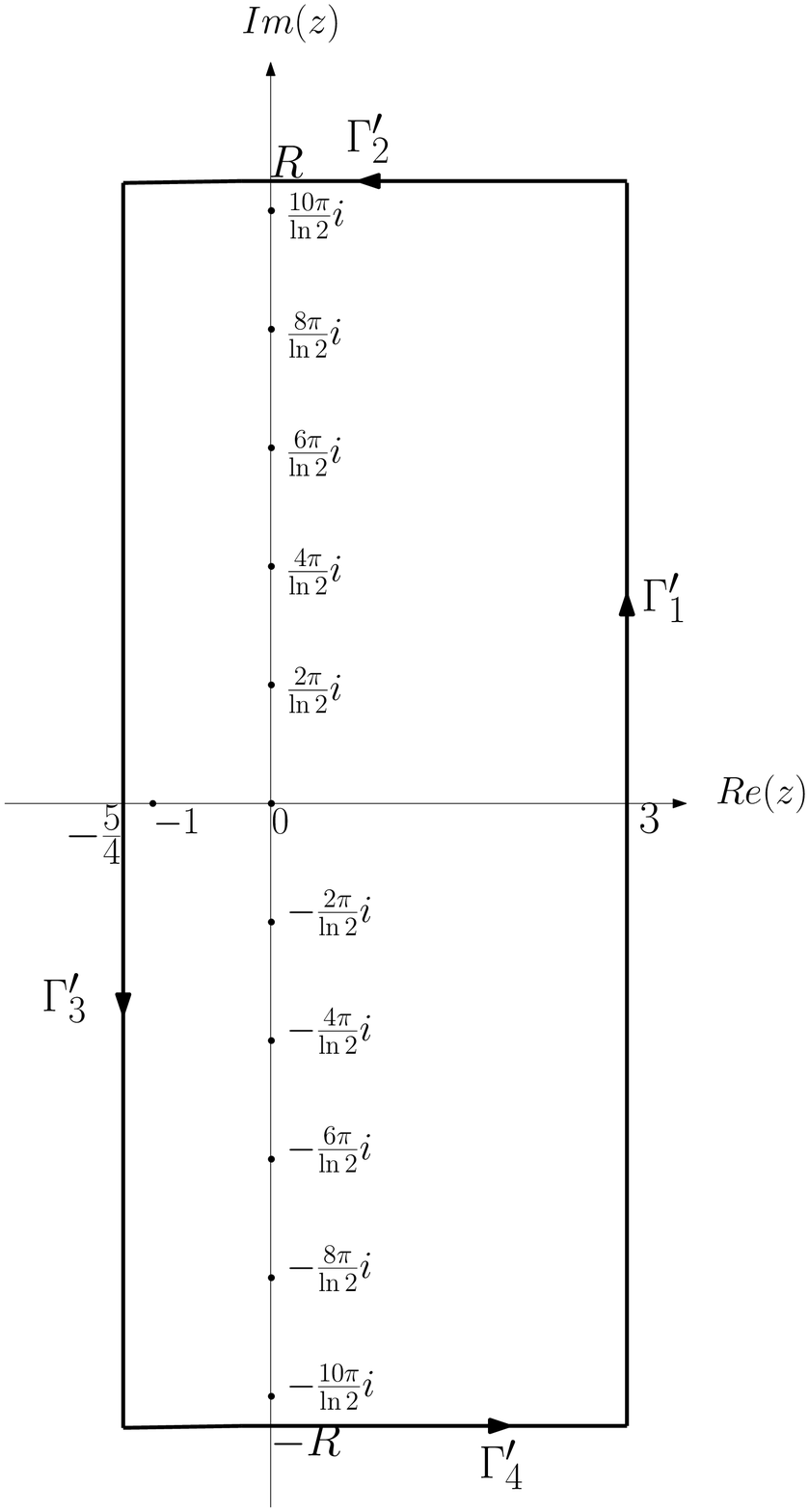}}
\scalebox{0.31}{\includegraphics{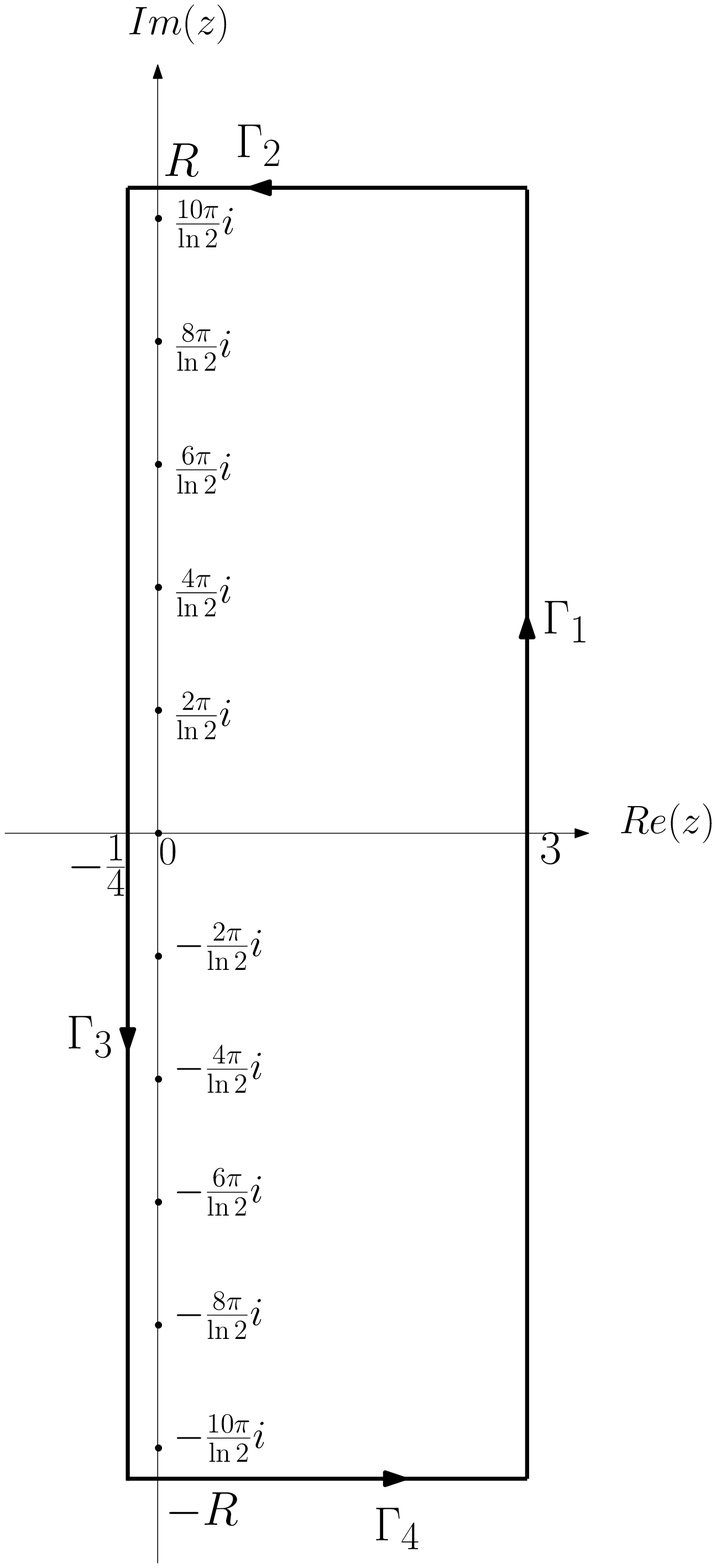}}
\scalebox{0.31}{\includegraphics{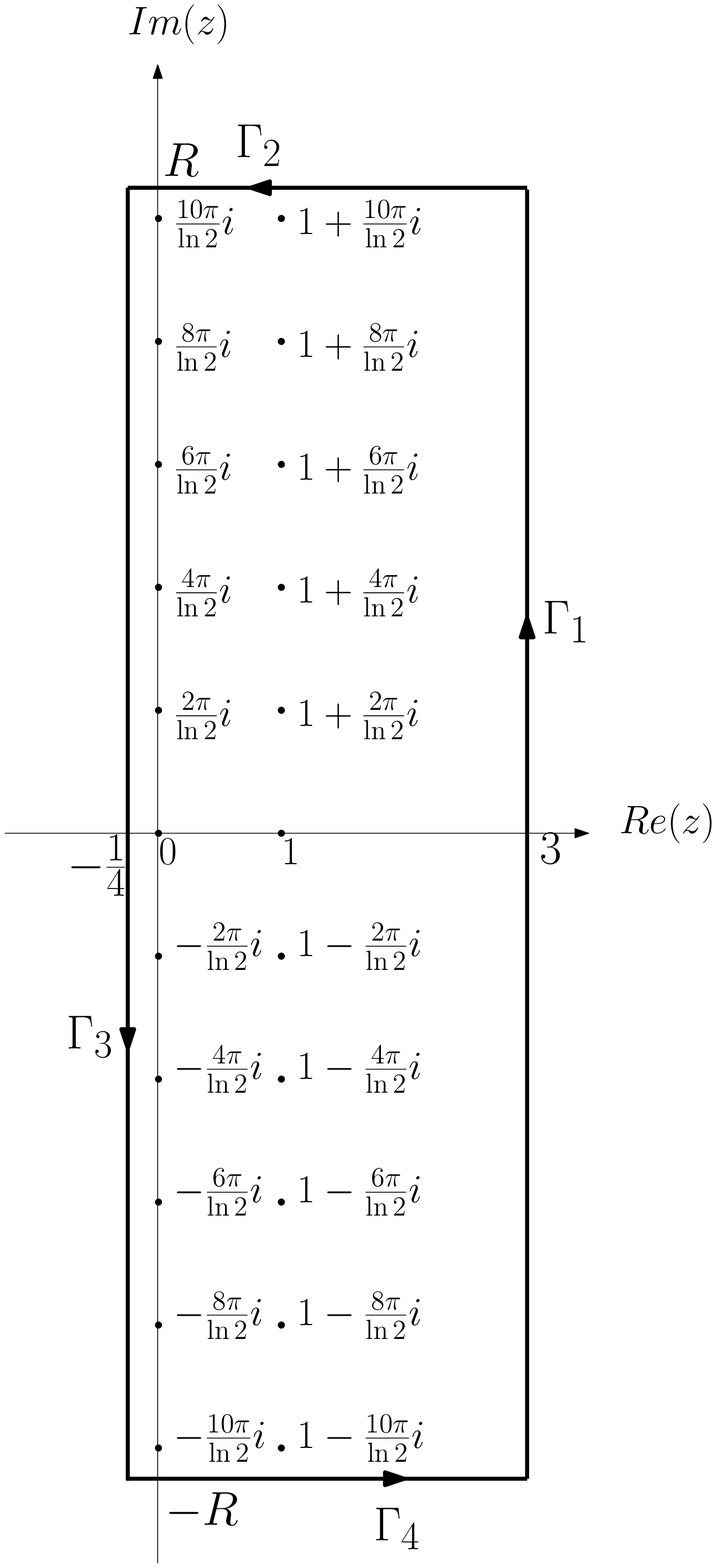}}
\caption{The coutours we use when evaluating the three integrals in (\ref{eq:TW-integral-exact-zeta-only}), with poles marked with dots. The left one is $\Gamma '$ defined in (\ref{eq:Gamma2}), which is for evaluating the first integral in (\ref{eq:TW-integral-exact-zeta-only}). The middle and right ones are both $\Gamma$ defined in (\ref{eq:Gamma}), which are for evaluating the second and third integrals in (\ref{eq:TW-integral-exact-zeta-only}) respectively.}
\label{fig:contour_WDS2_higher}
\end{figure}

The poles of the kernel of the the first integral inside $\Gamma '$ are a double pole at $s=0$ and simple poles at $s=\beta_j$ (where $j\in\mathbb{Z}\setminus\{0\}$) and $s=-1$. 
By summing the residues at all these poles, the first integral evaluates to 
\begin{equation}\label{eq:TW-first-integral-exact}
\frac{n}{2\pi i} \int_{3-i\infty}^{3+i\infty} \frac{2^s-2}{2^s-1}\zeta(s)\frac{n^s ds}{s(s+1)(s+2)} = \frac{1}{4} n \lg n + n H_1(\lg n) + \frac{1}{4},
\end{equation}
where
\begin{equation}\label{eq:w1f}
H_1(u) = \left(\frac{2\ln \pi - \ln 2 - 3}{8\ln 2}\right) - \frac{1}{\ln 2}\sum_{j\in\mathbb{Z}\setminus\{0\}} \frac{\zeta(\beta_j)}{\beta_j(\beta_j+1)(\beta_j+2)} e^{2\pi ij u}.
\end{equation}

\medskip

The poles of the kernel of the second integral inside $\Gamma$ are a double pole at $s=0$ and simple poles at $s=\beta_j$ (where $j\in\mathbb{Z}\setminus\{0\}$). 
By summing the residues at all these poles, the second integral evaluates to
\begin{equation}\label{eq:TW-second-integral-exact}
\frac{1}{4\pi i} \int_{3-i\infty}^{3+i\infty} \frac{2^s-2}{2^s-1} \zeta(s)\frac{n^s ds}{s(s+1)} = \frac{1}{4} \lg n + H_2(\lg n),
\end{equation}
where
\begin{equation}\label{eq:w2f}
H_2(u) = \left(\frac{2\ln\pi -\ln 2-2}{8\ln 2}\right) - \frac{1}{2\ln 2}\sum_{j\in\mathbb{Z}\setminus\{0\}} \frac{\zeta(\beta_j)}{\beta_j(\beta_j+1)} e^{2\pi ij u}.
\end{equation}

\medskip

The poles of the kernel of the third integral inside $\Gamma$ are 
a double pole at $s=1$, 
simple poles at $s=\alpha_j$ (where $j\in\mathbb{Z}\setminus\{0\}$), 
a double pole at $s=0$ and 
simple poles at $s=\beta_j$ (where $j\in\mathbb{Z}\setminus\{0\}$). 
By summing the residues at all these poles, the third integral evaluates to
\begin{equation}\label{eq:TW-third-integral-exact}
\frac{1}{2\pi i} \int_{3-i\infty}^{3+i\infty} \frac{1}{(2^s-1)(2^s-2)}\zeta(s)\frac{n^s ds}{s(s+1)} = \frac{1}{4}n\lg n + n H_{3,2}(\lg n) + \frac{1}{2}\lg n + H_{3,1}(\lg n),
\end{equation}
where
\begin{equation}\label{eq:w31f}
H_{3,1}(u) = \left(\frac{2\ln\pi+3\ln 2-2}{4\ln 2}\right) - \frac{1}{\ln 2}\sum_{j\in\mathbb{Z}\setminus\{0\}} \frac{\zeta(\beta_j)}{\beta_j(\beta_j+1)} e^{2\pi ij u}
\end{equation}
and
\begin{equation}\label{eq:w32f}
H_{3,2}(u) = \left(\frac{2\gamma_0-3-5\ln 2}{8\ln 2}\right) + \frac{1}{2\ln 2}\sum_{j\in\mathbb{Z}\setminus\{0\}} \frac{\zeta(\alpha_j)}{\alpha_j(\alpha_j+1)} e^{2\pi ij u}.
\end{equation}

\medskip

Combining the three integrals above yields:
\begin{Theorem}\label{thm:TWn-general}
\begin{equation}\label{eq:TWn-general}
TW_1(n) = n F_{W,1}(\lg n) - \frac{1}{4} \lg n + F_{W,0}(\lg n).
\end{equation}
where 
$F_{W,1}(u)$ and $F_{W,0}(u)$ are two absolutely convergent Fourier series, whose
coefficients are given by 
\begin{eqnarray*}
F_{W,1}(u) 
&=& \frac{\ln\pi-\gamma_0+2\ln 2}{4\ln 2} -\frac{1}{2\ln 2}\sum_{j\in\mathbb{Z}\setminus\{0\}}\left(\frac{2\zeta(\beta_j)}{\beta_j(\beta_j+1)(\beta_j+2)} + \frac{\zeta(\alpha_j)}{\alpha_j(\alpha_j+1)}\right)e^{2\pi iju}\\
F_{W,0}(u) &=& 
\frac{2-2\ln\pi-5\ln 2}{8\ln 2} + \frac{1}{2\ln 2}\sum_{j\in\mathbb{Z}\setminus\{0\}}\frac{\zeta(\beta_j)}{\beta_j(\beta_j+1)}e^{2\pi iju}.
\end{eqnarray*}
The average value of $F_{W,1}(u)$ is
$$\frac{\ln\pi-\gamma_0+2\ln 2}{4\ln 2}\approx 0.704687.$$
\end{Theorem}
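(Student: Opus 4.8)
The plan is to combine the three integral evaluations (\ref{eq:TW-first-integral-exact}), (\ref{eq:TW-second-integral-exact}) and (\ref{eq:TW-third-integral-exact}) exactly as dictated by the identity (\ref{eq:TW-integral-exact-zeta-only}), which writes $TW_1(n)$ as (first integral) $+$ (second integral) $-$ (third integral). Substituting the closed forms gives
$$TW_1(n) = \bigl[\tfrac14 n\lg n + nH_1(\lg n) + \tfrac14\bigr] + \bigl[\tfrac14\lg n + H_2(\lg n)\bigr] - \bigl[\tfrac14 n\lg n + nH_{3,2}(\lg n) + \tfrac12\lg n + H_{3,1}(\lg n)\bigr].$$
The first thing to observe is that the two $\tfrac14 n\lg n$ terms cancel, the $\lg n$ terms combine to $\bigl(\tfrac14-\tfrac12\bigr)\lg n = -\tfrac14\lg n$, and collecting the rest produces the claimed shape $n F_{W,1}(\lg n) - \tfrac14\lg n + F_{W,0}(\lg n)$ with
$$F_{W,1}(u) = H_1(u) - H_{3,2}(u), \qquad F_{W,0}(u) = \tfrac14 + H_2(u) - H_{3,1}(u).$$

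Next I would carry out the bookkeeping on the Fourier series, using the explicit formulae (\ref{eq:w1f}), (\ref{eq:w2f}), (\ref{eq:w31f}), (\ref{eq:w32f}). For $F_{W,1}$ the constant terms combine as $\frac{2\ln\pi-\ln 2-3}{8\ln 2} - \frac{2\gamma_0-3-5\ln 2}{8\ln 2} = \frac{\ln\pi-\gamma_0+2\ln 2}{4\ln 2}$, while the $j\neq 0$ terms of $H_1$ contribute the $\zeta(\beta_j)/(\beta_j(\beta_j+1)(\beta_j+2))$ sum and those of $-H_{3,2}$ the $\zeta(\alpha_j)/(\alpha_j(\alpha_j+1))$ sum; factoring out $-\tfrac1{2\ln 2}$ gives the stated coefficients. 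For $F_{W,0}$ the constant is $\tfrac14 + \frac{2\ln\pi-\ln 2-2}{8\ln 2} - \frac{2\ln\pi+3\ln 2-2}{4\ln 2} = \frac{2-2\ln\pi-5\ln 2}{8\ln 2}$, and the coefficient $-\tfrac1{2\ln 2}$ from $H_2$ together with $+\tfrac1{\ln 2}$ from $-H_{3,1}$ of the same series $\sum\zeta(\beta_j)/(\beta_j(\beta_j+1))e^{2\pi iju}$ add to $+\tfrac1{2\ln 2}$ times that series. The average value of $F_{W,1}$ is its constant Fourier coefficient $\frac{\ln\pi-\gamma_0+2\ln 2}{4\ln 2}\approx 0.704687$.

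Finally, absolute convergence of $F_{W,1}$ and $F_{W,0}$ is inherited: each of $H_1, H_2, H_{3,1}, H_{3,2}$ was already obtained as an absolutely convergent Fourier series (via Lemma \ref{thm:sum-residues}, using the bound $\zeta(\beta_j), \zeta(\alpha_j) = O(|j|^{1/2}\log|j|)$ from Lemma \ref{lem:zeta-bound} against the $O(|j|^{-2})$ or $O(|j|^{-3})$ decay of the rational factors), and any finite linear combination of absolutely convergent series is again absolutely convergent.

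There is no deep analytic obstacle remaining for this theorem: the substantive work — reducing to the zeta-only representation (\ref{eq:TW-integral-exact-zeta-only}), showing the three side contours vanish, and summing residues — has already been done. The only point requiring care is the arithmetic of the constants involving $\ln\pi$, $\gamma_0$ and $\ln 2$, and keeping the factor of two correct when merging the two $\beta_j$-indexed series into a single sum; I would double-check these by a numerical evaluation against the stated value $\approx 0.704687$.
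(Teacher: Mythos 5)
Your proposal is correct and follows essentially the same route as the paper: substitute (\ref{eq:TW-first-integral-exact}), (\ref{eq:TW-second-integral-exact}), (\ref{eq:TW-third-integral-exact}) into (\ref{eq:TW-integral-exact-zeta-only}), observe the cancellation of the $\tfrac14 n\lg n$ terms, define $F_{W,1}=H_1-H_{3,2}$ and $F_{W,0}=\tfrac14+H_2-H_{3,1}$, and carry out the coefficient arithmetic, with absolute convergence inherited from the four $H$ series. The only small difference is that the paper uses the sharper bound $\zeta(\alpha_j)=O(\log|j|)$ coming from $\tau(1)=0$ in Lemma \ref{lem:zeta-bound}, while you quote $O(|j|^{1/2}\log|j|)$ for both $\alpha_j$ and $\beta_j$; your weaker bound still suffices since the rational factor in $H_{3,2}$ decays like $|j|^{-2}$, so this does not affect correctness.
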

\begin{proof}
Substituting (\ref{eq:TW-first-integral-exact}), (\ref{eq:TW-second-integral-exact}) and (\ref{eq:TW-third-integral-exact}) into (\ref{eq:TW-integral-exact-zeta-only}) yields
$$TW_1(n) = n\left[H_1(\lg n)-H_{3,2}(\lg n)\right]-\frac{1}{4}\lg n + \left[H_2(\lg n)-H_{3,1}(\lg n)+\frac{1}{4}\right].$$
Setting $F_{W,1}(u) := H_1(u)-H_{3,2}(u)$ and $F_{W,0}(u) := H_2(u)-H_{3,1}(u)+\frac{1}{4}$ gives  (\ref{eq:TWn-general}) and the Fourier series.

Lemma \ref{lem:zeta-bound} gives
$$\zeta(\beta_j) = O(|j|^{1/2}\log |j|)\quad\mbox{and}\quad\zeta(\alpha_j) = O(\log |j|).$$
Hence, as 
$|j|\rightarrow\infty$
the terms in (\ref{eq:w1f}), (\ref{eq:w2f}) and (\ref{eq:w31f}) are $O(|j|^{-3/2}\log |j|)$ and the terms in (\ref{eq:w32f}) are $O(|j|^{-2}\log |j|)$, implying the absolute convergences of $H_1(u)$, $H_2(u)$, $H_{3,1}(u)$ and $H_{3,2}(u)$, and thus $F_{W,1}(u)$ and $F_{W,0}(u)$.
\end{proof}

\section{More Weighted Digital Sums of the Second Type}\label{sect:WDS2}

We now analyze $TW_M(n) = \sum_{j<n} W_M(j)$ as defined by (\ref{eq:def-WMn}) and (\ref{eq:def-TWMn}). 
Again, by  Lemma \ref{thm:key}, 
\begin{equation}\label{eq:TWMn-integral-raw-dgf}
TW_M(n) = \frac{1}{2\pi i}\int_{c-i\infty}^{c+i\infty}B_M(s) \frac{n^s ds}{s(s+1)},
\end{equation}
where $B_M(s)$ is the DGF of $\nabla W_M(j)$ as defined in (\ref{eq:def-BMs}).

As before, we will integrate along contour $\Gamma$ we defined in (\ref{eq:Gamma}), and prove that the integrals along the top, bottom and left contours vanish as $R\rightarrow\infty$, while that on the right contour equals (\ref{eq:TWMn-integral-raw-dgf}) and 
then apply the Cauchy residue theorem.

The DGF $B_M(s)$ for $M >1$ is much more complicated than the DGFs previously encountered in this paper. We will therefore have to introduce new techniques to study it.

\subsection{Properties of Poles of the DGF}\label{subsect:WDS2-DGF}

We saw from (\ref{eq:B1s}) that the order of the poles of $\frac{B_1(s)n^s}{s(s+1)}$
at $s=1$ and $s=\alpha_j$ were all at most $1$.
By Lemma \ref{thm:sum-residues}, this implied that
``coefficient'' of the first order term of $TW_1(n)$, i.e. the $n$ term in (\ref{eq:TWn-general}),
is the Fourier series $F_{W,1}(\lg n)$.
Analogously, for all $M>1$ we will prove that the orders of poles of
$\frac{B_M(s)n^s}{s(s+1)}$ at $s=1$ and $s=\alpha_j$ are all at most $1$.
This will again imply that for $M >1$, $TW_M(n)$ has a
``periodic first-order coefficient''.

To start, we will need the following semi-recursive formula of $B_M(s)$.
\begin{Lemma}\label{lem:BMs-funeq}
For $M>1$,
$B_M(s)$ satisfies 
\begin{equation}\label{eq:BMs-key3}
B_M(s) = \frac{2^s}{2^s-1}\zeta(s)+\frac{1}{2^s-1}\sum_{r=1}^{M-1}\binom{M}{r}B_r(s)
-\frac{2^s}{2^s-2}\sum_{r=1}^M\binom{M}{r}R_r(s),
\end{equation}
where $R_r(s)$ is defined as
\begin{equation}\label{eq:def-Rrs}
R_r(s) := \sum_{i=1}^\infty v(i)^r\left[\frac{1}{(2i)^s}-\frac{1}{(2i+1)^s}\right].
\end{equation}
\end{Lemma}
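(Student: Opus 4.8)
The plan is to derive an independent functional equation for $B_M(s)$ by splitting the defining Dirichlet series over even and odd indices — exactly the device used in the proof of Lemma~\ref{lem:dgf-nabla-WM-exact} — and then to massage the result into the claimed shape using Lemma~\ref{lem:dgf-nabla-WM-exact} itself together with a companion functional equation for $Z_M(s)$. First I would recall from the proof of Lemma~\ref{lem:dgf-nabla-WM-exact} that, for $l\ge1$ (the case $l=0$ being harmless since $v(0)=0$),
$$\nabla W_M(2l)=2\nabla W_M(l)-(v(l)+v_2(l))^M,\qquad \nabla W_M(2l+1)=v(2l+1)^M=(v(l)+1)^M,$$
the last equality coming from Lemma~\ref{lem:property-v-and-v_2}.

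Splitting $B_M(s)=\sum_{j\ge1}\nabla W_M(j)j^{-s}$ into its even- and odd-index parts, the even part contributes $\frac{2}{2^s}B_M(s)-\frac{1}{2^s}Z_M(s)$ (using the definition of $Z_M$), while the odd part is $\sum_{l\ge0}(v(l)+1)^M(2l+1)^{-s}$. I would expand $(v(l)+1)^M=\sum_{r=0}^M\binom Mr v(l)^r$: the $r=0$ term gives $(1-2^{-s})\zeta(s)$, and for $r\ge1$ the elementary identity $\sum_{i\ge1}v(i)^r(2i)^{-s}=2^{-s}V_r(s)$ together with the definition~(\ref{eq:def-Rrs}) of $R_r(s)$ gives $\sum_{l\ge1}v(l)^r(2l+1)^{-s}=2^{-s}V_r(s)-R_r(s)$. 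Collecting terms and solving the resulting linear equation for $B_M(s)$ yields
$$(2^s-2)\,B_M(s)=(2^s-1)\zeta(s)-Z_M(s)+\sum_{r=1}^M\binom Mr V_r(s)-2^s\sum_{r=1}^M\binom Mr R_r(s).$$
The $R_r$-sum already matches the target (after dividing by $2^s-2$), so it remains to show the rest equals $\frac{2^s}{2^s-1}\zeta(s)+\frac{1}{2^s-1}\sum_{r=1}^{M-1}\binom Mr B_r(s)$.

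To do this I would derive, by the same even/odd splitting applied to $Z_M(s)=\sum_j(v(j)+v_2(j))^M j^{-s}$ — now using $v_2(2l)=v_2(l)+1$, $v(2l)=v(l)$, and $v_2(m)=0$ for odd $m$ from Lemma~\ref{lem:property-v-and-v_2}, plus (\ref{eq:DGF-odd-v}) — the companion identity
$$(2^s-1)\,Z_M(s)=\zeta(s)+(2^s-1)V_M(s)+\sum_{r=1}^{M-1}\binom Mr Z_r(s).$$
Substituting this for $Z_M(s)$, and then substituting $Z_r(s)=(2^s-1)V_r(s)-(2^s-2)B_r(s)$ (Lemma~\ref{lem:dgf-nabla-WM-exact}) for $r\le M-1$, I expect the $V_r$-contributions to cancel in pairs, the $\zeta$-coefficient to collapse via $(2^s-1)^2-1=2^s(2^s-2)$, and the surviving $B_r$ ($r<M$) terms to assemble into $(2^s-2)\sum_{r=1}^{M-1}\binom Mr B_r(s)$; dividing by $(2^s-1)(2^s-2)$ then gives the claimed formula. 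All manipulations are valid for $\Re(s)$ large, where every series converges absolutely (since $v(i)^r=O(\log^r i)$), and the identity extends to the whole plane by analytic continuation.

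The main obstacle I anticipate is organizational rather than conceptual: correctly bookkeeping the three families of binomial sums ($B_r$, $V_r$, $Z_r$ for $r<M$) through the two substitutions so that the $V_r$ terms cancel and the $\zeta$-coefficients simplify, together with the minor care needed over the $l=0$ (equivalently $j=1$) boundary term in each splitting.
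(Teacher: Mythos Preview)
Your proposal is correct and uses essentially the same ingredients as the paper: the even/odd splitting with binomial expansion of $(v(l)+1)^M$ that introduces the $R_r$'s, the companion functional equation~(\ref{eq:ZMs-funeq}) for $Z_M(s)$, and Lemma~\ref{lem:dgf-nabla-WM-exact} at levels $r<M$ to convert between $V_r,Z_r,B_r$. The only difference is organizational: the paper first derives a separate functional equation~(\ref{eq:VMs-funeq}) for $V_M(s)$ and then substitutes both~(\ref{eq:VMs-funeq}) and~(\ref{eq:ZMs-funeq}) into Lemma~\ref{lem:dgf-nabla-WM-exact} at level $M$, whereas you fold that $V_M$-step into a direct splitting of $B_M(s)$ itself; the algebra and cancellations are the same either way.
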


\begin{proof}
Lemma \ref{lem:dgf-nabla-WM-exact} gives
\begin{equation}\label{eq:BMS-closed-recall}
B_M(s)=\frac{2^s-1}{2^s-2}V_M(s)-\frac{1}{2^s-2}Z_M(s).
\end{equation}
We now derive two seperate functional equations for  $V_M(s)$ and $Z_M(s)$ and combine them to yield (\ref{eq:BMs-key3}).

\medskip

Recalling from Lemma \ref{lem:property-v-and-v_2} that $v(2i)=v(i)$ and $v(2i+1)=v(i)+1$ gives
\begin{eqnarray*}
V_M(s) &=& \sum_{i\geq 1}\frac{v(i)^M}{(2i)^s} + \sum_{i\geq 0}\frac{(v(i)+1)^M}{(2i+1)^s}\\
&=& \frac{1}{2^s}V_M(s)+1+\sum_{r=1}^M \binom{M}{r}\left(\sum_{i=1}^\infty\frac{v(i)^r}{(2i+1)^s}\right) + \sum_{i=1}^\infty\frac{1}{(2i+1)^s}\\
&=& \frac{1}{2^s}V_M(s)+\sum_{r=1}^M \binom{M}{r}\left(\frac{1}{2^s}V_r(s)-R_r(s)\right) + \left(1-\frac{1}{2^s}\right)\zeta(s).
\end{eqnarray*}

Solving for $V_M(s)$ yields
\begin{equation}\label{eq:VMs-funeq}
V_M(s) = \frac{2^s-1}{2^s-2}\zeta(s) + \frac{1}{2^s-2}\sum_{r=1}^{M-1} \binom{M}{r}V_r(s)-\frac{2^s}{2^s-2}\sum_{r=1}^M\binom{M}{r}R_r(s).
\end{equation}

\medskip

Next, from Lemma \ref{lem:property-v-and-v_2} it is easy to show $v(2i)+v_2(2i)=v(i)+v_2(i)+1$ and $v(2i+1)+v_2(2i+1)=v(2i+1)$. Also, using (\ref{eq:DGF-odd-v}) in Lemma \ref{lem:DGF-closed}, gives
\begin{eqnarray*}
Z_M(s) &=& \sum_{i\geq 1}\frac{(v(i)+v_2(i)+1)^M}{(2i)^s} + \sum_{\mbox{\footnotesize{odd }}j}\frac{v(j)^M}{j^s}\\
&=& \frac{1}{2^s}\sum_{r=0}^M\binom{M}{r}Z_r(s)+\left(1-\frac{1}{2^s}\right)V_M(s).
\end{eqnarray*}

Since $Z_0(s)=\zeta(s)$, this solves to
\begin{equation}\label{eq:ZMs-funeq}
Z_M(s) = V_M(s) + \frac{1}{2^s-1}\zeta(s)+\frac{1}{2^s-1}\sum_{r=1}^{M-1}\binom{M}{r}Z_r(s).
\end{equation}

\medskip

Substituting (\ref{eq:VMs-funeq}) and (\ref{eq:ZMs-funeq}) into (\ref{eq:BMS-closed-recall}) yields
\begin{equation}\label{eq:BMs-key2}
B_M(s) = \frac{2^s}{2^s-1}\zeta(s)+\frac{1}{2^s-1}\sum_{r=1}^{M-1}
\binom{M}{r}\frac{(2^s-1)V_r(s)-Z_r(s)}{2^s-2} -\frac{2^s}{2^s-2}\sum_{r=1}^M\binom{M}{r}R_r(s).
\end{equation}
Finally, using (\ref{eq:BMS-closed-recall}) to simplify the internal terms in  (\ref{eq:BMs-key2}) yields (\ref{eq:BMs-key3}).
\end{proof}

The next lemma gives a ``closed-form'' formula for $B_M(s)$, in terms of $I_k(s)$, previously  defined in (\ref{eq:def-Irs}), and $\zeta(s)$.

\begin{Lemma}
\begin{equation}\label{eq:BMS-form}
B_M(s) = \frac{P_{M,1}(2^s)}{(2^s-1)^M}\zeta(s)
+ \sum_{k=1}^M \frac{P_{M,2,k}(2^s)}{(2^s-1)^{M-k}(2^s-2)} I_k(s),
\end{equation}
where $P_{M,1}(x)$ and $P_{M,2,k}(x)$ are two polynomials, with $P_{1,1}(1)=\frac{1}{2}$, $P_{M,1}(1) = M P_{M-1,1}(1)-M!/2^M$ for $M\geq 2$ and $P_{M,2,k}(0) = 0$ for $k=1,2,\cdots,M$.
\end{Lemma}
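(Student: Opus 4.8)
The plan is to prove (\ref{eq:BMS-form}), together with its side conditions, by strong induction on $M$, using the semi-recursive identity (\ref{eq:BMs-key3}) of Lemma \ref{lem:BMs-funeq} as the engine. The base case $M=1$ is read off directly from (\ref{eq:B1s}): matching $B_1(s)=\frac{2^{s+1}-1}{2(2^s-1)}\zeta(s)+\frac{2^s}{2^s-2}I_1(s)$ against the claimed form forces $P_{1,1}(x)=\frac{2x-1}{2}$ and $P_{1,2,1}(x)=x$, so $P_{1,1}(1)=\frac12$ and $P_{1,2,1}(0)=0$, as required.

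For the inductive step, assume (\ref{eq:BMS-form}) and the side conditions hold for all $r<M$. In (\ref{eq:BMs-key3}) the block $\frac{1}{2^s-1}\sum_{r=1}^{M-1}\binom{M}{r}B_r(s)$ is, by the induction hypothesis, a sum of terms $\frac{1}{2^s-1}\cdot\frac{P_{r,1}(2^s)}{(2^s-1)^r}\zeta(s)$ and $\frac{1}{2^s-1}\cdot\frac{P_{r,2,k}(2^s)}{(2^s-1)^{r-k}(2^s-2)}I_k(s)$ with $r\le M-1$; bringing these to the common denominators $(2^s-1)^M$ and $(2^s-1)^{M-k}(2^s-2)$ only multiplies the numerators by further powers of $(2^s-1)$, which both preserves the shape of (\ref{eq:BMS-form}) and, since each $P_{r,2,k}$ is already divisible by $2^s$, preserves divisibility by $2^s$. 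The explicit term $\frac{2^s}{2^s-1}\zeta(s)=\frac{2^s(2^s-1)^{M-1}}{(2^s-1)^M}\zeta(s)$ is likewise of the right form. Hence the whole burden falls on the block $-\frac{2^s}{2^s-2}\sum_{r=1}^{M}\binom{M}{r}R_r(s)$.

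The crux is an auxiliary claim, proved along the same induction: the rescaled function $\frac{2^s}{2^s-2}R_r(s)$ has a representation of exactly the type (\ref{eq:BMS-form}) — a polynomial in $2^s$ over $(2^s-1)^r$ times $\zeta(s)$, plus a sum over $k\le r$ of a polynomial in $2^s$ divisible by $2^s$, over $(2^s-1)^{r-k}(2^s-2)$, times $I_k(s)$ — with the $\zeta$-numerator taking the value $r!/2^r$ at $2^s=1$. I would obtain this by solving the functional equation (\ref{eq:VMs-funeq}) for $\sum_{j=1}^{r}\binom{r}{j}R_j(s)$, which expresses $\frac{2^s}{2^s-2}R_r(s)$ through $V_r(s)$, lower-index $V_j(s)$ and $R_j(s)$, and $\zeta(s)$; combining this with (\ref{eq:ZMs-funeq}) and Lemma \ref{lem:dgf-nabla-WM-exact}; and, at the bottom of the recursion, with Hwang's representation (\ref{eq:dgf-V(s)-hwang}) of $V_1(s)$ together with the defining identity (\ref{eq:def-Irs}) of $I_k(s)$ for $k\ge2$. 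The reason no transcendental piece besides $\zeta$ and the $I_k$'s survives is that (\ref{eq:def-Irs}) is precisely the discrete second-difference sum $\sum_i v(i)^k[(2i)^{-s}-2(2i+1)^{-s}+(2i+2)^{-s}]$, which is exactly what remains after one splits a level-$k$ Dirichlet series into even and odd parts via $v(2i)=v(i),\ v(2i+1)=v(i)+1$ (Lemma \ref{lem:property-v-and-v_2}) and subtracts the parts absorbed by $\zeta$ and by lower-order $V_j$'s; each index split carries a $2^{-s}$ on the argument, i.e.\ a factor $2^s$ in the numerator after clearing denominators, which is the source of the divisibility $P_{\cdot,2,k}(0)=0$. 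Since each $I_k(s)$ is analytic in $\Re(s)>-1$ by (\ref{eq:Irs-integral}), it contributes no poles there, so the pole structure of $B_M(s)$ in that half-plane is governed entirely by the rational prefactors.

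Granting the auxiliary claim, the side conditions fall out by coefficient extraction. To compute $P_{M,1}(1)$, multiply the $\zeta(s)$-coefficient of $B_M(s)$ produced by (\ref{eq:BMs-key3}) by $(2^s-1)^M$ and set $2^s=1$: every contribution dies except (i) the $r=M-1$ summand of $\frac{1}{2^s-1}\sum_{r<M}\binom{M}{r}B_r(s)$, which leaves $\binom{M}{M-1}P_{M-1,1}(2^s)\big|_{2^s=1}=M\,P_{M-1,1}(1)$, and (ii) the $r=M$ summand of $-\frac{2^s}{2^s-2}\sum_r\binom{M}{r}R_r(s)$, which leaves minus the value at $2^s=1$ of the $\zeta$-numerator of $\frac{2^s}{2^s-2}R_M(s)$, namely $-M!/2^M$; every remaining term still carries a positive power of $(2^s-1)$. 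This yields $P_{M,1}(1)=M\,P_{M-1,1}(1)-M!/2^M$. The condition $P_{M,2,k}(0)=0$ propagates automatically: by the induction hypothesis on the $B_r$ and by the auxiliary claim on the $R_r$, every $I_k(s)$ on the right of (\ref{eq:BMs-key3}) already appears multiplied by $2^s$, and taking common denominators preserves that. I expect the auxiliary claim to be the main obstacle — concretely, the bookkeeping that (a) the extra factors of $2^s-2$ thrown up by (\ref{eq:VMs-funeq})--(\ref{eq:ZMs-funeq}) and by Lemma \ref{lem:dgf-nabla-WM-exact} cancel down to exactly one copy in the $I_k$-coefficients of $B_M$ and to none in the $\zeta$-coefficient, and (b) no stray Dirichlet series (shifted or otherwise reweighted variants of $v$) outlives the unwinding of the recursion; tracking the single numerical invariant $r!/2^r$ for the $\zeta$-content of $\frac{2^s}{2^s-2}R_r(s)$ is the only genuinely non-formal step.
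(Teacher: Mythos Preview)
Your overall architecture matches the paper's: strong induction on $M$, the base case read off from (\ref{eq:B1s}), and the inductive engine (\ref{eq:BMs-key3}) with the block $\frac{1}{2^s-1}\sum_{r<M}\binom{M}{r}B_r(s)$ handled by the hypothesis. Your extraction of the side conditions is also right: once the auxiliary claim about $\frac{2^s}{2^s-2}R_r(s)$ is in hand, multiplying the $\zeta$-coefficient by $(2^s-1)^M$ and setting $2^s=1$ kills every term except the $r=M-1$ contribution from the $B_r$-block and the $r=M$ contribution from the $R_r$-block, giving $P_{M,1}(1)=M\,P_{M-1,1}(1)-M!/2^M$; and $P_{M,2,k}(0)=0$ propagates because every $I_k$ already carries a factor $2^s$ and common-denominator clearing only throws in powers of $(2^s-1)$, which are units at $2^s=0$. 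This is exactly how the paper argues.

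The gap is the auxiliary claim itself, and your proposed route to it is circular. Solving (\ref{eq:VMs-funeq}) for the $R$-sum expresses $\frac{2^s}{2^s-2}R_r$ through $V_r$; to eliminate $V_r$ you invoke Lemma \ref{lem:dgf-nabla-WM-exact} and (\ref{eq:ZMs-funeq}), which together give $V_r$ in terms of $B_r$ and lower $Z_j$'s. For $r<M$ that is fine, but for the top index $r=M$ --- which \emph{is} needed in (\ref{eq:BMs-key3}) --- it requires $B_M$. Since (\ref{eq:BMs-key3}) was obtained precisely by substituting (\ref{eq:VMs-funeq}) and (\ref{eq:ZMs-funeq}) into Lemma \ref{lem:dgf-nabla-WM-exact}, feeding them back in unwinds the derivation and the $B_M$'s cancel to a tautology rather than a solvable equation. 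Your bottom-of-recursion inputs (\ref{eq:dgf-V(s)-hwang}) and (\ref{eq:def-Irs}) do not help here: (\ref{eq:dgf-V(s)-hwang}) only treats $V_1$, and (\ref{eq:def-Irs}) is a definition, not a relation between $I_k$, $\zeta$, and lower $I_j$'s.

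The paper breaks the circle with an external input you are missing. It writes $R_r(s)=D_r(s)-I_r(s)$, where $D_r(s)=\frac{1}{2^{s+1}}\sum_{i\ge 1}\nabla[v(i)^r]\,i^{-s}$ (an Abel-summation identity), and then quotes an explicit formula of Grabner and Hwang \cite{GrHw-2005},
\[
D_r(s)\;=\;\frac{2^s-2}{2^s}\,\zeta(s)\sum_{k=1}^{r}\frac{k!\,S(r,k)}{2^k(2^s-1)^k}
\;+\;\sum_{h=1}^{r-1}\binom{r}{h}\,I_{r-h}(s)\sum_{k=1}^{h}\frac{k!\,S(h,k)}{2^k(2^s-1)^k},
\]
with $S(\cdot,\cdot)$ the Stirling numbers of the second kind. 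This gives the required decomposition of $R_r$ directly, without reference to $V_r$, $Z_r$ or $B_r$ at level $r$; the value $r!/2^r$ you need for the $\zeta$-numerator at $2^s=1$ is the $k=r$ term via $S(r,r)=1$, and the $2^s$-divisibility of every $I_k$-coefficient comes from the overall factor $\frac{2^s}{2^s-2}$ in (\ref{eq:BMs-key3}). Either cite this identity or prove it separately; it is the one genuinely non-formal step you correctly anticipated.
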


\begin{proof}
We prove by induction, using Lemma \ref{lem:BMs-funeq}. 
First, note that (\ref{eq:B1s}) is just the special case of this Lemma for $M=1$. It also gives $P_{1,1}(1)=\frac{1}{2}$.

Now assume the lemma is true for $M<M_0$.  Then,
\begin{eqnarray}
&& \frac{1}{2^s-1}\sum_{r=1}^{M_0-1} \binom{M_0}{r} B_r(s)\nonumber\\
&=& \left(\sum_{r=1}^{M_0-1} \frac{\binom{M_0}{r} P_{r,1}(2^s)}{(2^s-1)^{r+1}}\right)\zeta(s) + \frac{1}{2^s-1} \sum_{r=1}^{M_0-1}\sum_{k=1}^r \frac{\binom{M_0}{r}P_{r,2,k}(2^s)}{(2^s-1)^{r-k}(2^s-2)} I_k(s)\nonumber\\
&=& \left(\sum_{r=1}^{M_0-1} \frac{\binom{M_0}{r} P_{r,1}(2^s)}{(2^s-1)^{r+1}}\right)\zeta(s) + \sum_{k=1}^{M_0-1} I_k(s) \sum_{r=k}^{M_0-1} \frac{\binom{M_0}{r}P_{r,2,k}(2^s)}{(2^s-1)^{r+1-k}(2^s-2)}\nonumber\\
&:=& \frac{T_{M_0,1}(2^s)}{(2^s-1)^{M_0}}\zeta(s) + \sum_{k=1}^{M_0-1} \frac{T_{M_0,2,k}(2^s)}{(2^s-1)^{M_0-k}(2^s-2)} I_k(s),\label{eq:BMs-form-step1}
\end{eqnarray}
where $T_{M_0,1}(x)$ and $T_{M_0,2,k}(x)$'s are polynomials satisfying $T_{M_0,1}(1) = \binom{M_0}{M_0-1} P_{M_0-1,1}(1) = M_0 P_{M_0-1,1}(1)$ and $T_{M_0,2,k}(0) = 0$ for $k=1,2,\cdots,M_0-1$.

Now set
$$D_r(s) := \frac{1}{2^{s+1}}\sum_{i=1}^\infty \frac{\nabla [v(i)^r]}{i^s}.$$
From the definition of $I_r(s)$ in
 (\ref{eq:def-Irs}) and  some algebraic manipulation, 
$$R_r(s) = D_r(s) - I_r(s).$$

Grabner and Hwang \cite{GrHw-2005} proved
$$D_r(s) = \frac{2^s-2}{2^s}\zeta(s)\sum_{k=1}^r\frac{k!S(r,k)}{2^k(2^s-1)^k}
+\sum_{h=1}^{r-1}\binom{r}{h}I_{r-h}(s)\sum_{k=1}^h\frac{k!S(h,k)}{2^k(2^s-1)^k},
$$
where $S(n,k)$ are the Stirling numbers of the second kind.
Noting that $S(n,n)=1$, $R_r(s)$ can be rewritten as
$$R_r(s) = \frac{2^s-2}{2^s}\frac{Q_{r,1}(2^s)}{(2^s-1)^r}\zeta(s) + \sum_{k=1}^r  \frac{Q_{r,2,k}(2^s)}{(2^s-1)^{r-k}} I_k(s),$$
where $Q_{r,1}(x)$ and $Q_{r,2,k}(x)$'s are polynomials satisfying $Q_{r,1}(1) = \frac{r! S(r,r)}{2^r} = r!/2^r$.

This permits writing
\begin{eqnarray}
&& \frac{2^s}{2^s-2}\sum_{r=1}^{M_0} \binom{M_0}{r} R_r(s)\nonumber\\
&=& \left(\sum_{r=1}^{M_0} \frac{\binom{M_0}{r} Q_{r,1}(2^s)}{(2^s-1)^r}\right)\zeta(s) + \frac{2^s}{2^s-2}\sum_{r=1}^{M_0} \sum_{k=1}^r \frac{\binom{M_0}{r}Q_{r,2,k}(2^s)}{(2^s-1)^{r-k}} I_k(s)\nonumber\\
&=& \left(\sum_{r=1}^{M_0} \frac{\binom{M_0}{r} Q_{r,1}(2^s)}{(2^s-1)^r}\right)\zeta(s) + \frac{2^s}{2^s-2} \sum_{k=1}^{M_0} I_k(s) \sum_{r=k}^{M_0}\frac{\binom{M_0}{r}Q_{r,2,k}(2^s)}{(2^s-1)^{r-k}}\nonumber\\
&:=& \frac{U_{M_0,1}(2^s)}{(2^s-1)^{M_0}}\zeta(s) + \frac{2^s}{2^s-2}\sum_{k=1}^{M_0} \frac{U_{M_0,2,k}(2^s)}{(2^s-1)^{M_0-k}} I_k(s),\label{eq:BMs-form-step2}
\end{eqnarray}
where $U_{M_0,1}(x)$ and $U_{M_0,2,k}(x)$'s are polynomials satisfying $U_{M_0,1}(1) =  M_0! / 2^{M_0}$.

Substituting (\ref{eq:BMs-form-step1}) and (\ref{eq:BMs-form-step2}) back into (\ref{eq:BMs-key3}) gives
\begin{eqnarray*}
B_M(s) &=& \frac{2^s}{2^s-1}\zeta(s) + \left(\frac{T_{M_0,1}(2^s)}{(2^s-1)^{M_0}}\zeta(s) + \sum_{k=1}^{M_0-1} \frac{T_{M_0,2,k}(2^s)}{(2^s-1)^{M_0-k}(2^s-2)} I_k(s)\right)\\
&&\qquad\qquad- \left(\frac{U_{M_0,1}(2^s)}{(2^s-1)^{M_0}}\zeta(s) + \sum_{k=1}^{M_0} \frac{2^s U_{M_0,2,k}(2^s)}{(2^s-1)^{M_0-k}(2^s-2)} I_k(s)\right).
\end{eqnarray*}
The lemma is proved for $M=M_0$ by setting $P_{M_0,1}(x) := x(x-1)^{M_0-1} + T_{M_0,1}(x) - U_{M_0,1}(x)$, $P_{M_0,2,k}(x) := T_{M_0,2,k}(x) - x U_{M_0,2,k}(x)$ for $k=1,2,\cdots,M_0-1$ and $P_{M_0,2,M_0}(x) = -x U_{M_0,2,M_0}(x)$.
\end{proof}

We can now find the poles of $B_M(s)$ inside $\Gamma$. See Figure \ref{fig:contour_gamma2} for locations.
\begin{Corollary}\label{thm:BMs-asym-behaviour}
For $M\geq 1$, The singularities of $B_M(s)$ inside $\Gamma$ are\\
(i) poles of order at most $1$ at $s=1$ and $s=\alpha_j$; and\\
(ii) poles of order $M$ at $s=0$ and $s=\beta_j$. 

Hence, the singularities of $\frac{B_M(s)n^s}{s(s+1)}$ inside $\Gamma$ are\\
(i) poles of order at most $1$ at $s=1$ and $s=\alpha_j$;\\
(ii)  a pole of order $M+1$ at $s=0$;  and\\
(iii)  poles of order $M$ at $s=\beta_j$.
\end{Corollary}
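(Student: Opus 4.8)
The plan is to read off every singularity directly from the closed‑form formula (\ref{eq:BMS-form})
$$B_M(s) = \frac{P_{M,1}(2^s)}{(2^s-1)^M}\zeta(s)+ \sum_{k=1}^M \frac{P_{M,2,k}(2^s)}{(2^s-1)^{M-k}(2^s-2)} I_k(s),$$
handling the ``$\zeta$ term'' and the ``$I_k$ terms'' separately. The only possible sources of poles here are the simple pole of $\zeta(s)$ at $s=1$, the zeros of $2^s-1$, and the zeros of $2^s-2$: the polynomials $P_{M,1},P_{M,2,k}$ are entire, and each $I_k(s)$ is analytic in $\Re(s)>-1$ (from its summation definition (\ref{eq:def-Irs}), the second‑difference cancellation giving convergence there). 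Since $2^s-1$ vanishes, simply, exactly at $s=\beta_j$ ($j\in\mathbb Z$, with $\beta_0=0$), and $2^s-2$ vanishes, simply, exactly at $s=\alpha_j$ ($j\in\mathbb Z$, with $\alpha_0=1$), and all these points have real part $0$ or $1$, they lie strictly inside the strip $-1/4<\Re(s)<3$ bounding $\Gamma$ and avoid its vertical sides; choosing $R=R_j$ as in the previous sections keeps the horizontal sides free of poles. So the only candidate singularities of $B_M(s)$ inside $\Gamma$ are $s=1$, $s=\alpha_j$, $s=0$, $s=\beta_j$ ($j\ne0$), and it remains to determine the order at each.

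I would then examine each candidate. At $s=0$ and $s=\beta_j$ ($j\ne0$): $2^s\to1$, so $(2^s-1)^M$ has a zero of order $M$; $2^s-2\to-1\ne0$; $\zeta$ is analytic and nonzero there ($\zeta(0)=-\tfrac12$, and $\zeta$ has no zero on $\Re(s)=0$); and each $I_k$ is analytic. Hence the first term has a pole of order exactly $M$ (provided $P_{M,1}(1)\ne0$), while the $k$-th summand has a pole of order at most $M-k\le M-1$; so $B_M(s)$ has a pole of order exactly $M$. The needed fact $P_{M,1}(1)\ne0$ comes from solving the recursion stated with (\ref{eq:BMS-form}): setting $b_M:=P_{M,1}(1)/M!$ gives $b_1=\tfrac12$ and $b_M=b_{M-1}-2^{-M}$, whence $b_M=2^{-M}$, i.e. $P_{M,1}(1)=M!/2^M\ne0$. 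At $s=1$: $\zeta$ has a simple pole, $2^s-1=1\ne0$, and $2^s-2$ has a simple zero, so the first term has a simple pole and each summand a pole of order at most $1$; thus $B_M(s)$ has a pole of order at most $1$. At $s=\alpha_j$ ($j\ne0$): $\zeta$ is analytic, $2^s-1=1\ne0$, and only $2^s-2$ vanishes (simply), so every term—and hence $B_M(s)$—has a pole of order at most $1$. This establishes parts (i)–(ii) for $B_M(s)$.

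Finally, for $\dfrac{B_M(s)n^s}{s(s+1)}$ I would simply multiply through: $n^s$ is entire and nonvanishing, and $\dfrac{1}{s(s+1)}$ contributes a simple pole at $s=0$ and at $s=-1$, the latter lying to the left of $\Gamma$ and so irrelevant. Therefore the orders are unchanged at $s=1$, $s=\alpha_j$, $s=\beta_j$ ($j\ne0$)—poles of order at most $1$ at $s=1,\alpha_j$ and poles of order $M$ at $s=\beta_j$—while at $s=0$ the order-$M$ pole of $B_M(s)$ combines with the simple pole of $\dfrac{1}{s(s+1)}$ to give a pole of order $M+1$, as claimed. (For $M=1$ this was already visible from (\ref{eq:B1s}); the argument above covers all $M\ge1$ uniformly.)

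The only mildly delicate point is confirming $P_{M,1}(1)\ne0$, which is what pins the pole order at $s=0$ and $s=\beta_j$ to exactly $M$ rather than something smaller; the rest is bookkeeping of zeros and poles in (\ref{eq:BMS-form}).
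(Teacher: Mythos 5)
Your proposal is correct and follows essentially the same route as the paper: read the singularities off formula (\ref{eq:BMS-form}), use the analyticity of $I_k(s)$ for $\Re(s)>-1$, and pin the order at $s=0,\beta_j$ to exactly $M$ via the nonvanishing of $P_{M,1}(1)$. Your only addition is making the latter fully explicit — solving the recursion to get $P_{M,1}(1)=M!/2^M$ — where the paper contents itself with the observation that the recursion forces $P_{M,1}(1)>0$; both suffice, and the rest of the bookkeeping matches the paper's argument.
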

\begin{proof}
(\ref{eq:BMS-form}) permits us to identify the singularities by working through the
various terms and recalling that $I_k(s)$ is analytic when $\Re(s) > -1.$

The recurrence relations  $P_{M,1}(1) = M P_{M-1,1}(1)-M!/2^M$ with initial condition $P_{1,1}(1)=1/2$ give $P_{M,1}(1)>0$ for $M\geq 1$. 
Hence at $s=0,\beta_j$, $P_{M,1}(2^s)/(2^s-1)^M$ has poles of order \emph{exactly} $M$, while $\zeta(s)$ is analytic (but is not  zero). 

At $s=\alpha_j,$ 
$P_{M,1}(2^s)/(2^s-1)^M$ and $\zeta(s)$ are all analytic.

At $s=1$,  $P_{M,1}(2^s)/(2^s-1)^M$ is analytic, but $\zeta(s)$ has a simple pole.

At $s=0,\beta_j$,
the order of poles of $P_{M,2,k}(2^s)(2^s-1)^{-(M-k)}(2^s-2)^{-1} I_k(s)$  is {\em at most} $M-1$. 

At $s=1,\alpha_j$, 
$P_{M,2,k}(2^s)(2^s-1)^{-(M-k)}(2^s-2)^{-1} I_k(s)$ has poles of order at most $1$  
(due to the term  $(2^s-2)^{-1}$).
\end{proof}

\begin{figure}[t]
\vspace*{-.1in}
\centering%
\scalebox{0.38}{\includegraphics{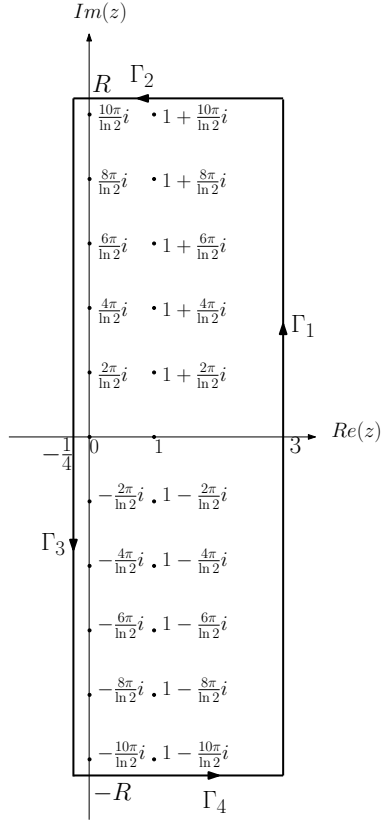}}
\caption[The Contour $\Gamma$ for evaluating $TW_M(n)$]{The figure is contour $\Gamma$ defined in (\ref{eq:Gamma}). The dots represent the poles of $\frac{B_M(s)n^s}{s(s+1)}$ inside $\Gamma$.}
\label{fig:contour_gamma2}
\end{figure}

\subsection{A Formula for $TW_M(n)$}\label{subsect:WDS2-evaluation}

As in the previous problems, we must again first show that the integrals along the top, bottom and left contours vanish as $R\rightarrow\infty$. 

We need two basic observations.
Suppose $H(s) = P(2^s)(2^s-1)^{-N_1}(2^s-2)^{-N_2}$, where $P$ is a polynomial and $N_1,N_2$ are non-negative integers.\\
\\
\textbf{\underline{Fact 1:}} When $\Re(s)<0$, $H(s)$ can be expressed as a power series of $2^s$, and this series is absolutely and uniformly convergent on the line $\Re(s)+(-\infty,+\infty)i$.\\[.05in]
Furthermore,  if $P(0)=0$, i.e. the constant term of $P$ is zero, then the constant term of the power series is also zero.\\[.1in]
\textbf{\underline{Fact 2:}} $H(s)$ is bounded along the line segment $(-1/4,3)+iR_j$ independently of $j$.

\begin{Lemma}\label{thm:bound_top_bottom_BMs}
$$\lim_{j\rightarrow\infty}\int_{-1/4+iR_j}^{3+iR_j} \frac{B_M(s)n^s}{s(s+1)}ds = 0.$$
\end{Lemma}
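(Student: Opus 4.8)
The plan is to use the closed form (\ref{eq:BMS-form}) of $B_M(s)$ to split the integrand $\frac{B_M(s)n^s}{s(s+1)}$ into a ``$\zeta$-part'', namely $\frac{P_{M,1}(2^s)}{(2^s-1)^M}\cdot\frac{\zeta(s)n^s}{s(s+1)}$, and $M$ ``$I_k$-parts'', namely $\frac{P_{M,2,k}(2^s)}{(2^s-1)^{M-k}(2^s-2)}\cdot\frac{I_k(s)n^s}{s(s+1)}$ for $1\le k\le M$, and to show that as $j\to\infty$ each of these $M+1$ functions tends to $0$ uniformly along the segment $\{\sigma+iR_j:-1/4\le\sigma\le 3\}$; since the segment has the fixed length $\frac{13}{4}$, this gives the claim. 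Throughout, $n$ and $M$ are fixed and we keep $R=R_j=\frac{(2j+0.5)\pi}{\ln 2}$, so that $2^s=2^\sigma i$ on the segment; in particular $|2^s-1|\ge 1$ and $|2^s-2|\ge 2$ there, and Fact 2 applies to the rational-in-$2^s$ factors.

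For the $\zeta$-part, Fact 2 shows $\frac{P_{M,1}(2^s)}{(2^s-1)^M}$ is bounded on the segment independently of $j$; since also $|s(s+1)|\ge R_j^2$ there, the function $f(s):=\frac{P_{M,1}(2^s)}{(2^s-1)^M s(s+1)}$ satisfies $|f(s)|=O(R_j^{-2})=O(|t|^{-2})$ on the segment. The $\zeta$-part is $f(s)\zeta(s)n^s$, so Lemma \ref{thm:bound_top_bottom_zeta_general} (with $a=\tfrac14$) shows that its integral along the segment tends to $0$ as $j\to\infty$.

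For an $I_k$-part the key observation is that the $\frac1{s(s+1)}$ in the kernel cancels against a factor $s(s+1)$ produced by integrating the representation (\ref{eq:Irs-integral}) by parts. Put $\Xi_0(x):=\int_1^x\xi(t)\,dt$; this is bounded by $\tfrac14$ and vanishes at every positive integer, since each unit interval contributes $\int\xi=0$. Integrating (\ref{eq:Irs-integral}) by parts — the boundary terms vanish because $\Xi_0(1)=0$ and $v(\lfloor x\rfloor)^k x^{-s-1}\to 0$ for $\Re(s)>-1$, and the jumps of the step function $v(\lfloor x\rfloor)^k$ at integers contribute nothing because $\Xi_0$ vanishes there — yields $I_k(s)=\frac{s(s+1)}{2^s}\int_1^\infty v(\lfloor x\rfloor)^k x^{-s-2}\Xi_0(x)\,dx$, hence $\frac{I_k(s)}{s(s+1)}=\frac1{2^s}\int_1^\infty v(\lfloor x\rfloor)^k x^{-s-2}\Xi_0(x)\,dx$. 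Thus the $I_k$-part equals $\frac{P_{M,2,k}(2^s)}{(2^s-1)^{M-k}(2^s-2)}\cdot\frac{n^s}{2^s}\int_1^\infty v(\lfloor x\rfloor)^k x^{-s-2}\Xi_0(x)\,dx$; here the rational factor is bounded on the segment by Fact 2, $|n^s/2^s|=(n/2)^\sigma$ is bounded on the segment by a constant depending only on $n$, and the remaining integral tends to $0$ as $|t|=R_j\to\infty$, uniformly for $\sigma\in[-\tfrac14,3]$, by the Riemann--Lebesgue lemma. For the uniformity: substituting $x=e^u$ turns the integral into the Fourier transform at $t$ of $v(\lfloor e^u\rfloor)^k e^{-(\sigma+1)u}\Xi_0(e^u)$, whose modulus is dominated, for all $\sigma\ge-\tfrac14$ simultaneously, by the $L^1(0,\infty)$ function $\tfrac14(\tfrac u{\ln 2}+1)^k e^{-3u/4}$ (using $v(m)\le\log_2 m+1$); since $\sigma\mapsto v(\lfloor e^u\rfloor)^k e^{-(\sigma+1)u}\Xi_0(e^u)$ is continuous from $[-\tfrac14,3]$ into $L^1(0,\infty)$ (dominated convergence), its image is compact in $L^1$, and Riemann--Lebesgue is uniform on compact subsets of $L^1$. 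Consequently each $I_k$-part is $o(1)$ uniformly on the segment, its integral over the segment tends to $0$, and summing the $\zeta$-part and the $M$ many $I_k$-parts proves the lemma.

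The main obstacle is precisely the $I_k$-terms. The naive estimate $|I_k(\sigma+it)|=O(|t|^2)$, obtained by bounding the integral in (\ref{eq:Irs-integral}) trivially, exactly cancels the $O(|t|^{-2})$ coming from $\frac1{s(s+1)}$ and leaves only an $O(1)$ bound on the integrand — which does not vanish when integrated over a segment of fixed length. The role of the integral representation is to peel off the $s(s+1)$ explicitly, so that what remains is a genuinely oscillatory integral of an $L^1$ function, to which Riemann--Lebesgue can be applied; the only subtlety beyond that is ensuring the decay in $t$ is uniform in $\sigma$ over the interval $[-\tfrac14,3]$.
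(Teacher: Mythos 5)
Your proof is correct, but it takes a genuinely different route from the paper for the $I_k$-parts. The paper simply cites a bound from Grabner--Hwang, $|I_r(\sigma+it)| = O(|t|^{3/4}\log^{2r}|t|)$ on the segment; combined with Fact~2 and Lemma~\ref{lem:zeta-bound} this yields $|B_M(s)| = o(|j|)$ along the horizontal path, which together with $|s(s+1)|^{-1} = O(|j|^{-2})$ and the fixed segment length $\tfrac{13}{4}$ gives the result. By contrast, you avoid the Grabner--Hwang estimate altogether: you integrate the representation (\ref{eq:Irs-integral}) by parts (using the primitive $\Xi_0$ of $\xi$, which vanishes at every integer so the jump contributions from the step function $v(\lfloor x\rfloor)^k$ drop out), absorb the $s(s+1)$ from the kernel into the resulting representation, and then invoke the Riemann--Lebesgue lemma together with an $L^1$-compactness argument to get decay that is uniform in $\sigma\in[-\tfrac14,3]$. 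The integration by parts is the key extra idea: it upgrades the trivial $O(|t|^2)$ bound on $I_k$ (which would merely cancel the $O(|t|^{-2})$ from $1/[s(s+1)]$ and leave an $O(1)$ integrand) to a genuine $o(|t|^2)$, which is exactly what is needed. What each approach buys: the paper's proof is a one-line citation of an external power-law bound; yours is self-contained at the cost of a bit more Fourier-analytic machinery (uniform Riemann--Lebesgue on compact families in $L^1$), and it implicitly gives the slightly sharper statement that $I_k(s)/[s(s+1)]$ is $o(1)$ uniformly on the segment, rather than just the $o(|j|^{-1})$ needed. Both are valid.
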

\begin{proof}
For $s\in\left(-1/4,3\right)+iR_j$, Grabner and Hwang \cite{GrHw-2005} proved that
$$|I_M(s)| = O(|j|^{3/4}\log^{2M}|j|) = o(|j|).$$
Furthermore,  Lemma \ref{lem:zeta-bound} gives
$$|\zeta(s)| = O(|j|^{3/4}\log |j|) = o(|j|).$$

By  (\ref{eq:BMS-form}) and Fact 2, $|B_M(s)|$ is bounded by $o(|j|)$ along  $\left(-1/4,3\right)+iR_j$. Hence
\begin{eqnarray*}
\left|\int_{-1/4+iR_j}^{3+iR_j}\frac{B_M(s)n^s}{s(s+1)}ds\right|
&\leq & \int_{-1/4+iR_j}^{3+iR_j} \left|\frac{B_M(s)n^s}{s(s+1)}\right| ds\\
&\leq & \int_{-1/4+iR_j}^{3+iR_j} \left(o(|j|)\times O(|j|^{-2})\times n^3\right)ds\\
&=& \frac{13}{4}o(|j|^{-1})n^3\\
&\rightarrow& 0
\end{eqnarray*}
as $j\rightarrow\infty$.
\end{proof}

\begin{Lemma}\label{thm:left=0-BMs}
For any positive integer $M$,
$$\int_{-1/4-i\infty}^{-1/4+i\infty}B_M(s) \frac{n^s ds}{s(s+1)} = 0.$$
\end{Lemma}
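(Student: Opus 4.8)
The plan is to feed the closed-form representation (\ref{eq:BMS-form}) of $B_M(s)$ into the integral, splitting $B_M(s)$ into the single $\zeta$-term $\frac{P_{M,1}(2^s)}{(2^s-1)^M}\zeta(s)$ and the $M$ terms $\frac{P_{M,2,k}(2^s)}{(2^s-1)^{M-k}(2^s-2)}I_k(s)$, $k=1,\dots,M$. Along the line $\Re(s)=-1/4$ we have $|2^s|=2^{-1/4}<1$, whereas each rational prefactor above has all of its poles on the circles $|2^s|=1$ or $|2^s|=2$; hence each prefactor is analytic in the disc $|2^s|<1$ and, by Fact 1, has there an absolutely and uniformly convergent expansion as a power series in $2^s$ --- and, because $P_{M,2,k}(0)=0$, the expansion of the $I_k$-prefactor has \emph{vanishing constant term}. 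Using $\zeta(\sigma+it)=O(|t|^{3/4}\log|t|)$ (Lemma \ref{lem:zeta-bound}) and $|I_k(\sigma+it)|=O(|t|^{3/4}\log^{2k}|t|)$ for $-1/4\le\sigma\le3$ (Grabner and Hwang \cite{GrHw-2005}), the integrand $B_M(s)\,\frac{n^s}{s(s+1)}$ is $O(|t|^{-5/4}\log^{2M}|t|)$ on $\Re(s)=-1/4$, so the integral converges absolutely and equals the sum of the integrals of the individual pieces; it remains to show that each of these vanishes.

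For the $\zeta$-term, write $\frac{P_{M,1}(2^s)}{(2^s-1)^M}=\sum_{j\ge0}e_j(2^j)^s$ on $\Re(s)=-1/4$; the first assertion of Lemma \ref{thm:bound_left_zeta_general}, applied with $g(s)=\sum_{j\ge0}e_j(2^j)^s$, gives at once $\int_{-1/4-i\infty}^{-1/4+i\infty}\frac{P_{M,1}(2^s)}{(2^s-1)^M}\zeta(s)\,\frac{n^s\,ds}{s(s+1)}=0$. For the $k$-th $I_k$-term, write $\frac{P_{M,2,k}(2^s)}{(2^s-1)^{M-k}(2^s-2)}=\sum_{j\ge1}c_{k,j}\,2^{js}$ (no $j=0$ term) and interchange this uniformly convergent series with the integral --- legitimate because the integrand is dominated along the line by $O(|t|^{-5/4}\log^{2k}|t|)$. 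This reduces the lemma to the single sub-identity
$$\frac{1}{2\pi i}\int_{-1/4-i\infty}^{-1/4+i\infty} 2^{js}\,I_k(s)\,\frac{n^s\,ds}{s(s+1)} \;=\; 0 \qquad\text{for every integer } j\ge1 .$$

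To prove the sub-identity, fix $j\ge1$ and $k$. Since $I_k(s)$ is analytic for $\Re(s)>-1$ with $I_k(0)=0$ (the factor $s$ in (\ref{eq:Irs-integral})), the only pole of $2^{js}I_k(s)\,\frac{n^s}{s(s+1)}$ in the strip $-1/4\le\Re(s)\le3$ is the simple pole at $s=0$, whose residue is $I_k(0)=0$; moreover the integrals over the horizontal sides of the rectangle vanish as the height tends to infinity, again by the $O(|t|^{-5/4}\log^{2k}|t|)$ bound. Shifting the contour therefore gives $\frac{1}{2\pi i}\int_{-1/4-i\infty}^{-1/4+i\infty}2^{js}I_k(s)\,\frac{n^s\,ds}{s(s+1)}=\frac{1}{2\pi i}\int_{3-i\infty}^{3+i\infty}2^{js}I_k(s)\,\frac{n^s\,ds}{s(s+1)}$. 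On $\Re(s)=3$ the series $I_k(s)=-\tfrac12\sum_{i\ge1}v(i)^k\big[(2i)^{-s}-2(2i+1)^{-s}+(2i+2)^{-s}\big]$ converges absolutely, so I may multiply by $2^{js}$, integrate term by term, and use $\frac{1}{2\pi i}\int_{3-i\infty}^{3+i\infty}\frac{y^s\,ds}{s(s+1)}=\max(1-1/y,\,0)=:G(y)$. Setting $N:=2^{j-1}n$ --- a \emph{positive integer}, since $j\ge1$ --- the $i$-th summand is a constant multiple of the discrete second difference
$$G\!\left(\tfrac{N}{i}\right)-2\,G\!\left(\tfrac{2N}{2i+1}\right)+G\!\left(\tfrac{N}{i+1}\right),$$
which a three-case check ($i\le N-2$, $i=N-1$, $i\ge N$) shows is identically $0$: for $i\le N-2$ the arguments exceed $1$ and the bracket collapses by $(1-\tfrac{i}{N})-2(1-\tfrac{2i+1}{2N})+(1-\tfrac{i+1}{N})=0$; for $i\ge N$ all three values of $G$ are $0$; and for the boundary index $i=N-1$ it equals $\tfrac1N-2\cdot\tfrac1{2N}+0=0$. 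Hence the term-by-term sum is $0$, the sub-identity holds, and undoing the reductions yields $\int_{-1/4-i\infty}^{-1/4+i\infty}B_M(s)\,\frac{n^s\,ds}{s(s+1)}=0$.

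I expect the main obstacle to be not any single step but the combination of (i) importing correctly the analyticity domain $\Re(s)>-1$ and growth bound $O(|t|^{3/4}\log^{2k}|t|)$ for $I_k(s)$, (ii) justifying the two interchanges of summation and integration, and (iii) the arithmetical point that the boundary summand $i=2^{j-1}n-1$ cancels --- which is precisely where the hypothesis $j\ge1$, equivalently $P_{M,2,k}(0)=0$, is indispensable, since for $j=0$ and odd $n$ that summand does \emph{not} vanish and the corresponding integral is nonzero.
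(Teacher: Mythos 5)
Your proof is correct, and for the key step it takes a genuinely different route from the paper. Both you and the paper start from the decomposition (\ref{eq:BMS-form}), expand each rational prefactor as a power series in $2^s$ (Fact~1), exploit $P_{M,2,k}(0)=0$ to get a vanishing constant term in the $I_k$-prefactors, and justify interchanging $\sum$ and $\int$ by uniform convergence, so that everything reduces to $\int 2^{js}\zeta(s)\frac{n^s\,ds}{s(s+1)}=0$ (with $j\ge 0$, handled by (\ref{eq:left=0-m=1}) / Lemma~\ref{thm:bound_left_zeta_general}) and to
$$\frac{1}{2\pi i}\int_{-1/4-i\infty}^{-1/4+i\infty}\frac{(2^j n)^s}{s(s+1)}\,I_k(s)\,ds=0\qquad(j\ge1).$$
For this last identity the paper invokes Hwang's general cancellation theorem (stated as Theorem~\ref{thm:left=0-general}) as a black box. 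You instead give a self-contained proof: shift the contour to $\Re(s)=3$ using $I_k(0)=0$ (so no residue at $s=0$) and the analyticity of $I_k$ in $\Re(s)>-1$, expand $I_k$ as its defining Dirichlet series of second differences, evaluate term by term with the classical step $\frac{1}{2\pi i}\int\frac{y^s\,ds}{s(s+1)}=\max(1-1/y,0)$, and observe that the discrete second difference collapses to zero exactly when $N=2^{j-1}n$ is a positive integer --- which is precisely where $j\ge1$, equivalently $P_{M,2,k}(0)=0$, is used. Your cancellation check for $i\le N-2$, $i=N-1$, $i\ge N$ is correct. The trade-off: your argument is more elementary and makes the role of the hypothesis $P_{M,2,k}(0)=0$ completely transparent, essentially re-deriving the special case of Hwang's theorem that is needed here; the paper's citation is shorter and immediately covers all arithmetic functions $u(x)=u_{\lfloor x\rfloor}$ with polynomial-type growth, not just $v(\lfloor x\rfloor)^k$.
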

\begin{proof}
Grabner and Hwang \cite{GrHw-2005} proved the bound
$$\left|I_r\left(-\frac{1}{4}+it\right)\right| = O(|t|^{3/4}\log^{2r}|t|) = O(|t|^{3/4+\delta})$$
for any $\delta>0$. This upper bound allows us to use a theorem from Hwang \cite{HWANG-1998} to prove

Hwang \cite{HWANG-1998} proved the following theorem:
\begin{Theorem}\label{thm:left=0-general}
Suppose $U(s) = s2^{-s}\int_1^\infty u(x)\xi(x) x^{-s-1} dx$ for some nonnegative, real arithmetic function $u(x) = u_{\lfloor x\rfloor}$. If
\begin{enumerate}
\item $U(s)$ converges for $\Re(s)>\sigma_u$, where $\sigma_u<\sigma$,
\item $|U(\sigma+it)|=O(|t|^\delta)$ for some $0<\delta<1$,
\end{enumerate}
then we have
$$\frac{1}{2\pi i}\int_{\sigma-i\infty}^{\sigma+i\infty}\frac{(2^kn)^s}{s(s+1)} U(s)ds = 0$$
for all integers $k,n\geq 1$.
\end{Theorem}
Grabner and Hwang \cite{GrHw-2005} proved the bound
$$\left|I_r\left(-\frac{1}{4}+it\right)\right| = O(|t|^{3/4}\log^{2r}|t|) = O(|t|^{3/4+\delta})$$
for any $\delta>0$, which enables us to use Theorem \ref{thm:left=0-general} to get
\begin{equation}\label{eq:bound_left_Irs_general}
\int_{-1/4-i\infty}^{-1/4+i\infty}\frac{(2^k n)^s}{s(s+1)}I_r(s) ds = 0
\end{equation}
for positive integers $k,n,r$.

(\ref{eq:BMS-form}) shows that $B_M(s)$ can be expressed in the form of
$$B_M(s) = \frac{P_{M,1}(2^s)}{(2^s-1)^M}\zeta(s)
+ \sum_{k=1}^M \frac{P_{M,2,k}(2^s)}{(2^s-1)^{M-k}(2^s-2)} I_k(s),$$
while $P_{M,2,k}(0)=0$. By Fact 1, when $\Re(s)=-1/4$, $P_{M,1}(2^s)(2^s-1)^{-M}$ and $P_{M,2,k}(2^s)(2^s-1)^{-(M-k)}(2^s-2)^{-1}$ can be expressed as power series of $2^s$, and the power series for $P_{M,2,k}(2^s)(2^s-1)^{-(M-k)}(2^s-2)^{-1}$ have zero constant terms. Hence, when $\Re(s)=-1/4$, we may rewrite $B_M(s)$ to be
$$B_M(s) = \sum_{j=0}^\infty p_j 2^{js} \zeta(s) + \sum_{k=1}^M\sum_{j=1}^\infty q_{k,j} 2^{js} I_k(s)$$
for some $\{p_j\}$ and $\{q_{k,j}\}$. Hence
\begin{eqnarray*}
&& \int_{-1/4-i\infty}^{-1/4+i\infty} B_M(s) \frac{n^s ds}{s(s+1)}\\
&=& \int_{-1/4-i\infty}^{-1/4+i\infty} \left(\sum_{j=0}^\infty p_j (2^j n)^s \zeta(s) + \sum_{k=1}^M\sum_{j=1}^\infty q_{k,j} (2^j n)^s I_k(s)\right)\,\frac{ds}{s(s+1)}.
\end{eqnarray*}
However, the power series $\sum_{j=0}^\infty p_j (2^j n)^s$ and $\sum_{j=1}^\infty q_{k,j} (2^j n)^s$ are uniformly convergent on $-1/4+(-\infty,\infty)i$, by Fact 1. 
This allows interchange of the integral sign and the summation signs.

Hence, $\int_{-1/4-i\infty}^{-1/4+i\infty}B_M(s) \frac{n^s ds}{s(s+1)}$ can be expressed as a series, in which each term is either a constant multiplied by an  integral in the form of  (\ref{eq:bound_left_Irs_general}), or a constant multiplied by an  integral in the form of (\ref{eq:left=0-m=1}).
\end{proof}

We can now state our final result. 
\begin{Theorem}
\begin{equation}\label{eq:TWMn-general}
TW_M(n) = n G_M(\lg n) + d_M \lg^M n + \sum_{d=0}^{M-1} \left(\lg^d n\right)G_{M,d}(\lg n),
\end{equation}
where $d_M$ is a constant, $G_M(u)$ and $G_{M,d}(u)$'s are periodic functions with period one given by absolutely convergent Fourier series.
\end{Theorem}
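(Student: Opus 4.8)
The plan is to repeat, essentially verbatim, the contour-integration scheme of Sections~3--5, now applied to the kernel $\frac{B_M(s)n^s}{s(s+1)}$. By Lemma~\ref{thm:key} the quantity $TW_M(n)$ equals the integral in \eqref{eq:TWMn-integral-raw-dgf} along $\Re(s)=3$; I truncate that line to $|\Im(s)|\le R_j$ with $R_j=\frac{(2j+0.5)\pi}{\ln 2}$, close the contour to the rectangle $\Gamma$ of \eqref{eq:Gamma}, and invoke Lemma~\ref{thm:bound_top_bottom_BMs} and Lemma~\ref{thm:left=0-BMs} (already proved) to show the top, bottom and left edges contribute $0$ as $j\to\infty$. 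The Cauchy residue theorem then gives $TW_M(n)=\lim_{j\to\infty}$ (sum of residues of $\frac{B_M(s)n^s}{s(s+1)}$ inside $\Gamma$). Corollary~\ref{thm:BMs-asym-behaviour} tells us these residues sit on exactly two vertical lines: simple poles at $s=1$ and $s=\alpha_j$ on $\Re(s)=1$, and an order-$(M{+}1)$ pole at $s=0$ together with order-$\le M$ poles at $s=\beta_j$ on $\Re(s)=0$. I will evaluate the total residue on each line with Lemma~\ref{thm:sum-residues}, the two contributions producing respectively the $nG_M(\lg n)$ piece and the $d_M\lg^M n+\sum_{d}(\lg^d n)G_{M,d}(\lg n)$ piece of \eqref{eq:TWMn-general}.

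For the line $\Re(s)=1$ I would use the closed form \eqref{eq:BMS-form} to write, near that line,
$$B_M(s)=\frac{1}{2^s-2}\Bigl[(2^s-2)\frac{P_{M,1}(2^s)}{(2^s-1)^M}\zeta(s)+\sum_{k=1}^M\frac{P_{M,2,k}(2^s)}{(2^s-1)^{M-k}}I_k(s)\Bigr],$$
taking $L(s)=(2^s-2)^{-1}$ (a simple pole with identical Laurent coefficients at every $\alpha_j$, including $\alpha_0=1$) and $f(s)$ the bracketed factor. The factor $(2^s-2)$ cancels the pole of $\zeta$ at $s=1$, so $f$ is analytic in a neighbourhood of the whole line (recall $I_k$ is analytic for $\Re(s)>-1$ and $2^s\ne1$ there), whence $f(s)/(s(s+1))$ is analytic at $\theta_0=1$; thus Lemma~\ref{thm:sum-residues} applies with $\sigma=1$, $n_1=1$, $n_2=0$ and yields $nG_M(\lg n)$ for an absolutely convergent Fourier series $G_M$. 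The hypothesis to check is (2) of that lemma: $|f^{(q)}(\alpha_j)|=O(|j|^A\log^B|j|)$ with $A<1$. This follows from $\zeta^{(q)}(\alpha_j)=O(\log|j|)$ (Lemma~\ref{lem:zeta-bound} with $\tau(1)=0$, then Lemma~\ref{thm:bound-derivative}) together with the Grabner--Hwang \cite{GrHw-2005} estimate $I_k(\alpha_j)=O(|j|^{3/4}\log^{2k}|j|)$ fed through Lemma~\ref{thm:bound-derivative}, giving $A=\tfrac34$.

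For the line $\Re(s)=0$ I would instead write $B_M(s)=(2^s-1)^{-M}\,f(s)$ with $f(s)=P_{M,1}(2^s)\zeta(s)+\sum_{k=1}^M\frac{P_{M,2,k}(2^s)(2^s-1)^k}{2^s-2}I_k(s)$, so that $L(s)=(2^s-1)^{-M}$ has an order-$M$ pole with identical Laurent coefficients at every $\beta_j$ (and at $\beta_0=0$), and $f$ is analytic near the line. Since $(2^0-1)^k=0$ for $k\ge1$, $f(0)=P_{M,1}(1)\,\zeta(0)=-\tfrac12 P_{M,1}(1)\neq 0$, so $f(s)/(s(s+1))$ has a simple pole at $\theta_0=0$; Lemma~\ref{thm:sum-residues} then applies with $\sigma=0$, $n_1=M$, $n_2=1$ and yields exactly $d_M\lg^M n+\sum_{d=0}^{M-1}(\lg^d n)G_{M,d}(\lg n)$ with $d_M$ a constant (one reads off $d_M=-P_{M,1}(1)/(2\cdot M!)$, matching the $-\tfrac14\lg n$ of Theorem~\ref{thm:TWn-general} when $M=1$) and the $G_{M,d}$ absolutely convergent. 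Again the thing to verify is hypothesis~(2), here via $\zeta^{(q)}(\beta_j)=O(|j|^{1/2}\log|j|)$ and the analogous Grabner--Hwang bound for $I_k^{(q)}(\beta_j)$, both with exponent $<1$. Adding the two line contributions gives \eqref{eq:TWMn-general}.

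The main obstacle is precisely this verification of hypothesis~(2) of Lemma~\ref{thm:sum-residues} on the lines $\Re(s)=1$ and $\Re(s)=0$: one needs polynomial-in-$|j|$ bounds with exponent strictly below $1$ for every derivative of $I_k$ at the points $\alpha_j$ and $\beta_j$. The proofs of Lemmas~\ref{thm:bound_top_bottom_BMs} and~\ref{thm:left=0-BMs} already import the Grabner--Hwang estimates on $\Re(s)=-\tfrac14$ and on the horizontal segments $(-\tfrac14,3)+iR_j$; the remaining work is to state (or cite) the corresponding bounds on these two vertical lines and then pass from $I_k$ to $I_k^{(q)}$ by the Cauchy-estimate Lemma~\ref{thm:bound-derivative}. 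Everything else---the vanishing of the three contour edges, the pole structure, and the bookkeeping that assembles the Fourier series---is supplied by results already established; and since Corollary~\ref{thm:BMs-asym-behaviour} only bounds the pole orders from above, some of the $G_{M,d}$ (or even $d_M$) may collapse to constants or to zero, which the statement already allows.
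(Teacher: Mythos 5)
Your proposal is correct and follows essentially the same route as the paper: close the contour to $\Gamma$, kill the top/bottom/left edges via Lemmas~\ref{thm:bound_top_bottom_BMs} and~\ref{thm:left=0-BMs}, use Corollary~\ref{thm:BMs-asym-behaviour} for the pole structure, and feed the representation \eqref{eq:BMS-form} into Lemma~\ref{thm:sum-residues} once with $\sigma=1$ and once with $\sigma=0$. The single loose end you flag --- the need for half-plane (not just on-the-line) growth bounds on $I_k$ so that Lemma~\ref{thm:bound-derivative} applies and hypothesis~(2) holds with $A<1$ --- is exactly what the paper supplies by citing the Grabner--Hwang estimate $|I_r(\sigma+it)|=O(|t|^{1/2+\epsilon}\log^{2r}|t|)$ valid for $\sigma\geq-\epsilon$; with that cited, your decompositions into $L\cdot f$ on each line (and your explicit $d_M=-P_{M,1}(1)/(2\,M!)$, which indeed gives $-\tfrac14$ at $M=1$) are a slightly more detailed rendering of the same argument.
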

\begin{proof}
Consider the contour $\Gamma$ in Figure \ref{fig:contour_gamma2}, taking $R\rightarrow\infty$. Lemma \ref{thm:bound_top_bottom_BMs} and Lemma \ref{thm:left=0-BMs} show that $\frac{1}{2\pi i}\int_{\Gamma_q}\frac{B_M(s)n^s}{s(s+1)}ds=0$ for $q=2,3,4$. Hence
$$TW_M(n) = \frac{1}{2\pi i} \int_{\Gamma_1} B_M(s) \frac{n^s}{s(s+1)}ds,$$
is the sum of residues at the poles of $\frac{B_M(s)n^s}{s(s+1)}$ inside $\Gamma$, by the Cauchy residue theorem.

By Lemma \ref{lem:zeta-bound}, we have the bound $|\zeta(\sigma+it)|=O(|t|^{1/2+\epsilon}\log|t|)$ when $\sigma\geq-\epsilon$ for sufficiently small $\epsilon$. Grabner and Hwang \cite{GrHw-2005} also proved that $|I_r(\sigma+it)|=O(|t|^{1/2+\epsilon}\log^{2r}|t|)$ when $\sigma\geq-\epsilon$ for sufficiently small $\epsilon$.
Hence by Lemma \ref{thm:bound-derivative}, 
$$|\zeta^{(q)}(\alpha_j)|,\ |\zeta^{(q)}(\beta_j)| = O(|j|^{1/2+\epsilon}\log|j|)$$
and
$$|I_r^{(q)}(\alpha_j)|,\ |I_r^{(q)}(\beta_j)| = O(|j|^{1/2+\epsilon}\log^{2r}|j|)$$
for any fixed positive integer $q$.

$B_M(s)$ can be expressed in the form of (\ref{eq:BMS-form}). Knowing that each function in the form of $P(2^s)(2^s-1)^{-N_1}(2^s-2)^{-N_2}$ will have a Laurent series with identical coefficients at $\theta_j=\sigma+\frac{2\pi j}{\ln 2}i$ for any fixed $\sigma$, togather with the results from the last paragraph and Corollary \ref{thm:BMs-asym-behaviour}, we use Lemma \ref{thm:sum-residues} when $\sigma=0,1$ to derive
$$\sum_{j\in\mathbb{Z}}\mbox{Res}\left(\frac{B_M(s) n^s}{s(s+1)},s=\alpha_j\right) = n G_M(\lg n)$$
and
$$\sum_{j\in\mathbb{Z}}\mbox{Res}\left(\frac{B_M(s) n^s}{s(s+1)},s=\beta_j\right) = d_M\lg^M n + \sum_{d=0}^{M-1} \left(\lg^d n\right)G_{M,d}(\lg n),$$
where $G_M(u)$ and $G_{M,d}(u)$'s are periodic functions with period one given by absolutely convergent Fourier series.
\end{proof}

\section{Conclusion}\label{sect:conclusion}

Mellin Transform techniques have previously  been extensively used to analyze various divide-and-conquer algorithms and digital sums.  A common theme in those analyses is the appearance of a (usually second order) periodic term, usually expressed in terms of a Fourier series.  This Fourier series is the sum of residues of a complex function which has singularities regularly spaced along a vertical line.

In this paper we pushed the technique further to derive exact analyses of the solution to
multidimensional divide-and-conquer recurrences and various, more complicated, weighted 
digital sums. Our closed form solutions had the properties that {\em all} terms were either polylogarithmic or $n$ times a polylogarithm, with all coefficients either being constant or 
a periodic function given by a Fourier series.

Our analysis of the multidimensional divide-and-conquer recurrence was a straightforward
extension of the use of Mellin transform techniques for  the analysis of simple
divide-and-conquer recurrences.   Our analyses of weighted digital sums,  though,  required 
developing a better understanding of various Dirichlet generating functions of differences
of digital functions.

\bibliographystyle{ieeetr}
\addcontentsline{toc}{chapter}{Bibliography}

\appendix

\section{Proof of Lemma \ref{lem:solution-f-integral}}\label{app:DC_solution}

Our main technique for solving the Multidimensional Divide-and-Conquer recurrence is a generalization of Lemma \ref{lem:solution-f-integral} for basic divide-and-conquer recurrences, originally proved in \cite{FlGo-1994} by Flajolet and Golin.
In order to make this paper self-contained, we provide the proof of that lemma here.

The divide-and-conquer recurrence is
$$f_n = f_{\lfloor n/2 \rfloor} + f_{\lceil n/2 \rceil} + e_n$$
with initial condition $f_1 = 0$ and given  ``conquer'' cost sequence $\{e_n\}$ 
where $e_0=e_1=0$.

Distinguishing between odd and even cases of the recurrence, we find that for $j\geq 1$,
\begin{equation}\label{eq:def_D_and_C}
f_{2j}  = 2 f_j + e_{2j},
\qquad
f_{2j+1}  = f_j + f_{j+1} + e_{2j+1}.
\end{equation}

Let $\nabla g_n = g_n - g_{n-1}$ be the backward difference operator. Then, for $j \geq 1$,
\begin{equation}
\label{eq:DC_back}
\nabla f_{2j}  =  \nabla f_j + \nabla e_{2j},
\quad
\nabla f_{2j+1}  =  \nabla f_{j+1} + \nabla e_{2j+1}.
\end{equation}

Let $\Delta g_n = g_{n+1}-g_n$,  be the forward difference operator, i.e., 
\begin{equation}
\label{eq:DC_forward_back}
\begin{cases}
  \Delta\nabla  f_n = \nabla f_{n+1} - \nabla f_n = f_{n+1} - 2 f_n + f_{n-1} \\
  \Delta\nabla  e_n = \nabla e_{n+1} - \nabla e_n= e_{n+1} - 2 e_n + e_{n-1}.
\end{cases}
\end{equation}

Then, from (\ref{eq:DC_back}),
$$
\Delta\nabla f_{2j} = \Delta\nabla f_j + \Delta\nabla e_{2j},
\quad
\Delta\nabla f_{2j+1} = \Delta\nabla e_{2j+1}
$$
for $j \geq 1$, with $\Delta\nabla f_1 = f_2 - 2f_1 = e_2 = \Delta\nabla e_1$.

Basic calculations now  show that, for any sequence $f_n,$
\begin{equation}
\label{eq:key-MSF}
f_n - nf_1 = \sum_{j<n} (n-j) \Delta\nabla f_j.
\end{equation}

Therefore, (\ref{eq:MPF-m=1}) gives that
\begin{equation}\label{eq:fn-step}
f_n - n f_1 = \frac{n}{2\pi i} \int_{c-i\infty}^{c+i\infty}  
D_f(s) \frac{n^s ds}{s(s+1)},
\end{equation}
where 
$D_f(s) := \sum_{j=1}^{\infty} \Delta\nabla f_j j^{-s}$
is the DGF of $\Delta\nabla f_j$.

Further calculation yields
\begin{eqnarray*}
&& D_f(s)\\
& = & \Delta\nabla f_1 + \sum_{j=1}^{\infty} \frac{\Delta\nabla f_{2j}}{(2j)^s} + \sum_{j=1}^{\infty} \frac{\Delta\nabla f_{2j+1}}{(2j+1)^s} \\
& = & \Delta\nabla e_1 + \left( \sum_{j=1}^{\infty} \frac{\Delta\nabla f_j}{(2j)^s} + \sum_{j=1}^{\infty} \frac{\Delta\nabla e_{2j}}{(2j)^s} \right) + \sum_{j=1}^{\infty} \frac{\Delta\nabla e_{2j+1}}{(2j+1)^s} \\
& = & \frac{1}{2^s} \sum_{j=1}^{\infty} \frac{\Delta\nabla f_j}{j^s} + \Delta\nabla e_1 + \sum_{j=1}^{\infty} \frac{\Delta\nabla e_{2j}}{(2j)^s} + \sum_{j=1}^{\infty} \frac{\Delta\nabla e_{2j+1}}{(2j+1)^s} \\
& = & \frac{D_f(s)}{2^s} + \sum_{j=1}^{\infty} \frac{\Delta\nabla e_j}{j^s}.
\end{eqnarray*}

Solving for $D_f(s)$ gives
\begin{equation}\label{eq:dgf-f-exact}
D_f(s) = \frac{1}{1-2^{-s}} \sum_{j=1}^{\infty} \frac{\Delta\nabla e_j}{j^s}.
\end{equation}

Combining (\ref{eq:fn-step}) and (\ref{eq:dgf-f-exact}) proves Lemma \ref{lem:solution-f-integral}.

\section{Proof of (\ref{eq:left=0-m=2})}\label{app:left-zero}
In this section, we mimic the proof of (\ref{eq:left=0-m=1}) in \cite{FGKPT-1994} to 
prove (\ref{eq:left=0-m=2}).

Setting $\lambda_j\equiv 1$ in
(\ref{eq:MPF-m=2}) gives
\begin{equation}\label{eq:left=0-m=2-step}
\frac{(n-1)(2n-1)}{12n} = \frac{1}{2\pi i}\int_{3-i\infty}^{3+i\infty}\zeta(s)\frac{n^s ds}{s(s+1)(s+2)}.
\end{equation}

Now consider the rectangular contour $\Gamma '$, which we defined in (\ref{eq:Gamma2}) (see Figure \ref{fig:contours_useful_identities}). By (\ref{eq:left=0-m=2-step}),
$$\lim_{R\rightarrow\infty}\frac{1}{2\pi i}\int_{\Gamma '_1}\zeta(s)\frac{n^s ds}{s(s+1)(s+2)} = \frac{(n-1)(2n-1)}{12n}.$$

By Lemma \ref{thm:bound_top_bottom_zeta_general},
$$\frac{1}{2\pi i}\int_{\Gamma '_2}\zeta(s)\frac{n^s ds}{s(s+1)(s+2)}\mbox{ and }\frac{1}{2\pi i}\int_{\Gamma '_4}\zeta(s)\frac{n^s ds}{s(s+1)(s+2)}$$
vanish as $R\rightarrow\infty$. The  poles and their residues inside the contour
can be easily computed. They are
\begin{enumerate}
\item A simple pole at $s=1$. The residue at this pole is $\frac{n}{6}$.
\item A simple pole at $s=0$. The residue at this pole is $\frac{1}{2}\zeta(0)=-\frac{1}{4}$.
\item A simple pole at $s=-1$. The residue at this pole is $-\frac{\zeta(-1)}{n} = \frac{1}{12n}$.
\end{enumerate}

By the Cauchy residue theorem, we have
\begin{eqnarray}
&& \frac{1}{2\pi i}\int_{-5/4-i\infty}^{-5/4+i\infty}\zeta(s)\frac{n^s ds}{s(s+1)(s+2)}\nonumber\\
&=& \lim_{R\rightarrow\infty}\frac{1}{2\pi i}\int_{\Gamma '_3}\zeta(s)\frac{n^s ds}{s(s+1)(s+2)}\nonumber\\
&=& \frac{n}{6} + \left(-\frac{1}{4}\right) + \frac{1}{12n} - \sum_{q=1,2,4}\lim_{R\rightarrow\infty}\frac{1}{2\pi i}\int_{\Gamma '_q}\zeta(s)\frac{n^s ds}{s(s+1)(s+2)}\nonumber\\
&=& \frac{n}{6} + \left(-\frac{1}{4}\right) + \frac{1}{12n} -\frac{(n-1)(2n-1)}{12n}\nonumber\\   
&=& 0.
\end{eqnarray}

\begin{figure}[t]
\centering%
\scalebox{0.35}{\includegraphics{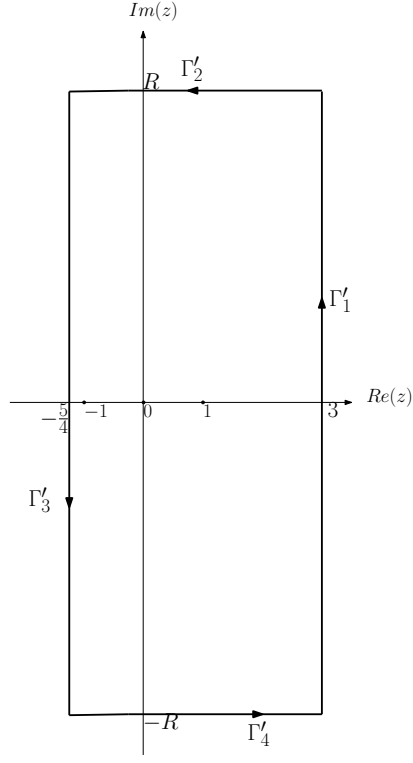}}\\
\caption{The contour $\Gamma '$ used to derive (\ref{eq:left=0-m=2}) with the poles of $\frac{\zeta(s)}{s(s+1)(s+2)}$ inside $\Gamma '$ at $-1,0,1$ noted.}
\label{fig:contours_useful_identities}
\end{figure}

\section{Proof of Lemma \ref{lem:DGF-closed}}\label{app:DGF}

Recall from Lemma \ref{lem:property-v-and-v_2} that $v(2n)=v(n)$, then
\begin{eqnarray*}
V_M(s) &=& \sum_{\mbox{\footnotesize{odd }}j}\frac{v(j)^M}{j^s} + \sum_{i=1}^\infty\frac{v(2i)^M}{(2i)^s}\\
&=& \sum_{\mbox{\footnotesize{odd }}j}\frac{v(j)^M}{j^s} + \frac{1}{2^s}\sum_{i=1}^\infty\frac{v(i)^M}{i^s}\\
&=& \sum_{\mbox{\footnotesize{odd }}j}\frac{v(j)^M}{j^s} + \frac{1}{2^s}V_M(s).
\end{eqnarray*}
This yields
$$\sum_{\mbox{\footnotesize{odd }}j}\frac{v(j)^M}{j^s} = \left(1-\frac{1}{2^s}\right)V_M(s).$$

Recall from Lemma \ref{lem:property-v-and-v_2} that $v_2(2n)=v_2(n)+1$; also, for all odd $n$, $v_2(n)=0$. We have
\begin{eqnarray*}
\sum_{j=1}^\infty\frac{v_2(j)}{j^s} &=& \sum_{i=1}^\infty\frac{v_2(2i)}{(2i)^s}\\
&=& \frac{1}{2^s}\sum_{i=1}^\infty\frac{v_2(i)+1}{i^s}\\
&=& \frac{1}{2^s}\zeta(s)+\frac{1}{2^s}\sum_{i=1}^\infty\frac{v_2(i)}{i^s}
\end{eqnarray*}
and hence
$$\sum_{j=1}^\infty\frac{v_2(j)}{j^s} = \frac{1}{2^s-1}\zeta(s).$$

Finally recalling from Lemma \ref{lem:property-v-and-v_2} that
$\nabla v(n) = v(n) -v(n-1) = 1-v_2(n)$ yields
$$\sum_{j=1}^{\infty} \frac{\nabla v(j)}{j^s} = \zeta(s) - \sum_{j=1}^{\infty} \frac{v_2(j)}{j^s} = \frac{2^s-2}{2^s-1} \zeta(s).$$

\end{document}